\documentclass[12pt, a4paper]{article}
\usepackage{amsfonts}
\usepackage{upgreek}
\usepackage{framed}
\usepackage{latexsym, amsmath, amsthm, amssymb,
   color, epsfig,array,accents}
\usepackage{graphicx}
\usepackage[utf8]{inputenc}
\usepackage{csquotes}
\usepackage{siunitx}
\usepackage{extarrows}
\usepackage{datetime}
\usepackage{parskip}
\usepackage{ragged2e}
\usepackage[normalem]{ulem}
\usepackage[colorlinks=true, allcolors=blue]{hyperref}
\usepackage{cleveref}
\crefname{appsec}{Appendix}{Appendices}
\Crefname{appsec}{Appendix}{Appendices}
\usepackage{multirow}
\usepackage{caption}
\usepackage{float}
\newtheorem{theorem}{Theorem}
\newtheorem*{theorem*}{Theorem}
\newtheorem{definition}{Definition}
\newtheorem{lemma}{Lemma}
\newtheorem*{lemma*}{Lemma}
\newtheorem*{proposition*}{Proposition}
\newtheorem{corollary}{Corollary}
\newtheorem*{corollary*}{Corollary}
\newtheorem{proposition}{Proposition}

\newtheorem*{remark*}{Remark}
\usepackage{float}
\usepackage{enumerate}
\usepackage{mathtools}
\usepackage{natbib}
\usepackage[toc,page]{appendix}
\allowdisplaybreaks

\let \cite=\citep

\newdateformat{monthyear}{\monthname[\THEMONTH] \THEYEAR}

\graphicspath{{/Users/georgios.angelis/Desktop/latex/}}

\title{The Matching Function:\\ A Unified Look into the Black Box} 
\author{
Georgios Angelis
\and 
Yann Bramoullé
}
\date{\monthyear\today\thanks{\scriptsize Angelis, georgios.angelis@glasgow.ac.uk: University of Glasgow; Bramoullé, yann.bramoulle@univ-amu.fr: Aix-Marseille University, CNRS, Aix-Marseille School of Economics. We are grateful to Jim Albrecht for his fruitful discussion of the paper at the 2024 Australasian Search and Matching Workshop, and to Matt Elliott, Manolis Galenianos, Sanjeev Goyal, Yannis Ioannides, Leo Kaas, Philip Kircher, Francis Kramarz, Mihai Manea, Sephorah Mangin, Pascal Michaillat, Roland Rathelot, Alireza Tahbaz-Salehi, Andreas Tryphonides, colleagues and seminar participants for very constructive comments and suggestions. Part of this work was conducted while Angelis was based in Aix-Marseille School of Economics and Bocconi University, IGIER; he thanks them both for the excellent work conditions. This work was supported by the French National Research Agency through Grant ANR-17-EURE-0020 and the Excellence Initiative of Aix-Marseille University, A*MIDEX.}}
\begin{document}

\maketitle
\thispagestyle{empty}
\begin{abstract}
\noindent \footnotesize In this paper, we use tools from network theory to trace the properties of the matching function to the structure of granular connections between applicants and vacancies. We unify seemingly disparate parts of the literature by recovering multiple functional forms as special cases including the CES. We derive a testable condition under which matching in any network from the broad class we analyze can be thought ``as if'' it comes from a CES matching function, up to a first-order approximation. We provide a theory of match efficacy in which inequality in search intensities is the key determinant of how well the matching process works. A robust finding of our analysis is that dispersion of search intensities on either side of the market is bad for the matching process. We also show that a rise in the market's mean search intensity can reduce match efficacy when it is associated with a higher Gini coefficient of search intensities.
\vspace{3mm}\\\textbf{Keywords:} matching function foundations; networks; match efficacy; heterogeneous search.
\end{abstract}

\newpage\pagenumbering{arabic}
\section{Introduction}
The matching function is the linchpin of most models that depart from the Walrasian equilibrium to capture search frictions in the market \citep{PP2001}. The number of contexts in which it has been used highlights its success: most notably the labor market, where unemployed and vacancies coexist in equilibrium, but also credit markets, goods markets, assets trading over the counter, the new monetarist literature aiming to explain the emergence of money, international trade.

The matching function, however, has remained a ``black box'' for nearly forty years. It is a reduced-form object which economists use for its tractability, yet no systematic analysis has been done of the frictions implicitly assumed to underlie it, such as information limitations and coordination failures. More specifically, little is known about how the structure of these underlying frictions affects the matching function's properties.

This paper proposes a network-based model of the matching process that links the micro to the macro. We consider a broad class of random networks connecting applicants and vacancies, under a benchmark application-and-offer protocol, to unify seemingly disparate parts of the literature and to provide the first systematic study of how the structure of the underlying frictions affects the emergent matching function's functional form and efficacy. 

%

The primitive in our setup is the random network characterized by the collection of applicant-vacancy link-formation probabilities. We take linking to a vacancy to be equivalent to applying to it, and we assume that a job is offered to the most qualified applicant, with qualifications drawn from a common distribution. We remain agnostic about how the network is formed. In this sense, our network is an intermediate, granular object---directly linkable to microdata---whose significance this paper highlights in opening the ``black box.'' 

An applicant may be more likely to apply for a job because it has been advertised to them in a targeted way (e.g. a friend informed them of the position), because they have the relevant skills for it, or because the job is in an area where they are actively searching. The network thus embeds, at the most granular level, precisely the search frictions---informational, geographic, or skill-mismatch---that the literature has implicitly assumed to underlie the matching function.\footnote{Distinct parts of the literature have studied these aspects in isolation: For example \citet{Gale2014} has a separate channel for referrals, with a standard matching function capturing the frictions from all other channels; \citet{MR2018} focus on the effects of geographic restrictions.}

We analyze the implications of heterogeneous search intensities both on the applicant side and on the vacancy side, as well as when vacancies are aggregated into ``locations,'' thereby reducing coordination frictions in the market. Search intensity refers to the mean number of links an applicant or vacancy has to the other side. For our main results, we let search intensities be drawn from a common distribution, so they only vary ex post. Changes in the structure of search frictions correspond to changes in the properties of this underlying distribution. These may involve changes in the type of the distribution, such as moving from the degenerate distribution to a Pareto, or changes within a given type, such as first- or second-order stochastic dominance shifts.  

Our first contribution is to recover multiple meeting technologies as special cases. We unify, in a single framework, the matching function employed by \citet{KW1993}; the urn-ball matching function \citep[e.g.,][]{JKK2000, BSW2001}; the matching function of \citet{Lagos2000}; the matching process of \citet{Shimer2007} and \citet{Mortensen2009}; the matching function of \citet{Kaas2010} and the process underlying \citet{BSWee2024}; the class of ``invariant'' meeting technologies \citep[e.g.,][]{CGW2025}, as well as the constant elasticity of substitution (CES) functional form \citep[e.g.,][]{dHRW2000}. In all cases covered by our analysis, the matching probability\footnote{On the applicant side we compute the job-finding probability; on the vacancy side we compute the vacancy-filling probability. ``Matching probability'' refers to either of them without specifying the side.} is given by a generally applicable, closed-form expression.

Regarding the CES functional form, perhaps the most tractable specification for discrete-time macro models, we also provide a condition under which matching in any network from the broad class we analyze can be thought ``as if'' it comes from such a matching function, up to a first-order approximation. This result echoes \citet{Steve2007}, who, using a specific queuing-system primitive, derives the CES form when the system's parameters scale in a particular way. Our result delivers the CES as a first-order approximation for a whole family of networks, rather than relying on a specific network, and our network of connections is observable in application data, making the condition directly testable.

Our second contribution is to provide a theory of match efficacy, the residual term of the matching function that governs how well the matching process works. The matching probability is a key driver of unemployment over the business cycle \citep[e.g.,][]{Pissa1986, Shimer2012}. Changes in match efficacy contribute to changes in the matching probability over and above market tightness, yet its theoretical foundations remain limited. Our analysis uncovers the effects of inequality in search intensities on match efficacy, providing sharper theoretical underpinnings to existing empirical findings and making novel, testable predictions. 

We show that dispersion, in the sense of second-order stochastic dominance shifts in the distribution of search intensities, is unambiguously bad for matching. This is a very robust finding of our analysis: it holds for search heterogeneity on both the applicant and the vacancy sides, as well as when vacancies are aggregated into ``locations,'' in which case the shifts concern the distribution of market tightness across locations. 

The result on the dispersion of applicants' search intensities is novel, as the standard matching function does not allow for effects of search intensity beyond the mean \citep[e.g.,][ch. 5]{Pissa2000}, and more micro-based approaches \citep[e.g.,][]{C-AZ2005, AGV2006} focus on setups that preclude dispersion. This result complements findings on the compositional effects of the applicant pool on match efficacy \citep{BF2015, HS-W2018}, which relate to the mean.

The result on the dispersion in market tightness across locations generalizes the ``dispersion effect'' of \citet{BF2015}, which is obtained taking a second-order Taylor approximation. The result on the dispersion in vacancy-advertising intensity relates to the findings on the effects of recruiting intensity on match efficacy \citep[e.g.,][]{DFH2013, GSV2018, C-TGK2023}. Contrary to previous contributions, our results do not require specifying a particular matching function or an exogenous efficacy term, at any level of aggregation.

We also show that increases in the market's mean search intensity, corresponding to first-order stochastic dominance shifts in the distribution of search intensities, generally have an ambiguous effect on matching. Among the cases we analyze, the mean job-finding probability exhibits an inverted-U shape as a function of mean search intensity when associated with a rise in the Gini coefficient of the distribution; in all other cases it is increasing. 

The inverted-U shape for the job-finding probability has appeared in the social networks literature \citep[][]{C-A2004} and is featured in the setup of \citet{AGV2006}. Its existence beyond these fully symmetric cases, and its connection to the Gini coefficient, are novel. Such an ambiguous effect is entirely absent from macro matching functions, which are monotonically increasing in mean search intensity \citep[e.g.,][ch. 5]{EMR2015, MPS2018, Pissa2000}.

As our interest in this paper is in the aggregate matching function, we mostly focus on the large market, where applicants and vacancies become negligible in size, and the sizes of the two sides of the market go to infinity. As we illustrate, however, our approach is applicable to both small markets \citep[e.g.,][]{BSW2001} and large ones \citep[e.g.,][]{Shimer2007}. It also applies to both sparse networks, where mean search intensity is finite (presumably the case for most applications), and dense networks---the large-market generalization of the urn-ball setup---where mean search intensity becomes infinite \citep[e.g.,][]{Sephorah2025}.

The rest of the paper is structured as follows. \Cref{sec:setup} introduces the environment and necessary terminology. \Cref{sec:hetero-apps,sec:implic} are the main sections of the paper: the former derives our generally applicable, closed-form expression for the job-finding probability under applicant-side search heterogeneity; the latter derives most of the special cases and analyzes the effects of inequality in search intensities. \Cref{sec:locations} examines the case in which vacancies are aggregated to ``locations,'' and \Cref{sec:hetero-jobs} analyzes search heterogeneity on the vacancy side. \Cref{sec:conclusion} offers a discussion and concludes.

\section{The setup}\label{sec:setup} 
We start by introducing the economic environment and some necessary terminology. The primitive of the environment is a (bipartite) \textit{random network}, characterized by two sets $\mathcal{U}, \mathcal{V}$ and a matrix $\big(p_{ij}\big)$ denoting the probability of any link $ij$ being formed, where $i \in \mathcal{U}, \ j\in \mathcal{V}$. As a convention, we will index the elements of $\mathcal{U}$ by $i=1,2, ..., U$ and the elements of $\mathcal{V}$ by $j=1,2, ...,  V$, where $U, V \in \mathbb{N}$ are the sizes of the two sides of the market.

We take the elements of $\mathcal{U}$ to correspond to applicants, i.e. workers searching for a job, and the elements of $\mathcal{V}$ to correspond to job postings by firms. With slight abuse of terminology we will refer to $\mathcal{U}$ as containing the \textit{unemployed}, and $\mathcal{V}$ as containing the \textit{vacancies}.\footnote{Strictly speaking, an ``applicant'' may not be classified as ``unemployed'' in the data, for example when they search on the job. Similarly the same vacancy may have multiple job postings corresponding to it. Since the matching function in its most common form is specified between the ``unemployed'' and ``vacancies'' \citep[e.g.,][ch. 1]{Pissa2000}, we adopt this terminology.} 

Links $ij$ are independent Bernoulli random variables, where
\[
Y_{ij} = 
\begin{cases}
1, &\text{w.p.} \ p_{ij}\\
0, &\text{w.p.} \ 1-p_{ij}
\end{cases}
\]

We will also need the two following definitions of network objects.

\begin{definition}[Degrees] Applicant $i$'s degree, defined as $\sum_j Y_{ij}$, is the number of vacancies the applicant connects to. Similarly vacancy $j$'s degree, defined by $\sum_i Y_{ij}$, corresponds to the number of applicants the vacancy connects to. 
\end{definition}

\begin{definition}[Indirect degrees] Conditional on linking to a vacancy $j$, an applicant $i$'s indirect degree at that vacancy, defined by $\sum_{k \neq i} Y_{kj}\ | \ Y_{ij}=1$, is the number of other (competing) applicants linking to the same vacancy. 
\end{definition}

Naturally, as sums of random variables, degrees and indirect degrees are also random variables and as we will illustrate in our analysis they follow known distributions. We will generally denote degrees by $d$ and indirect degrees by $\tilde{d}$ appropriately indexed. 

\textbf{Example of a network}. Suppose there are $2$ applicants and $2$ job postings in the market, and all links are formed with the same probability $p$.\footnote{This is a 2-by-2 example of the (bipartite) Erdős-Rényi network, a benchmark in random networks.} Figure 1 depicts one realization of that network, which occurs with probability $p^3(1-p)$.
\begin{figure}[H]
  \centering
    \includegraphics[width=0.5\textwidth]{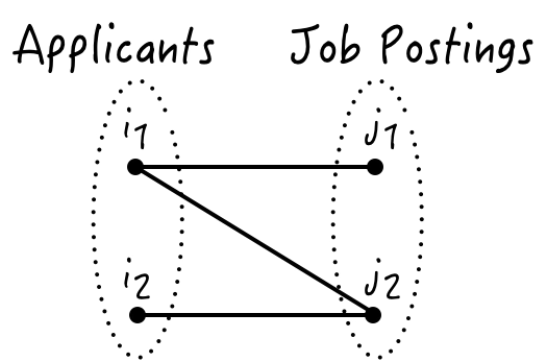}
    \caption{A 2-by-2 example.}
\end{figure}
The degrees of both applicants and both jobs all follow the same binomial distribution $Bin(2,p)$, of two trials each with probability of success $p$; the realized degree of applicant $i_1$ is $2$ and that of job $j_1$ is $1$; those of $i_2, j_2$ are $1$ and $2$ respectively. The indirect degree of applicant $i_1$ at both vacancies is a Bernoulli random variable with success probability $p$; the realized indirect degree of applicant $i_1$ at vacancy $j_1$ is $0$ and at vacancy $j_2$ it is $1$.

As in what follows we will be making connections to the search and matching literature, let us also define a central quantity of that literature, \textit{market tightness}, $\theta = \frac{V}{U}$.

\textbf{Applications, offers and interpretation}. We assume $p_{ij}$ captures the probability applicant $i$ applies for job $j$. As this is a primitive in our setup we remain agnostic as to how $p_{ij}$'s are formed: applicant $i$ may be more likely to apply for job $j$ because a friend informed them of the position, or because they have the skills to do that type of job, or because job $j$ is located in a geographical area where applicant $i$ is actively looking for jobs, or because job $j$ has been advertised in a targeted way to applicants of applicant $i$'s profile. 

The network, characterized by the collection of the applicant-vacancy link-formation probabilities $\big(p_{ij}\big)$, can thus be taken to embed at the most granular level precisely the search frictions---be they informational, geographic or skill-mismatch frictions---which the search and matching literature has been assuming to implicitly underlie the matching function. The network's \textit{structure}, captured in the properties of the $\big(p_{ij}\big)$ matrix, will allow us to induce a notion of structure to the underlying search frictions, and analyze its implications.

Throughout our analysis, we will assume a job will be offered uniformly at random to one of the applicants that have applied to it. As the following remark shows, this can be thought of as the outcome of a process where the job is offered to the best qualified applicant, assuming qualities are drawn independently at each vacancy from an underlying distribution.

\begin{remark*} Suppose a vacancy $j$ receives $d_j$ applications, and each is associated with a measure of quality, drawn independently from some underlying distribution. Then, the probability that any given applicant is the most qualified for the position is $\frac{1}{d_j}$.
\end{remark*}
\begin{proof} See \Cref{sec:app-B}.
\end{proof}
We will assume a vacancy is offered to the most qualified candidate and to them only. That is we will not consider cases where firms go down their applicant list to offer the vacancy to less qualified candidates if their top candidate takes another job, as we will not consider the existence of some absolute quality level that acts as a cutoff for application screening. We return to these modeling choices within a broader discussion in \Cref{sec:conclusion}. 


\section{Matching with heterogeneous search intensity}\label{sec:hetero-apps}
We start by analyzing the effects of heterogeneous search effort across applicants. To this end, we will assume that link-formation probabilities are applicant-specific, shutting down heterogeneity on the vacancy side. Denote
\[
p_{ij} = p_i
\]
so the matrix $\big(p_{ij}\big)$ collapses to a vector $\vec{p}=(p_1, p_2,..,p_U)$. Note that $p_i$ can be interpreted as a continuous measure of applicant $i$'s \textit{search intensity}: the higher the $p_i$, the higher the probability of applicant $i$ forming links, and the higher the expected number of links for $i$.

This specification treats all vacancies as ex ante homogeneous, while it allows for arbitrary search heterogeneity on the applicant side. This will be our benchmark specification, until \Cref{sec:hetero-jobs}, where we will analyze linking heterogeneity on the vacancy side. 

A series of intermediate results will lead to the main theorem of this section which gives a very general and compact expression for the matching function.

\begin{lemma}\label{link-proba} When linking probabilities are given by $p_{ij}=p_i$, applicant degrees $d_i \equiv \sum_j Y_{ij}$ are independent and each follows a binomial distribution, $d_i \sim Bin(V, p_i)$. Vacancy degrees $d_j \equiv \sum_i Y_{ij}$ are i.i.d. following a Poisson binomial distribution, $d_j \sim PB(\vec{p})$, where $\vec{p}=(p_1, p_2,...,p_U)$. 
\end{lemma}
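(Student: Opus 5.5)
The plan is to read everything off the independence of the link indicators $Y_{ij}$ and the fact that $p_{ij}=p_i$ depends only on the row index $i$. View the $Y_{ij}$ as a $U\times V$ array of independent Bernoulli variables. Applicant degrees are row sums and vacancy degrees are column sums. The two halves of the statement then follow from two different partitions of the same independent family.

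\textbf{Applicant side.} Fix $i$. Then $d_i=\sum_{j=1}^V Y_{ij}$ is a sum of $V$ independent Bernoulli variables with common success probability $p_{ij}=p_i$. By definition of the binomial distribution (or a one-line check of the probability generating function, $\mathbb{E}[s^{d_i}]=\prod_j(1-p_i+p_is)=(1-p_i+p_is)^V$), this gives $d_i\sim Bin(V,p_i)$. For independence, note that the rows $\{Y_{ij}\}_{j}$ for different $i$ are disjoint subfamilies of an independent collection. Measurable functions of disjoint blocks of independent variables are independent, so $d_1,\dots,d_U$ are independent. They are not identically distributed, since $p_i$ varies.

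\textbf{Vacancy side.} Fix $j$. Then $d_j=\sum_{i=1}^U Y_{ij}$ is a sum of $U$ independent Bernoulli variables with success probabilities $p_1,\dots,p_U$. This is exactly the definition of the Poisson binomial distribution $PB(\vec p)$, with generating function $\prod_i(1-p_i+p_is)$. That generating function does not depend on $j$, so all $d_j$ share the same law. The columns $\{Y_{ij}\}_i$ for different $j$ are again disjoint blocks, so the block argument gives mutual independence of $d_1,\dots,d_V$. Together these yield the i.i.d.\ claim.

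There is no real obstacle here. The only point that needs care is making explicit that independence across vacancies, and across applicants, comes from grouping disjoint subsets of the independent $Y_{ij}$. This does not imply any independence between an applicant degree and a vacancy degree: $d_i$ and $d_j$ share the variable $Y_{ij}$, and the statement makes no claim about that pair. I would add a remark noting this, since the later analysis of indirect degrees conditions on $Y_{ij}=1$.
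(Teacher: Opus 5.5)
Your proof is correct and is essentially the paper's argument. Applicant and vacancy degrees are row and column sums of the independent Bernoulli array. They are binomial because success probabilities are common along a row, and Poisson binomial by definition down a column, with independence coming from the fact that distinct rows (resp.\ columns) share no $Y_{ij}$. Your disjoint-blocks argument in fact delivers mutual independence, which is what the i.i.d.\ claim needs, where the paper only phrases it pairwise ("any two").
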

\begin{proof}
See \Cref{sec:app-C}.
\end{proof}

As it is one of the less common distributions, let us note that the ``Poisson binomial'' distribution is a generalization of the binomial distribution, where a random variable is a sum of independent Bernoulli trials with (potentially) different probabilities of success; we give a self-contained presentation of it, with its relevant properties in \Cref{sec:app-A}.

Given that applicant degrees are binomially distributed, mean applicant degrees denoted by $\bar{d}_i$, are given by $\bar{d}_i = p_i V$. Mean degrees are the same across vacancies; we will denote this common value by $\bar{d}_V$ and it is the mean of the Poisson binomial, $\bar{d}_V = \sum_i p_i$.

\begin{corollary} When linking probabilities are given by $p_{ij}=p_i$, applicant $i$'s indirect degrees at any vacancy are i.i.d. following the Poisson binomial distribution $PB(\vec{p}_{-i})$, where $\vec{p}_{-i} = (p_1, p_2,...,p_{i-1},p_{i+1},...,p_U)$. We will denote $\tilde{d}_{i} \equiv \sum_{k \neq i} Y_{kj}$, $\forall j$.
\end{corollary}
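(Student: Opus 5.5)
The plan is to work directly from the independence of the link variables $Y_{kj}$ and then apply the definition of the Poisson binomial distribution, exactly as in \Cref{link-proba}. Fix an applicant $i$ and a vacancy $j$. By definition, the indirect degree of $i$ at $j$ is $\sum_{k\neq i} Y_{kj}$ conditional on $Y_{ij}=1$. All links are independent Bernoulli variables, so the collection $\{Y_{kj}\}_{k\neq i}$ is independent of $Y_{ij}$. Conditioning on $\{Y_{ij}=1\}$, which has positive probability whenever $p_i>0$, therefore leaves the joint law of $\{Y_{kj}\}_{k\neq i}$ unchanged. Under that conditioning these remain independent Bernoulli variables with success probabilities $p_{kj}=p_k$, $k\neq i$. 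A sum of independent Bernoulli trials with probabilities $(p_k)_{k\neq i}$ is by definition $PB(\vec{p}_{-i})$, so $\tilde d_i\sim PB(\vec p_{-i})$ at every vacancy $j$. Since this law does not depend on $j$, the notation $\tilde d_i$ without a $j$ index is justified.

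For the i.i.d.\ claim across vacancies, I will take distinct vacancies $j\neq j'$. The indirect degree at $j$ is a function of column $j$ of the link matrix, $\{Y_{kj}\}_{k\neq i}$, and the indirect degree at $j'$ is a function of the disjoint column $\{Y_{kj'}\}_{k\neq i}$. Both are conditioned on events $\{Y_{ij}=1\}$ and $\{Y_{ij'}=1\}$, which concern further disjoint entries. Because all entries of the matrix are mutually independent, conditioning on the joint event $\bigcap_{j\in J}\{Y_{ij}=1\}$ for any set $J$ of vacancies applicant $i$ links to leaves the columns $\{Y_{kj}\}_{k\neq i}$, $j\in J$, independent of each other and of the conditioning event. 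Hence the indirect degrees across vacancies are independent, and by the previous paragraph they are identically distributed.

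The only delicate step is making precise what ``i.i.d.\ across vacancies'' means, given that each indirect degree is defined conditionally on a different event. I would handle this by stating the result jointly: conditional on $i$'s realized neighbourhood, the indirect degrees at the linked vacancies are i.i.d.\ $PB(\vec p_{-i})$. This follows from the factorization $P\big(\bigcap_{j\in J}\{\sum_{k\ne i}Y_{kj}=m_j\}\mid \bigcap_{j\in J}\{Y_{ij}=1\}\big)=\prod_{j\in J}P\big(\sum_{k\ne i}Y_{kj}=m_j\big)$, which is just independence of the matrix entries. The remainder is a direct consequence of \Cref{link-proba} applied to the reduced vector $\vec p_{-i}$.
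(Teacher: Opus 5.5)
Your proof is correct and follows the paper's route. The paper simply invokes the argument of Lemma 1 (a sum of independent Bernoulli entries, with distinct vacancies using disjoint columns) to get both the $PB(\vec{p}_{-i})$ law and independence across vacancies; you make explicit what it leaves implicit, namely the conditioning on $\{Y_{ij}=1\}$ and the joint factorization given $i$'s neighbourhood, which is exactly what the proof of Lemma 2 later relies on.
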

\begin{proof}
See \Cref{sec:app-C}.
\end{proof}
We can now compute the job-finding probability of any applicant $i$.

\begin{lemma}\label{f_i-small} When linking probabilities are given by $p_{ij}=p_i$, applicant $i$'s job-finding probability is given by
\[
f_i = 1 - \big(1-p_i\phi\big(\vec{p}_{-i}\big)\big)^V
\]
where $\phi\big(\vec{p}_{-i}\big) \equiv \mathbb{E}\left[\frac{1}{1+\tilde{d}_i}\right]$, and $\tilde{d}_i \sim PB(\vec{p}_{-i})$. 
\end{lemma}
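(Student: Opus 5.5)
The plan is to compute the complementary probability that applicant $i$ receives no offer, and to factor it across vacancies. Applicant $i$ is hired at vacancy $j$ only if two things happen: $i$ applies to $j$ (that is, $Y_{ij}=1$), and $j$'s offer, made uniformly at random among its applicants as justified by the Remark, lands on $i$. Write $O_{ij}$ for the event that vacancy $j$ makes its offer to $i$. Then $i$ finds a job exactly when $\bigcup_j O_{ij}$ occurs, so
\[
f_i = 1-\mathbb{P}\Big(\bigcap_{j} O_{ij}^c\Big).
\]
Because there is no recall and the top candidate is the only one ever offered, $O_{ij}$ depends only on the column $(Y_{kj})_{k}$ and on the qualification draws at vacancy $j$.

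The first step is to establish independence across vacancies. Under $p_{ij}=p_i$ the link variables are independent across all pairs, so different columns $(Y_{kj})_k$ and $(Y_{kj'})_k$ are independent. The Remark treats qualification draws as independent at each vacancy, so the offer randomisation is independent across $j$ as well. It follows that the events $O_{ij}$, $j=1,\dots,V$, are mutually independent. Under the homogeneous-vacancy specification they are also identically distributed. Hence
\[
f_i = 1-\big(1-\mathbb{P}(O_{ij})\big)^V .
\]

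The second step is to compute $\mathbb{P}(O_{ij})$ by conditioning on $Y_{ij}$. If $Y_{ij}=0$ the probability is zero. If $Y_{ij}=1$, vacancy $j$ has $1+\tilde d_i$ applicants, and $i$ receives the offer with conditional probability $1/(1+\tilde d_i)$. The key observation is that $\tilde d_i=\sum_{k\neq i}Y_{kj}$ is independent of $Y_{ij}$, so conditioning on $Y_{ij}=1$ does not change its law. By the Corollary, that law is $PB(\vec p_{-i})$. Therefore
\[
\mathbb{P}(O_{ij}) = p_i\,\mathbb{E}\!\left[\frac{1}{1+\tilde d_i}\right] = p_i\,\phi(\vec p_{-i}).
\]
Substituting this into the expression from the first step gives the formula in the statement.

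The main obstacle is to justify rigorously that the events $O_{ij}$ are independent across $j$ even though they share applicant $i$'s links. This holds because $O_{ij}$ is measurable with respect to column $j$'s variables (including $Y_{ij}$) and vacancy $j$'s own randomisation, and these sets of variables are disjoint and independent across $j$. The argument also relies on the modelling assumption that offers are made simultaneously with no recall, so that one vacancy's outcome never affects another's. I would state this assumption explicitly in the proof.
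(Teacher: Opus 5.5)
Your proof is correct, and it organizes the computation differently from the paper.

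The paper proceeds in three steps:
\begin{itemize}
\item It conditions on applicant $i$'s neighborhood $N_i$ and writes the no-offer probability as $\mathbb{E}_{N_i}\bigl[\prod_{j\in N_i}\bigl(1-\frac{1}{1+\tilde d_{ij}}\bigr)\bigr]$.
\item It uses independence and identical distribution of the indirect degrees to reduce this to $\sum_k P(d_i=k)\bigl(1-\phi(\vec p_{-i})\bigr)^k$.
\item It evaluates the binomial probability generating function of $d_i\sim Bin(V,p_i)$ at $1-\phi(\vec p_{-i})$.
\end{itemize}

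You instead factorize vacancy by vacancy. Since $O_{ij}$ is measurable with respect to column $j$ and vacancy $j$'s own tie-breaking, the events $O_{ij}$ are independent with common probability $p_i\phi(\vec p_{-i})$. The product $(1-p_i\phi(\vec p_{-i}))^V$ then appears at once, without any generating function.

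Your route is shorter. It also states explicitly a fact that the paper's interchange of expectation and product uses only tacitly: $\tilde d_{ij}$ is independent of $i$'s own links, so conditioning on $Y_{ij}=1$ (or on $N_i$) leaves its law unchanged.

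The paper's route in turn isolates the representation $f_i = 1-\mathbb{E}\bigl[(1-\phi)^{d_i}\bigr]$, that is, one minus the generating function of $i$'s degree evaluated at $1-\phi$. This separates the applicant's own degree distribution from the competition factor $\phi$. That intermediate step does not itself require $i$'s links to be independent across vacancies; only the final binomial evaluation does. It also makes the later formulas transparent: a Poisson degree gives $1-e^{-\bar d_i\phi}$, and a mixed Poisson degree gives $1-M_{\bar d}(-\phi)$, the expression in Theorem~\ref{thm-1}.
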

\begin{proof}
See \Cref{sec:app-C}.
\end{proof}

For intuition, let us note that the expression $f_i = 1 - \big(1-p_i\phi(\vec{p}_{-i})\big)^V$ is interpretable: the probability to find a job is $1$ minus the probability of receiving no offers. Inside the big parenthesis is the probability to not receive an offer from one (any one) of the $V$ vacancies; these are i.i.d. events, hence their joint probability is raised to the power of $V$. The probability of each of these events is $1$ minus the intersection of two independent events: to link to that vacancy, which happens with probability $p_i$, and to receive an offer from it, which happens with probability $\phi(\vec{p}_{-i})$.

\Cref{f_i-small} has as special case the functional form of the urn-ball setup, the first and perhaps most classic microfoundation of the matching function \citep[e.g.,][]{But1977, Hall1979, JKK2000, BSW2001, PP2001}.\footnote{This is the setup of \citet{BSW2001}, when (their) buyers are mapped to (our) vacancies; note the equivalence of ``the mixed strategy each buyer uses must be to visit \textit{all sellers} with equal probability'' \citep[][p. 1067 italics our own]{BSW2001}, with $p_i=1, \forall i$ and a vacancy being offered uniformly at random to an applicant. If one makes the opposite mapping, the vacancy-filling probability has the same functional form with $U$ and $V$ swapped \citep[e.g.,][]{PP2001}; this case amounts to each applicant sending a single application, and we are not considering it in this paper.}

\begin{corollary} In the special case when linking probabilities are given by $p_i = 1, \forall i$, the job-finding probability is given by
\[
f = 1 - \left(1-\frac{1}{U}\right)^V
\]
\end{corollary}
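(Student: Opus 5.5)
This follows by specializing \Cref{f_i-small} to $p_i = 1$ for all $i$. The only computation is the value of $\phi(\vec{p}_{-i})$ at this parameter vector.

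First, with $p_k = 1$ for every $k \neq i$, each Bernoulli variable $Y_{kj}$ equals $1$ almost surely. The Poisson binomial $PB(\vec{p}_{-i})$ is therefore degenerate, so $\tilde{d}_i = U-1$ almost surely. Consequently
\[
\phi(\vec{p}_{-i}) = \mathbb{E}\left[\frac{1}{1+\tilde{d}_i}\right] = \frac{1}{1+(U-1)} = \frac{1}{U}.
\]
Equivalently, in the network every applicant links to every vacancy, so each vacancy has exactly $U$ applicants. Under the uniform offer rule of the Remark in \Cref{sec:setup}, each applicant receives the offer with probability $1/U$.

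Second, I substitute $p_i = 1$ and $\phi(\vec{p}_{-i}) = 1/U$ into $f_i = 1 - \big(1-p_i\phi(\vec{p}_{-i})\big)^V$, which gives $f_i = 1 - (1 - 1/U)^V$. This expression does not depend on $i$, so dropping the index yields the stated $f$.

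There is no real obstacle. The only point worth stating explicitly is the degenerate case of the Poisson binomial: a sum of Bernoulli variables with success probability $1$ is a deterministic constant. This makes the expectation in $\phi$ trivial, and independence across vacancies is inherited directly from \Cref{f_i-small}.
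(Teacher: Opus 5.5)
Your proof is correct and matches the paper's argument: with $p_i = 1$ every applicant links to every vacancy, so $\tilde{d}_i = U-1$ and $\phi = 1/U$, and substituting into the formula of \Cref{f_i-small} gives the result.
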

\begin{proof}
See \Cref{sec:app-C}.
\end{proof}

Going forward we will also need the following result.\footnote{This is really an accounting identity that can be shown to hold for any bipartite network.}
\begin{lemma}\label{lemma-acc-id} The market-level average of mean applicant degrees $\bar{d}_U \equiv \frac{\sum_i \bar{d}_i}{U}$, the vacancies' mean degree $\bar{d}_V$, and the market tightness $\theta = \frac{V}{U}$, satisfy $\bar{d}_U = \theta \bar{d}_V$.
\end{lemma}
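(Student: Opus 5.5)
The plan is to write both sides of the identity as the expected total number of links in the network, $\mathbb{E}\big[\sum_{i}\sum_{j} Y_{ij}\big]$, counted once by rows (applicants) and once by columns (vacancies).

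\textbf{Step 1 (applicant side).} By \Cref{link-proba}, $d_i\sim Bin(V,p_i)$, so $\bar{d}_i = p_i V$. Summing over applicants and dividing by $U$ gives
\[
\bar{d}_U = \frac{1}{U}\sum_{i=1}^U p_i V = \frac{V}{U}\sum_{i=1}^U p_i .
\]

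\textbf{Step 2 (vacancy side).} Again by \Cref{link-proba}, each $d_j\sim PB(\vec{p})$, and the mean of a Poisson binomial is the sum of its success probabilities. Hence $\bar{d}_V = \sum_{i=1}^U p_i$, the same for every $j$.

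\textbf{Step 3 (combine).} Substituting Step 2 into Step 1 gives $\bar{d}_U = \frac{V}{U}\,\bar{d}_V = \theta\,\bar{d}_V$, using $\theta = V/U$.

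The accounting identity mentioned in the footnote also holds for an arbitrary matrix $(p_{ij})$, and the same argument proves it. By linearity of expectation,
\[
\sum_i \mathbb{E}[d_i] = \sum_i\sum_j p_{ij} = \sum_j \mathbb{E}[d_j].
\]
The left side equals $U\bar{d}_U$. The right side equals $V\bar{d}_V$ once $\bar{d}_V$ is read as the average over vacancies of their mean degrees. Dividing both sides by $U$ yields the result.

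No step is a real obstacle. The only point needing care is to be explicit that $\bar{d}_V$ means the mean degree common to all vacancies in the setting $p_{ij}=p_i$, or the average of vacancy mean degrees in the general case.
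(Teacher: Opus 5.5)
Your proof is correct and is essentially the paper's own argument: the paper also writes $\bar{d}_U = \frac{1}{U}\sum_i p_i V$, identifies $\sum_i p_i$ with the Poisson binomial mean $\bar{d}_V$, and concludes $\bar{d}_U = \frac{V}{U}\bar{d}_V = \theta\bar{d}_V$. Your added double-counting argument $\sum_i \mathbb{E}[d_i] = \sum_i\sum_j p_{ij} = \sum_j \mathbb{E}[d_j]$ for general $(p_{ij})$ is a correct justification of the footnote's claim. There it needs $\bar{d}_V$ read as the average of the vacancies' mean degrees, as you note.
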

\begin{proof}
See \Cref{sec:app-C}.
\end{proof}

Since our object of interest is the aggregate matching function, going forward our results will focus on \textit{the large market} limit. Roughly speaking, this is when the size of the market grows large, and each applicant and vacancy become negligibly small. Formally:

\begin{definition}[Large market] The large market is the limit of $U,V \to \infty$, while linking probabilities $p_i \to 0$. The limit value of market tightness $\frac{V}{U}$ is assumed to exist and be equal to $\theta$; the limit value of mean degree $p_i V$ is assumed to exist and be equal to $\bar{d}_i$, $\forall i$; the limit value of $\frac{\sum_i \bar{d}_i}{U}$ is assumed to exist and be equal to $\bar{d}_U$.
\end{definition}

Here is the analog of \Cref{link-proba} in the large market limit.

\begin{lemma} In a large market, applicant degrees are independent each following a Poisson distribution with mean $\bar{d}_i$. Vacancy degrees are i.i.d. following a Poisson distribution with mean $\bar{d}_V$.
\end{lemma}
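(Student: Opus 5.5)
Starting from \Cref{link-proba}: for finite $U,V$ the applicant degrees $d_i\sim Bin(V,p_i)$ are independent, and the vacancy degrees $d_j\sim PB(\vec{p})$ are i.i.d. Independence carries over to the limit, since it is preserved under convergence in distribution of each coordinate: joint laws factor at every stage, so the limiting joint law factors as well. The same reasoning gives that vacancy degrees remain i.i.d. in the limit. It is therefore enough to identify the limiting marginal laws, and I will do this with probability generating functions, working one degree at a time.

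\textbf{Applicant side.} The generating function of $d_i$ is $\mathbb{E}[s^{d_i}] = \big(1-p_i(1-s)\big)^V$. In the large market $p_i\to 0$ while $p_iV\to\bar{d}_i$. Writing $V\log\big(1-p_i(1-s)\big) = -p_iV(1-s) + O(Vp_i^2)$ and using $Vp_i^2 = (p_iV)p_i\to 0$, this converges to $e^{-\bar{d}_i(1-s)}$, which is the Poisson($\bar{d}_i$) generating function. By the continuity theorem for generating functions of $\mathbb{N}$-valued variables, $d_i$ converges in distribution to Poisson($\bar{d}_i$). This is the classical law of rare events.

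\textbf{Vacancy side.} The generating function of $d_j$ is $\prod_{i=1}^U\big(1-p_i(1-s)\big)$. Taking logs gives
\[
\sum_i \log\big(1-p_i(1-s)\big) = -(1-s)\sum_i p_i + R_U, \qquad |R_U|\le C\sum_i p_i^2,
\]
valid once $\max_i p_i$ is small. By \Cref{lemma-acc-id}, $\sum_i p_i = \bar{d}_V = \bar{d}_U/\theta$ at each finite stage, since $\sum_i p_i = \frac{1}{V}\sum_i p_iV = \frac{U}{V}\cdot\frac{\sum_i \bar{d}_i}{U}$. This converges to $\bar{d}_U/\theta$, which I take as the limiting $\bar{d}_V$. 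The limiting law is then Poisson($\bar{d}_V$).

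\textbf{Main obstacle.} The step needing care is showing $R_U\to 0$, i.e.\ $\sum_i p_i^2\to 0$, which is Le Cam's condition for Poisson approximation of a Poisson binomial. I would bound
\[
\sum_i p_i^2 \le \Big(\max_i p_i\Big)\sum_i p_i,
\]
where $\sum_i p_i$ stays bounded by the previous paragraph. So it suffices that $\max_i p_i\to 0$, i.e.\ that the convergence $p_i\to 0$ in the large-market definition is uniform in $i$. Equivalently, $\max_i \bar{d}_i/V\to 0$, so that no single applicant's mean degree is a non-negligible fraction of $V$. I would state this explicitly as the interpretation of ``each applicant becomes negligibly small.''

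Alternatively, Le Cam's inequality $d_{TV}\big(PB(\vec{p}),\mathrm{Poisson}(\sum_i p_i)\big)\le \sum_i p_i^2$ gives the result directly, combined with continuity of the Poisson law in its mean.
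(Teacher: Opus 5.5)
Your proposal is correct and follows essentially the same route as the paper: the paper also passes to the limit separately in the independent binomials $Bin(V,p_i)$ and in the i.i.d.\ Poisson binomials $PB(\vec{p})$ with $\sum_i p_i \to \bar{d}_U/\theta = \bar{d}_V$, but it simply cites Feller's generating-function (continuity-theorem) results where you write the computation out. Your explicit requirement that $\max_i p_i \to 0$ (so that $\sum_i p_i^2 \to 0$) is a condition the paper leaves implicit in ``$p_i \to 0$, $\forall i$'', and stating it is a worthwhile clarification.
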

\begin{proof}
See \Cref{sec:app-C}.
\end{proof}

Notice that in the large market, where the binomials have converged to Poissons, the primitive is no longer the vector of $\vec{p}$, but rather the collection of $\{\bar{d}_i\}_{i=1}^{\infty}$. Naturally, $\bar{d}_i$ also captures applicant $i$'s search intensity, and we will be referring to it as such. 

The following result is the analog of \Cref{f_i-small} in the large market limit.

\begin{lemma} In a large market, when search intensities are given by $\{\bar{d}_i\}_{i=1}^{\infty}$, applicant $i$'s job-finding probability is given by
\[
f_i = 1-e^{-\bar{d}_i\phi}
\]
where $\phi \equiv \frac{1-e^{-\bar{d}_V}}{\bar{d}_V}$. 
\end{lemma}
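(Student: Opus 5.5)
The plan is to start from the finite-market formula of \Cref{f_i-small}, $f_i = 1-\big(1-p_i\phi(\vec{p}_{-i})\big)^V$, and pass to the large-market limit in two separate pieces. The first piece is the limit of $\phi(\vec{p}_{-i}) = \mathbb{E}\big[\frac{1}{1+\tilde d_i}\big]$. The second is the limit of the power $(1-p_i\phi)^V$.

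\textbf{The competition term.} The key step is to show that $\phi(\vec{p}_{-i})$ converges to $\mathbb{E}\big[\frac{1}{1+X}\big]$ with $X\sim \text{Poisson}(\bar d_V)$.

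First, I will identify the limit distribution of $\tilde d_i \sim PB(\vec p_{-i})$. It differs from the vacancy degree $d_j\sim PB(\vec p)$ only by removing one Bernoulli$(p_i)$ trial, and $p_i\to 0$. By the preceding large-market lemma, $d_j$ converges in distribution to a Poisson with mean $\bar d_V$. Removing a trial whose success probability vanishes changes the distribution by at most $p_i$ in total variation. Hence $\tilde d_i$ also converges in distribution to Poisson$(\bar d_V)$. Here $\bar d_V=\lim \sum_k p_k = \bar d_U/\theta$ by \Cref{lemma-acc-id}.

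Second, since $x\mapsto \frac{1}{1+x}$ is bounded and continuous on $\mathbb{N}$, convergence in distribution gives convergence of the expectation.

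Third, I compute the Poisson expectation directly:
\[
\mathbb{E}\Big[\frac{1}{1+X}\Big]=\sum_{k\ge0}\frac{e^{-\bar d_V}\bar d_V^k}{(k+1)!}=\frac{e^{-\bar d_V}}{\bar d_V}\sum_{m\ge1}\frac{\bar d_V^m}{m!}=\frac{1-e^{-\bar d_V}}{\bar d_V}=\phi .
\]
Note that the limit no longer depends on $i$, because removing a single vanishing trial does not matter.

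\textbf{The power.} Write $(1-p_i\phi_V)^V$, where $\phi_V\equiv\phi(\vec p_{-i})\to\phi$, $p_iV\to\bar d_i$ and $p_i\to0$. Taking logs gives $V\log(1-p_i\phi_V)= -p_iV\phi_V + O(Vp_i^2)$. The error term is $O(Vp_i^2) = O(\bar d_i p_i)\to 0$. Hence $(1-p_i\phi_V)^V\to e^{-\bar d_i\phi}$, and therefore $f_i\to 1-e^{-\bar d_i\phi}$.

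\textbf{Main obstacle.} The delicate step is justifying the Poisson limit of $\tilde d_i$ uniformly enough, given only that $\sum_k p_k$ converges.

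My first tool would be Le Cam's inequality: the total variation distance between $PB(\vec p_{-i})$ and $\text{Poisson}(\sum_{k\ne i}p_k)$ is at most $\sum_k p_k^2 \le \max_k p_k \sum_k p_k$. I would use the large-market assumption that linking probabilities vanish, interpreting $p_k\to0$ as $\max_k p_k\to 0$. Combined with continuity of the Poisson distribution in its mean, this bound closes the argument.

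Alternatively, I could work with the exact representation
\[
\mathbb{E}\Big[\frac{1}{1+\tilde d_i}\Big]=\int_0^1 \prod_{k\ne i}(1-p_k+p_kt)\,dt .
\]
The product converges to $e^{-\bar d_V(1-t)}$ under the same condition, and dominated convergence then yields $\phi$ directly.
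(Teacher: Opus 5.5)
Your proposal is correct and follows essentially the same route as the paper. Both arguments start from $f_i = 1-\big(1-p_i\phi(\vec{p}_{-i})\big)^V$, show that $\phi(\vec{p}_{-i})$ converges to the $\mathrm{Poisson}(\bar d_V)$ expectation $\mathbb{E}[1/(1+X)] = (1-e^{-\bar d_V})/\bar d_V$, and then pass $\big(1-p_i\phi(\vec{p}_{-i})\big)^V \to e^{-\bar d_i\phi}$. Your extra steps only make rigorous what the paper asserts directly: the total-variation/Le Cam bound under $\max_k p_k\to 0$, convergence of expectations for the bounded function $1/(1+x)$, and the $O(Vp_i^2)$ logarithmic remainder.
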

\begin{proof}
See \Cref{sec:app-C}.
\end{proof}

We can now prove our first main result for the matching function. So far, we have allowed for rich heterogeneity across applicants, allowing them to differ ex ante in their search intensities. To get closer to the standard matching functions which do not distinguish among applicants, we will let $\{\bar{d}_i\}$ be drawn i.i.d. from a distribution, thus making applicants ex ante identical, allowing only for ex post heterogeneity. Yet, as we will show, this setup retains enough structural richness for our analysis to go beyond the special cases analyzed in the literature.

\begin{theorem}\label{thm-1} In a large market, when applicant search intensities $\{\bar{d}_i\}_{i=1}^\infty$ are drawn i.i.d. from a distribution $G$ with a finite mean $\bar{d}_U$, support in a subset of $[0, \infty)$, and a moment-generating function (mgf) well-defined for non-positive values, the mean job-finding probability is given by
\[
f(\theta; G) = 1- M_{\bar{d}}(-\phi), \tag{1}
\]
where $M_{\bar{d}}(t) \equiv \mathbb{E}[e^{\bar{d} t}]$ is the moment-generating function (mgf) of $\bar{d}_i, \ \forall i$, and $\phi \equiv \frac{1-e^{-\bar{d}_U/\theta}}{\bar{d}_U/\theta}$. 
\end{theorem}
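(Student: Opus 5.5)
The plan is to start from the preceding large-market lemma, which gives applicant $i$'s job-finding probability conditional on the realized search intensities as $f_i = 1-e^{-\bar{d}_i\phi}$, with $\phi = \frac{1-e^{-\bar{d}_V}}{\bar{d}_V}$. Two things remain. First, we express $\phi$ in terms of $\bar{d}_U$ and $\theta$. Second, we average over the distribution $G$.

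\textbf{Step 1: rewrite $\phi$.} By \Cref{lemma-acc-id}, $\bar{d}_U = \theta \bar{d}_V$ in any finite market, so $\bar{d}_V = \bar{d}_U/\theta$. In the large market, with $\bar{d}_i$ drawn i.i.d.\ from $G$ with finite mean, the strong law of large numbers gives $\frac{1}{U}\sum_i \bar{d}_i \to \mathbb{E}[\bar{d}] = \bar{d}_U$ almost surely. Hence the vacancy-side Poisson mean is the deterministic constant $\bar{d}_U/\theta$, and $\phi = \frac{1-e^{-\bar{d}_U/\theta}}{\bar{d}_U/\theta}$. This means $\phi$ does not depend on any single applicant's draw. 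That matters, because each applicant's contribution to $\bar{d}_V$ is negligible, and this is exactly why the large-market lemma writes $\phi$ without an $i$ subscript.

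\textbf{Step 2: average over $G$.} The mean job-finding probability is the expectation of $f_i$ over the draw $\bar{d}_i \sim G$. Equivalently, by the law of large numbers again, it is the limit of the cross-sectional average $\frac1U\sum_i f_i$. Since $\phi$ is a constant,
\[
f(\theta;G) = \mathbb{E}\left[1-e^{-\bar{d}\phi}\right] = 1-\mathbb{E}\left[e^{-\phi\bar{d}}\right] = 1-M_{\bar{d}}(-\phi).
\]
This uses that $\phi \ge 0$, so $-\phi \le 0$ lies where the mgf is assumed well defined. In fact it is automatically finite, because the support lies in $[0,\infty)$ and so $e^{-\phi \bar d}\le 1$. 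The identity then follows from linearity and bounded convergence.

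\textbf{Main obstacle.} The delicate point is justifying the order of limits. We need to take the large-market limit, in which conditional on $\{\bar{d}_i\}$ each $f_i$ converges to $1-e^{-\bar{d}_i\phi}$, and then average over the random draws. In particular, the realized $\bar{d}_V$ depends on the draws, including $\bar d_i$ itself. I would handle this by conditioning on the realized sequence: on the probability-one event where the empirical mean converges to $\bar{d}_U$, the earlier lemma applies pathwise. Because every $f_i$ lies in $[0,1]$, dominated convergence then allows exchanging the limit with the expectation over $G$. Applicant $i$'s own term shifts $\bar d_V$ by only $O(1/V)$, so it does not affect the limit.
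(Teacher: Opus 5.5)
Your proposal is correct and follows the paper's proof. Both condition on the realized $\bar{d}_i$, use the large-market lemma $f_i = 1-e^{-\bar{d}_i\phi}$ with $\bar{d}_V=\bar{d}_U/\theta$ from the accounting identity, and take the expectation over $G$ to get $1-M_{\bar{d}}(-\phi)$; your added steps (the strong law identifying $\bar{d}_U$ with the mean of $G$, and bounded convergence to swap the large-market limit with the expectation) just spell out what the paper compresses into ``the conditions of Theorem 1 guarantee that everything is well-defined.''
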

\begin{proof}
See \Cref{sec:app-C}.
\end{proof}

Given that $0 \leq e^{-\bar{d}\phi} \leq 1$ for $\bar{d}\phi \geq 0$, it can readily be confirmed that $f(\theta; G)$ takes values in $[0,1]$. It can also be shown that it satisfies the ``standard'' properties of reduced-form matching technologies with respect to $\theta$ \citep[e.g.,][]{C-TGK2023}:

\begin{proposition} The function $f(\theta; G)$ of \Cref{thm-1} is increasing and concave in $\theta$ and satisfies $f(0; G)=0$.
\end{proposition}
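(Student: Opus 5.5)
The plan is to treat $f$ as a composition: $\theta \mapsto \phi(\theta)$ followed by $\phi \mapsto 1 - M_{\bar d}(-\phi) = 1 - \mathbb{E}[e^{-\bar d \phi}]$. Write $x \equiv \bar d_U/\theta$ and $h(x) \equiv (1-e^{-x})/x$, so that $\phi = h(\bar d_U/\theta)$. Define also $\psi(\phi) \equiv 1 - \mathbb{E}[e^{-\bar d\phi}]$ for $\phi \ge 0$. Since $\bar d \ge 0$ and the mgf is finite for non-positive arguments, dominated convergence allows differentiation under the expectation. This gives
\[
\psi'(\phi) = \mathbb{E}[\bar d e^{-\bar d \phi}] \ge 0, \qquad \psi''(\phi) = -\mathbb{E}[\bar d^2 e^{-\bar d\phi}] \le 0 .
\]
For $\phi>0$ the integrands are bounded by multiples of $e^{-\bar d\phi/2}$, and at $\phi=0$ finiteness of the mean suffices for $\psi'$. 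So $\psi$ is increasing and concave.

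Next I would study $\phi$ as a function of $\theta$. The function $h$ is decreasing on $(0,\infty)$, because its derivative has the sign of $(1+x)e^{-x}-1<0$. Since $x = \bar d_U/\theta$ is decreasing in $\theta$, $\phi$ is increasing in $\theta$. For concavity it is cleanest to write $\phi(\theta) = \frac{\theta}{\bar d_U}\bigl(1-e^{-\bar d_U/\theta}\bigr)$ and differentiate directly. This gives
\[
\phi'(\theta) = \frac{1}{\bar d_U}\Bigl(1 - e^{-x} - x e^{-x}\Bigr)
\]
with $x=\bar d_U/\theta$, and this is positive. Then, using $dx/d\theta = -x/\theta$, I get $\phi''(\theta) = \frac{1}{\bar d_U}\cdot x e^{-x}\cdot(-x/\theta)$. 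Indeed, $\frac{d}{dx}(1-e^{-x}-xe^{-x}) = xe^{-x}$, so $\phi'' = -x^2e^{-x}/(\theta \bar d_U) < 0$. Thus $\phi$ is increasing and concave in $\theta$.

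The composition of an increasing concave function ($\psi$) with an increasing concave function ($\phi$) is increasing and concave, since
\[
f'' = \psi''(\phi)\,\phi'^2 + \psi'(\phi)\,\phi'' \le 0 \quad\text{and}\quad f' = \psi'(\phi)\,\phi' \ge 0 .
\]
For the boundary, as $\theta \to 0$ we have $x\to\infty$ and $\phi = h(x) \le 1/x = \theta/\bar d_U \to 0$. Hence $f(0;G) = 1 - M_{\bar d}(0) = 0$, defining $f(0;G)$ by continuity (monotone or dominated convergence gives $\mathbb{E}[e^{-\bar d\phi}]\to 1$).

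The main obstacle is the technical justification of differentiating the mgf under the expectation, and the degenerate cases. If $\bar d_U = 0$ then $G$ is degenerate at zero and $f \equiv 0$, which trivially satisfies the claims. Monotonicity is strict if $G$ is not degenerate at $0$. Everything else is routine calculus.
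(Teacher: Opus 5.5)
Your proposal is correct and follows essentially the same route as the paper: the chain rule through $\phi$, the signs of $\mathbb{E}[\bar{d}e^{-\bar{d}\phi}]$ and $\mathbb{E}[\bar{d}^2e^{-\bar{d}\phi}]$, and $\phi'>0$, $\phi''<0$ via the same formulas (your $\phi''=-x^2e^{-x}/(\theta\bar{d}_U)$ equals the paper's $-\bar{d}_Ve^{-\bar{d}_V}/\theta^2$), with $\phi\to 0$ as $\theta\to 0$ giving the boundary condition. Your justification of differentiating under the expectation, and your remark that strict monotonicity and concavity need $G$ not degenerate at $0$, are details the paper leaves implicit.
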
 
\begin{proof}
See \Cref{sec:app-C}.
\end{proof}

Expression (1) is a very general and compact formula that gives the expected job-finding probability for a broad family of networks, namely any distribution $G$ satisfying relatively mild conditions, from which $\{\bar{d}_i\}_{i=1}^\infty$ are drawn. In what follows, we will apply it to the cases of the degenerate, exponential, gamma, Pareto and uniform families of distributions.

In the environment of \Cref{thm-1}, applicant degrees are by construction i.i.d., following 
\[
P(d = k) = \int_0^\infty \frac{\bar{d}^k}{k!}e^{-\bar{d}}dG(\bar{d})
\]
Such a distribution is known as a \textit{mixed Poisson} distribution, which is a special case of a mixture of distributions. The mixed Poisson has recently found applications on the intersection of networks and IO \citep{CUZ2024}, extreme value outcomes \citep{Sephorah2025}, as well as the labor market \citep{CGW2025}.\footnote{Mixtures of distributions, generally $P(d) = \int P_F(d|\bar{d})dG(\bar{d})$, are featured for example in Bayesian hierarchical models \citep[e.g.,][]{Lancaster2004}. The mixed Poisson is the special case where $F$ is the Poisson distribution. The cited papers give the pmf of the mixed Poisson distribution in an equivalent form, which in our notation reads $P(d = k) = \int_0^\infty \frac{(\bar{d}_U\bar{d}_n)^k}{k!}e^{-\bar{d}_U\bar{d}_n}dG(\bar{d}_n)$, where $\bar{d}_n$ is $\bar{d}$ normalized to have a mean of $1$; to go from one form to the other one needs to apply the change of variable formula for pdfs, with $\bar{d}=\bar{d}_U\bar{d}_n$.} The last two refer to two-sided markets and are directly comparable with our setup. We show that our setup embeds the search processes therein as a special case. 

\section{Implications of heterogeneous search intensities}\label{sec:implic}
We start by analyzing special cases of interest. We then give comparative statics of how changes in the distribution of search intensities $G$ affect the efficacy of the matching process.

\subsection{The CES functional form and other special cases}\label{subsec:spec-cases}
\textbf{The degenerate distribution}. Suppose $G$ is degenerate at $\bar{d}_U$. Its mgf is $e^{\bar{d}_U t}$, hence (1) becomes
\[
f = 1-e^{-\bar{d}_U\phi}
\]
This is the matching function in \citet{Kaas2010}. In this case all applicant degrees are drawn from the same Poisson distribution with mean $\bar{d}_U$ and, as always in our benchmark specification, all vacancy degrees are drawn from the same Poisson distribution with mean $\bar{d}_V$. This is also the matching process used by \citet{BSWee2024}.

A degenerate $G$ gives the large market version of the \textit{Erdős-Rényi} network we gave in our 2-by-2 example in \Cref{sec:setup}, an important benchmark among random networks. It also holds special significance among the family of networks covered by \Cref{thm-1}: for the purposes of matching it can be thought as a 1st-order approximation to any such network.
\begin{proposition}\label{prop-1st-ord} Taking a 1st-order Taylor expansion with respect to $\bar{d}$ around $\bar{d}_U$, the job-finding probability of \Cref{thm-1} becomes
\[
f \stackrel{\text{\tiny 1st}}{\approx} 1-e^{-\bar{d}_U\phi}
\]
\end{proposition}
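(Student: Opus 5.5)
The argument goes through the integral representation behind \Cref{thm-1}. Averaging applicant $i$'s job-finding probability $f_i = 1-e^{-\bar{d}_i\phi}$ over $\bar{d}_i \sim G$ gives
\[
f(\theta;G) = 1-\mathbb{E}\big[e^{-\bar{d}\phi}\big] = 1-\int_0^\infty e^{-\bar{d}\phi}\,dG(\bar{d}).
\]
The first step is to notice that $\phi = \frac{1-e^{-\bar{d}_U/\theta}}{\bar{d}_U/\theta}$ depends on $G$ only through the mean $\bar{d}_U$ (and on $\theta$). So once $G$ is fixed, $\phi$ is a constant with respect to the individual draw $\bar{d}$. This lets us treat the integrand $h(\bar{d}) \equiv e^{-\bar{d}\phi}$ as a function of the single variable $\bar{d}$ when we expand.

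Next, expand $h$ to first order around $\bar{d}_U$:
\[
h(\bar{d}) \stackrel{\text{\tiny 1st}}{\approx} e^{-\bar{d}_U\phi} - \phi\, e^{-\bar{d}_U\phi}\,(\bar{d}-\bar{d}_U).
\]
Substitute this into the expectation. The linear term has expectation $-\phi e^{-\bar{d}_U\phi}\,\mathbb{E}[\bar{d}-\bar{d}_U] = 0$, because $\bar{d}_U$ is by definition the mean of $G$. The constant term passes through the expectation unchanged. Hence $\mathbb{E}[h(\bar{d})] \stackrel{\text{\tiny 1st}}{\approx} e^{-\bar{d}_U\phi}$, and so $f \stackrel{\text{\tiny 1st}}{\approx} 1-e^{-\bar{d}_U\phi}$. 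Equivalently, this is the first-order (delta-method) approximation of $M_{\bar{d}}(-\phi)$ in (1), and it coincides with the degenerate case computed above.

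The main point to handle carefully is what is being held fixed. If one also let $\phi$ vary with $\bar{d}$, the expansion would be wrong: $\phi$ is a market-level object determined by $\bar{d}_U$ through \Cref{lemma-acc-id} ($\bar{d}_V = \bar{d}_U/\theta$), not by the individual draw. I will therefore state explicitly that the expansion is taken with $\phi$ fixed at its equilibrium value.

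I will also add a brief remark on the remainder. By convexity of $h$, Jensen's inequality gives $\mathbb{E}[h] \ge h(\bar{d}_U)$. So the approximation overstates $f$, and the error is second order, of size roughly $\tfrac{1}{2}\phi^2 e^{-\bar{d}_U\phi}\,\mathrm{Var}(\bar{d})$. This is finite whenever $G$ has a finite variance, which is consistent with the mgf assumption of \Cref{thm-1}.
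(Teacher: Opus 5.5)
Your proposal is correct and follows the paper's proof: you expand $e^{-\bar{d}\phi}$ (equivalently the large-market $f_i=1-e^{-\bar{d}_i\phi}$) to first order around $\bar{d}_U$ with $\phi$ held fixed, and the linear term vanishes in expectation because $\mathbb{E}[\bar{d}]=\bar{d}_U$. One correction to your closing aside: the mgf condition of Theorem 1 (for non-positive arguments) holds for every distribution on $[0,\infty)$, so it does not guarantee a finite variance (Pareto with $1<\alpha\le 2$ is a counterexample, and the paper uses Pareto), though the actual remainder is always finite since $0\le e^{-\bar{d}\phi}\le 1$.
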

\begin{proof}
See \Cref{sec:app-D}.
\end{proof}

\textbf{``Dense'' vs ``sparse'' networks}. We defined the large market for the case of \textit{sparse networks}, when $\bar{d}_U < \infty$. We can relax this assumption, letting $\bar{d}_U \to \infty$ to study the limit, where applicants apply ``everywhere.'' For example, when $G$ is degenerate, in the ``dense'' network limit, the matching function converges to the (large market) urn-ball form of $f = 1-e^{-\theta}$.\footnote{To see this explicitly, note that $\lim_{\bar{d}_U \to \infty} \bar{d}_U\phi = \lim_{\bar{d}_U \to \infty} \theta(1-e^{-\bar{d}_U/\theta}) = \theta$. Note also that this job-finding probability is the limit of the expression in Corollary 3, when $V\to \infty$, holding $\theta$ fixed.} When $G$ is the exponential, (1) becomes $f = 1- (1+\bar{d}_U \phi)^{-1}$ and when $\bar{d}_U \to \infty$ it converges to $\theta/(1+\theta)$, the form assumed in \citet{KW1993}. We can show the result for any $G$ covered by \Cref{thm-1}: 

\begin{proposition}\label{prop-dense} When $\bar{d}_U \to \infty$ holding $\theta$ fixed, (1) converges to $f = 1-M_{\bar{d}_n}(-\theta)$, where $\bar{d}_n = \frac{\bar{d}}{\bar{d}_U}$ is applicant search intensity $\bar{d}$ normalized to have a mean of $1$.
\end{proposition}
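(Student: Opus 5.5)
We plan to rewrite (1) in terms of the normalized intensity $\bar{d}_n = \bar{d}/\bar{d}_U$ and then pass to the limit inside the expectation. The key point is that in the dense limit, $G$ is understood as a family of distributions indexed by $\bar{d}_U$: the normalized distribution of $\bar{d}_n$ (mean $1$) is held fixed, and only the scale $\bar{d}_U$ grows. This is how the exponential example in the text is read. Under this reading, $\bar{d} = \bar{d}_U \bar{d}_n$ with the law of $\bar{d}_n$ independent of $\bar{d}_U$. By the scaling property of moment-generating functions, $M_{\bar{d}}(t) = M_{\bar{d}_n}(\bar{d}_U t)$, so (1) becomes
\[
f = 1 - M_{\bar{d}_n}\left(-\bar{d}_U \phi\right) = 1 - \mathbb{E}\left[e^{-\bar{d}_n \bar{d}_U\phi}\right].
\]

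Next we use the computation already noted in the footnote on the degenerate case: $\bar{d}_U \phi = \theta\left(1-e^{-\bar{d}_U/\theta}\right)$. This follows directly from the definition $\phi = \frac{1-e^{-\bar{d}_U/\theta}}{\bar{d}_U/\theta}$ in \Cref{thm-1}. With $\theta$ fixed, as $\bar{d}_U\to\infty$ this increases monotonically to $\theta$. Hence, for each realization of $\bar{d}_n \geq 0$, the integrand $e^{-\bar{d}_n \bar{d}_U\phi}$ converges pointwise to $e^{-\bar{d}_n\theta}$.

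The remaining step is interchanging limit and expectation, which we expect to be the only delicate point, and it is mild. Since $\bar{d}_n \ge 0$ (the support of $G$ lies in $[0,\infty)$) and $\bar{d}_U\phi \ge 0$, the integrand lies in $[0,1]$. Dominated convergence with the constant bound $1$ therefore gives $\mathbb{E}[e^{-\bar{d}_n \bar{d}_U\phi}] \to \mathbb{E}[e^{-\bar{d}_n\theta}] = M_{\bar{d}_n}(-\theta)$, where the mgf is well-defined at non-positive arguments as assumed in \Cref{thm-1}. Alternatively, since the argument $-\bar{d}_U\phi$ decreases monotonically to $-\theta$, the result follows from continuity of the mgf on $(-\infty,0]$, which itself follows from the same bounded convergence argument. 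Either route yields $f \to 1 - M_{\bar{d}_n}(-\theta)$.

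As a sanity check we will verify the two examples in the text. In the degenerate case $\bar{d}_n\equiv 1$, which gives $1-e^{-\theta}$. In the exponential case $M_{\bar{d}_n}(t)=(1-t)^{-1}$, which gives $1-(1+\theta)^{-1} = \theta/(1+\theta)$.
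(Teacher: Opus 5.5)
Your proof is correct and follows the paper's own route: change variables to $\bar{d}_n=\bar{d}/\bar{d}_U$, use $\bar{d}_U\phi=\theta\left(1-e^{-\bar{d}_U/\theta}\right)\to\theta$, and pass to the limit inside the integral. You also make explicit two points the paper leaves implicit: that the law of $\bar{d}_n$ is held fixed as $\bar{d}_U$ grows, and that the interchange of limit and expectation is justified by dominated convergence with the bound $1$.
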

\begin{proof}See \Cref{sec:app-D}.
\end{proof}

\Cref{prop-dense} applies to a whole family of \textit{dense networks}, which can be thought as a large-market generalization of the classic small-market urn-ball setup where applicants apply everywhere. This family is for example where \citet{Sephorah2025} focuses. This may be a very reasonable setup for multiple contexts, but it seems unlikely to be the case in the (large) job market, where applications arguably form a rather sparse network, with each applicant applying to only a handful of vacancies of the whole market $(\bar{d}_U < \infty)$. In any case, our setup applies equally well to a broad class of both dense and sparse networks. 

\textbf{Meetings when vacancies are ``abundant''}. Another limit case of interest is when the market tightness goes to infinity. This case is of interest primarily because, as the next proposition shows, a rather large and well-studied family of meeting technologies in the labor-search literature corresponds to precisely this case.\footnote{Papers using search technologies relating to this type include \citep{EK2010a, EK2010b, LVW2015, CGW2017, ACGV2020}.}

\begin{proposition} When $\theta \to \infty$ holding $\bar{d}_U$ fixed, (1) converges to $f = 1-M_{\bar{d}_n}(-\bar{d}_U)$, where $\bar{d}_n = \frac{\bar{d}}{\bar{d}_U}$ is applicant search intensity $\bar{d}$ normalized to have a mean of $1$.
\end{proposition}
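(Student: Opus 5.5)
The plan is to take the limit $\theta \to \infty$ with $\bar{d}_U$ held fixed directly in expression (1) of \Cref{thm-1}, first in $\phi$ and then inside the moment-generating function. It is convenient to write $x \equiv \bar{d}_U/\theta$, which is the vacancies' mean degree $\bar{d}_V$ by \Cref{lemma-acc-id}. Then $\phi = (1-e^{-x})/x$, and $x \to 0^+$ as $\theta\to\infty$. The elementary expansion $1-e^{-x} = x - x^2/2 + O(x^3)$ gives $\phi \to 1$. Moreover $\phi$ is monotone: it is increasing as $x$ decreases, so $\phi\uparrow 1$. Economically, each vacancy receives almost no competing applications, so every link converts into an offer.

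Next, I will express the mgf of $\bar{d}$ through the normalized variable. Set $\bar{d}_n = \bar{d}/\bar{d}_U$, so that $M_{\bar{d}}(t) = \mathbb{E}[e^{\bar{d}_U \bar{d}_n t}] = M_{\bar{d}_n}(\bar{d}_U t)$. Expression (1) then reads
\[
f(\theta;G) = 1 - M_{\bar{d}_n}(-\bar{d}_U \phi).
\]
It remains to show that $M_{\bar{d}_n}(-\bar{d}_U\phi) \to M_{\bar{d}_n}(-\bar{d}_U)$ as $\phi \to 1$.

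This exchange of limit and expectation is the only step needing care, and it is handled by dominated convergence. The integrand $e^{-\bar{d}\phi}$ converges pointwise to $e^{-\bar{d}}$ for every realization $\bar{d}\ge 0$, since $\phi\to1$. Because the support of $G$ lies in $[0,\infty)$ and $\phi\ge 0$, the integrand is bounded by $1$, which is integrable. Hence $\mathbb{E}[e^{-\bar{d}\phi}] \to \mathbb{E}[e^{-\bar{d}}]$.

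Alternatively, continuity of the mgf on $(-\infty,0]$ gives the same conclusion. Monotonicity also works: since $\phi\uparrow1$, the integrand decreases monotonically, and monotone convergence applies. Combining the two limits yields $f \to 1 - M_{\bar{d}_n}(-\bar{d}_U)$, as claimed.
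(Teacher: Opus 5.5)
Your proof is correct and follows the paper's route: the change of variable $\bar{d}_n = \bar{d}/\bar{d}_U$ rewrites (1) as $1 - \mathbb{E}[e^{-\phi\bar{d}_U\bar{d}_n}]$, and then $\phi \to 1$ as $\theta\to\infty$ (the paper uses L'H\^opital's rule where you expand $1-e^{-x}$). Your dominated-convergence step, with the integrand bounded by $1$, justifies passing the limit inside the integral, which the paper leaves implicit.
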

\begin{proof}See \Cref{sec:app-D}.
\end{proof}

The expression $1-M_{\bar{d}_n}(-\bar{d}_U)$ is precisely the meeting probability given in eq. (17) of \citet{CGW2025}, and the degree distribution of sellers (applicants for us) follows a mixed Poisson with mean $\bar{d}_U$ as indicated in their eq. (16). Thus the two setups become equivalent in this limit case of ours.\footnote{\citet{CGW2025} denote the mean of the mixed Poisson ($\bar{d}_U$) by $\lambda$, as seen in their expression (16). Let us also note that they also refer to $\lambda$ as market tightness, which is absolutely true in their setup: when $d_j =1, \forall j$ as they assume, the accounting identity $\bar{d}_U=\theta \bar{d}_V$ indeed becomes $\bar{d}_U=\theta$.} 

Let us clarify. In \citet{CGW2025}, whenever a seller (applicant for us) connects to a buyer (vacancy for us), a meeting takes place. This is because in their setup each buyer links, by assumption, to a single seller, that is all buyers have a degree of $1$. Thus, it is impossible for multiple sellers to compete for the same buyer.  Our setup, in contrast, generally features competition on the buyer side, which is why each meeting has probability $\phi$ to yield a sale (offer for us). When vacancies become ``abundant'' with tightness $\theta \to \infty$, competition for buyers (vacancies) is eliminated, thus the two setups become equivalent.

Independently of whether one views this special case as each buyer (vacancy) linking to exactly one seller (applicant), or as market tightness $\theta \to \infty$, the outcome is the same: in this case, competition on the buyer (vacancy) side is muted. Its applicability depends on the specific application; it arguably seems less tenable in an uncoordinated labor-search context.

\textbf{The CES functional form}. Given (1), reverse-engineering suggests that a $G$ with a mgf of $M(t) = 1-\big(1+(-\bar{d}_U t)^{-\gamma}\big)^{-\frac{1}{\gamma}}, \gamma > 0$, defined over the non-positive values $t\leq 0$, would yield a job-finding probability of $\big(1+(\bar{d}_U\phi)^{-\gamma}\big)^{-\frac{1}{\gamma}}$. Then, in the dense network limit, when $\bar{d}_U \to \infty$, the job-finding probability would be of the appropriate form.\footnote{Assuming the mass of applicants is $U$ and the mass of vacancies $V$ the CES matching function is of the form $\big(U^{-\gamma}+V^{-\gamma}\big)^{-\frac{1}{\gamma}}$, with a job-finding probability $f = \frac{(U^{-\gamma}+V^{-\gamma})^{-\frac{1}{\gamma}}}{U}= \big(1+\theta^{-\gamma}\big)^{-\frac{1}{\gamma}}$ \citep[e.g.,][ch. 1]{dHRW2000, MS2015, P-NW2017}.}

We will come back to the exact representation of the CES functional form within our setup momentarily. Let us first notice that there is no well-known distribution with the above mgf. The next result, however, identifies a ``scaling'' condition of mean search intensity in the market, so that, at a 1st-order, the job-finding probability in every network in the family covered by \Cref{thm-1} can be thought ``as if'' it comes from a CES matching function.

\begin{proposition}\label{prop-scalling} Suppose the following condition holds between mean search intensity in the market, $\bar{d}_U$, and market tightness, $\theta = \frac{V}{U}$, for some constant $\gamma > 0$,
\[
\bar{d}_U = -\theta \ln\left(1+ \frac{1}{\theta}\ln\left(1 - \left(1+\theta^{-\gamma}\right)^{-\frac{1}{\gamma}}\right)\right) \tag{*}
\]
Then the job-finding probability 
\[
f = (1+\theta^{-\gamma})^{-\frac{1}{\gamma}}
\]
approximates the job-finding probability of \Cref{thm-1}, at a 1st-order. 
\end{proposition}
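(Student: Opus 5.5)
The plan is to reduce the claim to \Cref{prop-1st-ord}. By that proposition, at first order the job-finding probability of \Cref{thm-1} is $1-e^{-\bar{d}_U\phi}$, for every admissible $G$. So it suffices to show that, under condition (*), this first-order expression equals $(1+\theta^{-\gamma})^{-\frac{1}{\gamma}}$ exactly. The approximation claim then follows, since the only approximation involved is the one already made in \Cref{prop-1st-ord}.

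First I rewrite $\bar{d}_U\phi$ using the definition $\phi = \frac{1-e^{-\bar{d}_U/\theta}}{\bar{d}_U/\theta}$. This gives
\[
\bar{d}_U\phi = \theta\left(1-e^{-\bar{d}_U/\theta}\right),
\]
so the first-order job-finding probability is
\[
1-\exp\left(-\theta\left(1-e^{-\bar{d}_U/\theta}\right)\right).
\]
Next I set this equal to the target $f_{CES} \equiv (1+\theta^{-\gamma})^{-\frac{1}{\gamma}}$ and solve for $\bar{d}_U$ by inverting step by step:
\[
\theta\left(1-e^{-\bar{d}_U/\theta}\right) = -\ln(1-f_{CES}),
\]
so
\[
e^{-\bar{d}_U/\theta} = 1+\tfrac{1}{\theta}\ln(1-f_{CES}),
\]
and hence
\[
\bar{d}_U = -\theta\ln\left(1+\tfrac{1}{\theta}\ln(1-f_{CES})\right).
\]
This is exactly condition (*). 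Since each step is an equivalence on the appropriate domain, substituting (*) into $1-e^{-\bar{d}_U\phi}$ returns $f_{CES}$. The cleanest write-up runs this computation forward: plug (*) into $e^{-\bar{d}_U/\theta}$, then into $\bar{d}_U\phi$, then into the exponential.

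The main point to check is well-definedness, i.e.\ that (*) produces a legitimate finite, nonnegative mean search intensity. We need $1-f_{CES}\in(0,1]$, which holds because $f_{CES}\in(0,1)$ for $\theta>0$. We also need the argument of the outer logarithm to lie in $(0,1]$, that is, $\ln(1-f_{CES}) > -\theta$, or equivalently $f_{CES} < 1-e^{-\theta}$. This inequality reflects the fact that $1-e^{-\bar{d}_U\phi}$ is increasing in $\bar{d}_U$ with supremum equal to the dense urn-ball limit $1-e^{-\theta}$ (see \Cref{prop-dense} and the degenerate case). So the CES target is attainable with a finite $\bar{d}_U$ only where it lies strictly below the urn-ball curve. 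I would verify this inequality for the relevant $(\theta,\gamma)$. If it does not hold for all $\theta$, I would state that (*) is meaningful, and the proposition applies, on the range of $\theta$ where it does. This range restriction is implicit in assuming that condition (*) "holds," since it must define a real, positive $\bar{d}_U$.
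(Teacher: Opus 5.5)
Your proposal is correct and matches the paper's proof, which consists of substituting (*) into the first-order expression $1-e^{-\bar{d}_U\phi}$ of \Cref{prop-1st-ord}; your inversion is that same computation run in reverse. Your well-definedness condition $(1+\theta^{-\gamma})^{-1/\gamma} < 1-e^{-\theta}$ is an extra the paper leaves to a footnote: it holds for all $\theta>0$ when $\gamma\le 1$, as the paper notes, but fails for small $\theta$ whenever $\gamma>1$.
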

\begin{proof}
See \Cref{sec:app-D}.
\end{proof}

This result echoes \citet{Steve2007} who, using a specific queuing system primitive, derives the CES functional form when the system's parameters scale in a particular way. Whether such scaling conditions hold is of course an empirical question. In contrast to the queuing system primitive, our network of connections is observable in cross-sectional application data, thus (*) is directly testable.\footnote{Note that given the accounting identity of \Cref{lemma-acc-id}, according to which $\bar{d}_U = \bar{d}_V\theta$, condition (*) can be tested with data either on the applicant or the vacancy side.} In addition, conditional on it holding in the data, the proposition gives the CES form as a 1st-order approximation to a whole family of networks, rather than relying on a specific network.

Given our earlier discussion, let us note that \Cref{prop-scalling} applies to sparse networks. We now delve more in the case of dense networks and characterize a broad class of matching functions that can be microfounded within our setup.

\textbf{A representation result in dense networks}. \Cref{prop-dense} establishes that in the dense network limit, the job-finding probability is given by
\[
f(\theta) = 1-M_{\bar{d}_n}(-\theta)
\]
where $\bar{d}_n = \frac{\bar{d}}{\bar{d}_U}$ is applicant search intensity $\bar{d}$ normalized to have a mean of $1$, and $M(.)$ is the moment generating function of $\bar{d}_n$. From the definition of the mgf, it holds that
\[
M_{\bar{d}_n}(-\theta) = \int_0^{\infty}e^{-\bar{d}_n\theta}dG(\bar{d}_n)
\]
We recognize the right-hand side to be the \textit{Laplace transform} of the random variable $\bar{d}_n$ with $\theta$ representing the Laplace (or ``frequency'') domain. This observation alone gives a novel interpretation to the market tightness $\theta$, the key variable of frictional markets. 

The connection between the mgf and the Laplace transform is well-known in the literature. The following representation result \citep[e.g.,][ch. XIII.4]{Feller2} characterizes the functions that can be microfounded within the sub-class of dense networks of our setup: A function $m(\theta)$ on $(0, \infty)$ is the Laplace transform of a probability distribution $G$ iff it is \textit{completely monotone}, and $m(0)=1$. In this case, it will hold that 
\[
f(\theta) = 1 - m(\theta)
\]
In terms of definition, a function $m$ on $(0, \infty)$ is said to be completely monotone if it possesses derivatives $m^{(n)}$ of all orders, and $(-1)^n m^{(n)}(\theta) \geq 0, \ \theta >0$ \citep{Feller2}. From the above, when $m(\theta)$ is a Laplace transform, $f$ will satisfy all the ``standard'' properties of reduced-form matching technologies: $m(0)=1$ implies that $f(0)=0$, while complete monotonicity requires that $m' < 0, \ m'' > 0$, implying $f' > 0, \ f'' <0$.

Conversely, any increasing and concave function $f(\theta)$, for which $f(0)=0$, can be microfounded as a dense network limit of our setup, if it also has all its derivatives alternating in sign, and $f'(0)=1$. In that case $m(\theta)=1-f(\theta)$ is the Laplace transform of a CDF $G$ with mean $1$. The distribution of (normalized) search intensities $G$ can be found using the \textit{inverse Laplace transform}. 

As an application, let us consider the functions $m_1(\theta) = 1-(1+\theta^{-\gamma})^{-\frac{1}{\gamma}}, \ \gamma \in (0,1]$, and $m_2(\theta) = \left(1-\frac{1-e^{-\alpha/\theta}}{\alpha/\theta} \right)^\alpha, \ \alpha = 1,2,..., A < \infty$. 

Function $m_1$ can be shown to be completely monotone, so it corresponds to the Laplace transform of a CDF $G$; it can also be verified that $-m'(0)=1$, hence $G$ has a mean of $1$. The job-finding probability $f$ is of the CES functional form in this case. The CDF $G$ is not among the widely-known distributions, except for the case of $\gamma=1$, when $G$ is the exponential and the job-finding probability corresponds to that employed by \citet{KW1993}, as we have already seen. Function $m_2$ can be shown to not be completely monotone, so there is no distribution $G$, that could give a dense network microfoundation of the matching function in \citet{AGV2006}.

It is worth mentioning that our dense network microfoundation of the CES imposes a tighter restriction on $\gamma$, requiring it to be in $(0,1]$, while typically it is only assumed that $\gamma >0$.\footnote{In fact, \citet{dHRW2000} who introduced and popularized this functional form use a value of $\gamma = 1.27$ in their quantitative analysis. Even though we don't emphasize this there, $\gamma \leq 1$ is sufficient for our sparse network CES approximation result (\Cref{prop-scalling}) to be well-defined for all $\theta >0$ .} This comes from the stronger condition of complete monotonicity imposing restrictions on derivatives of all orders, while the reduced-form meeting technologies only impose restrictions on the first two derivatives.

\subsection{Comparative statics}
In addition to embedding multiple matching functions in the literature as special cases, our setup provides a theory of match efficacy. We now show how changes in the underlying distribution of search intensities $G$ affect the job-finding probability, i.e. how well the matching process works. The overarching theme of our analysis is that \textit{inequality} in search intensities is a key determinant of how well the matching process works. 

\textbf{The effects of dispersion}. Take the Taylor expansion of \Cref{prop-1st-ord} and add an extra term, yielding the 2nd-order approximation
\[
f \stackrel{\text{\tiny 2nd}}{\approx} 1 - e^{-\phi\bar{d}_U} - \Bigg[\frac{1}{2}\phi^2e^{-\phi\bar{d}_U} \Bigg] Var(\bar{d})
\]
As we can see, other things being equal, a higher variance has an unambiguously negative effect on the job-finding probability.\footnote{Given the special interest in the dense network limit, it is also worth noting that as $\bar{d}_U \to \infty$, if the variance stays bounded its effect will vanish and the job-finding probability will converge to the urn-ball value. This argument generalizes to moments of arbitrarily high order, as the Taylor series of (1) around $\bar{d}_U$ is given by $f=1-e^{-\phi\bar{d}_U}+\sum_{k=1}^\infty(-1)^{k+1}\frac{1}{k!}\phi^ke^{-\phi\bar{d}_U}\mathbb{E}[(\bar{d}-\bar{d}_U)^k]$.}

More generally, the next result  shows that pure increases in search intensity dispersion, in the sense of \textit{second-order stochastic dominance (SOSD)} shifts (mean-preserving spreads) in $G$, have unambiguously negative effects on matching. This is our second main result for the matching function. It also holds in the variants of our setup we analyze in the next two sections, making it a very robust finding in addition to its sharp characterization.

\begin{theorem}\label{thm-2} Suppose $G'$ is a mean-preserving spread of $G$, where $G, G'$ are applicant search intensity distributions satisfying the conditions of \Cref{thm-1}. Then, for the corresponding mean job-finding probabilities in the large market, it holds that 
\[
f' < f
\] 
\end{theorem}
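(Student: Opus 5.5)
By \Cref{thm-1}, the mean job-finding probability under $G$ is
\[
f = 1-\mathbb{E}_G\!\left[e^{-\phi\bar{d}}\right],
\qquad
\phi=\frac{1-e^{-\bar{d}_U/\theta}}{\bar{d}_U/\theta}.
\]
The first step is to note that $\phi$ depends on $G$ only through its mean $\bar{d}_U$ (and on $\theta$). Because $G'$ is a mean-preserving spread of $G$, it has the same mean $\bar{d}_U$. So $\phi$ takes the same value under both distributions, and comparing $f'$ with $f$ reduces to comparing $\mathbb{E}_{G'}[u(\bar{d})]$ with $\mathbb{E}_G[u(\bar{d})]$ for the single fixed function $u(x)=e^{-\phi x}$ on $[0,\infty)$.

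The second step uses convexity. For $\phi>0$, which holds whenever $\bar{d}_U>0$ and $\theta<\infty$, the function $u$ is strictly convex, since $u''(x)=\phi^2e^{-\phi x}>0$. By the standard characterization of mean-preserving spreads (Rothschild--Stiglitz), $\mathbb{E}_{G'}[u]\ge \mathbb{E}_G[u]$ for every convex $u$. This gives $M'_{\bar{d}}(-\phi)\ge M_{\bar{d}}(-\phi)$, and hence $f'\le f$.

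Integrability is not a concern: $u$ is bounded on $[0,\infty)$, and the mgfs are well defined at non-positive arguments by the hypotheses of \Cref{thm-1}. To keep the argument self-contained, I will use the representation $\bar{d}'\stackrel{d}{=}\bar{d}+\varepsilon$ with $\mathbb{E}[\varepsilon\mid\bar{d}]=0$, and then apply conditional Jensen to obtain $\mathbb{E}[u(\bar{d}+\varepsilon)\mid \bar{d}]\ge u(\bar{d})$.

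The main obstacle is the strict inequality. A strict mean-preserving spread means $G'\neq G$, so $\varepsilon$ is not almost surely zero. On the event where $\varepsilon$ is conditionally nondegenerate, which has positive probability, conditional Jensen is strict because $u$ is strictly convex. Taking expectations then gives $\mathbb{E}_{G'}[u]>\mathbb{E}_G[u]$.

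Alternatively, I can avoid the coupling and use the integrated-CDF characterization. Integrating by parts twice gives
\[
\mathbb{E}_{G'}[u]-\mathbb{E}_G[u]=\int_0^\infty u''(x)\,\big(S'(x)-S(x)\big)\,dx,
\qquad
S(x)=\int_0^x G(t)\,dt,
\]
where the boundary terms vanish by the equal means and the boundedness of $u$ and $u'$. Since $S'\ge S$ with strict inequality on a set of positive measure when $G'\neq G$, and $u''>0$, the integral is strictly positive.

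Finally, I will note the degenerate cases. These are $\bar{d}_U=0$ and the limit $\theta\to\infty$; in both, $\phi$ is pinned down, equal to $1$ or a positive value, so $\phi>0$ throughout the large-market setting. The strict inequality therefore holds generally.
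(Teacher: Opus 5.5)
Your proposal is correct and follows essentially the same route as the paper. Both arguments observe that $\phi$ depends on $G$ only through $\bar{d}_U$, so it is unchanged under a mean-preserving spread; the comparison then reduces to expectations of the concave function $1-e^{-\phi\bar{d}}$ (equivalently your convex $e^{-\phi\bar{d}}$), to which the paper applies the standard MPS characterization (Mas-Colell--Whinston--Green, Prop.~6.D.2).

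Your extra argument for strictness (require $G'\neq G$, then use strict convexity via conditional Jensen or the integrated-CDF identity) fills a point the paper's citation only gives weakly. Two small corrections to your treatment of the degenerate cases:
\begin{itemize}
\item The condition for $\phi>0$ is $\theta>0$, not $\theta<\infty$. As $\theta\to 0$ one has $\phi\to 0$ and $f=f'=0$, while as $\theta\to\infty$ one has $\phi\to 1>0$.
\item The case $\bar{d}_U=0$ is vacuous, since no nontrivial spread on $[0,\infty)$ has mean zero.
\end{itemize}
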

\begin{proof}
See \Cref{sec:app-D}.
\end{proof}

The effects of dispersion in applicants' search intensities is absent from the standard matching function, which does not allow for effects beyond the mean \citep[e.g.,][ch. 5]{Pissa2000}. More micro-based approaches \citep[e.g.,][]{C-AZ2005, AGV2006} have focused on setups that preclude dispersion.

Having established the effect on matching of higher (pure) dispersion in search intensities, the natural next question is what is the effect of higher overall search intensity, $\bar{d}_U$.  

\textbf{The effects of higher overall search intensity}. It turns out that a higher $\bar{d}_U$ has generally an ambiguous effect on matching. Specifically, we show that it can have a negative effect when it is associated with an increase in inequality, in the sense of an increase in the \textit{Gini coefficient} of the distribution $G$.  

We first show that a necessary condition for an increase in $\bar{d}_U$ to have a negative effect, is for it to distort the shape of the distribution $G$: if search intensities scale up proportionally, as in a ``perfectly inflationary'' increase, the matching process unambiguously improves.

\begin{proposition} Define the proportional transformation to $\bar{d}$, $\bar{d}_\rho \equiv \rho \bar{d}$, where $\rho >0$ is a constant, and denote the corresponding mean job-finding probability of \Cref{thm-1} as $f(\rho)$. Then
\[
f(\rho) > f(1) \iff \rho > 1
\]
\end{proposition}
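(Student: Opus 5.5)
Under the proportional transformation every applicant's intensity is multiplied by $\rho$, so the mean becomes $\rho\bar{d}_U$ and, for fixed $\theta$, the vacancy-side mean degree becomes $\rho\bar{d}_U/\theta$. By \Cref{thm-1},
\[
f(\rho) = 1 - \mathbb{E}\left[e^{-\rho \bar{d}\,\phi(\rho)}\right], \qquad \phi(\rho) = \frac{1-e^{-\rho\bar{d}_U/\theta}}{\rho\bar{d}_U/\theta}.
\]
Set $x \equiv \bar{d}_U/\theta$ and define
\[
h(\rho) \equiv \rho\,\phi(\rho) = \frac{1-e^{-\rho x}}{x}.
\]
Then
\[
f(\rho) = 1 - M_{\bar{d}}\big(-h(\rho)\big),
\]
so $\rho$ enters only through the scalar $h(\rho)$. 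The proof therefore splits into two monotonicity facts, which we then compose.

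The first fact is that $h$ is strictly increasing in $\rho$ whenever $x>0$, since
\[
h'(\rho) = e^{-\rho x} > 0 .
\]
Economically, more search by everyone raises each applicant's expected number of offers per unit of baseline intensity: congestion at vacancies, captured by $\phi$, falls more slowly than links rise. We can assume $\bar{d}_U>0$; otherwise $G$ is degenerate at $0$, both sides of the claim are trivial, and we note this edge case in passing.

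The second fact is that $t \mapsto M_{\bar{d}}(-t) = \mathbb{E}[e^{-t\bar{d}}]$ is strictly decreasing on $t\ge 0$ whenever $\bar{d}$ is not almost surely $0$. This holds because $e^{-t\bar{d}}$ is nonincreasing in $t$ pointwise, and strictly decreasing on the event $\{\bar{d}>0\}$, which has positive probability because $\bar{d}_U>0$. Strict monotonicity of the expectation then follows, either by differentiating under the integral sign, giving $-\mathbb{E}[\bar{d}e^{-t\bar{d}}]<0$ and justified by the finite mean and dominated convergence, or, more simply, by comparing pointwise for $t_1<t_2$.

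Composing the two facts, $f(\rho) = 1 - M_{\bar{d}}(-h(\rho))$ is strictly increasing in $\rho$. Hence $f(\rho) > f(1)$ if and only if $\rho > 1$, and both directions follow from strict monotonicity.

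The only delicate point is the algebraic observation that $\rho\phi(\rho)$ simplifies to $(1-e^{-\rho x})/x$, which is increasing even though $\phi$ itself is decreasing in $\rho$. Once that is in hand, the remaining steps are routine; the only care needed is the strictness argument and the exclusion of the degenerate zero-intensity case.
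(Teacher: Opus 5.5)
Your proof is correct and is essentially the paper's argument: the paper also notes that $\rho\bar{d}\,\phi(\rho)=\bar{d}\,\theta\bigl(1-e^{-\rho\bar{d}_U/\theta}\bigr)/\bar{d}_U$ is increasing in $\rho$, concludes that each $f_i$ is increasing, and integrates. Your treatment of strictness via $P(\bar{d}>0)>0$ is more careful than the paper's, whose pointwise strict claim fails at $\bar{d}_i=0$. One small correction: if $G$ is degenerate at $0$, the claim is not trivially true but false for $\rho>1$, since then $f\equiv 0$; so $\bar{d}_U>0$ is an implicit hypothesis of the statement, not a without-loss reduction.
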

\begin{proof}
See \Cref{sec:app-D}.
\end{proof}

Next, we apply formula (1) for the cases of the applicant search intensity distribution $G$ being degenerate, gamma, Pareto, and uniform. We study the monotonicity of the job-finding probability $f$ as we increase the value of the mean $\bar{d}_U$ in ways that correspond to (parametric) \textit{first-order stochastic dominance (FOSD)} shifts of the distribution. \Cref{tab-1} summarizes our findings, while full details are provided in \Cref{sec:app-D}.

\begin{table}[ht]
\centering
\caption{Summary of parametric FOSD shifts}\label{tab-1}
\begin{tabular}{|l|c|c|}
\hline
\textbf{Distribution $G$} & \hspace{.5em}\textbf{Gini as $\bar{d}_U \uparrow$}\hspace{.5em} & \textbf{Job-finding probability $f$ as $\bar{d}_U \uparrow$} \\
\hline\hline
Degenerate & constant & monotonically increasing  \\\hline
Gamma & constant & monotonically increasing  \\\hline
Pareto & increasing & inverted-U shape\textsuperscript{2}  \\\hline
Uniform\textsuperscript{1} &  &  \\
\hspace{1em}Shift of type 1\hspace{.3em} & constant & monotonically increasing\textsuperscript{2}  \\
\hspace{1em}Shift of type 2\hspace{.3em} & decreasing & monotonically increasing\textsuperscript{2}  \\
\hspace{1em}Shift of type 3\hspace{.3em} & increasing & inverted-U shape\textsuperscript{2}  \\
\hline\hline
\end{tabular}
\vspace{0.5em}
\begin{minipage}{0.9\textwidth}
\textsuperscript{1}For the uniform we can perform shifts of different type. 
\textsuperscript{2}Shown graphically. 
\end{minipage}
\end{table}

The standard matching function is monotonically increasing in mean search intensity \citep[e.g.,][]{Pissa2000, EMR2015}.\footnote{The same is true for the generalized matching function of \citet{MPS2018}.} In contrast, our analysis shows that, when the higher mean search intensity is associated with an increase in the Gini coefficient of search intensities, a higher $\bar{d}_U$ can correspond to a lower job-finding probability. As seen for the example of the Pareto distribution in Figure 2, the reduction can be substantial.

\begin{figure}[!ht]
\centering
  \includegraphics[width=.9\linewidth]{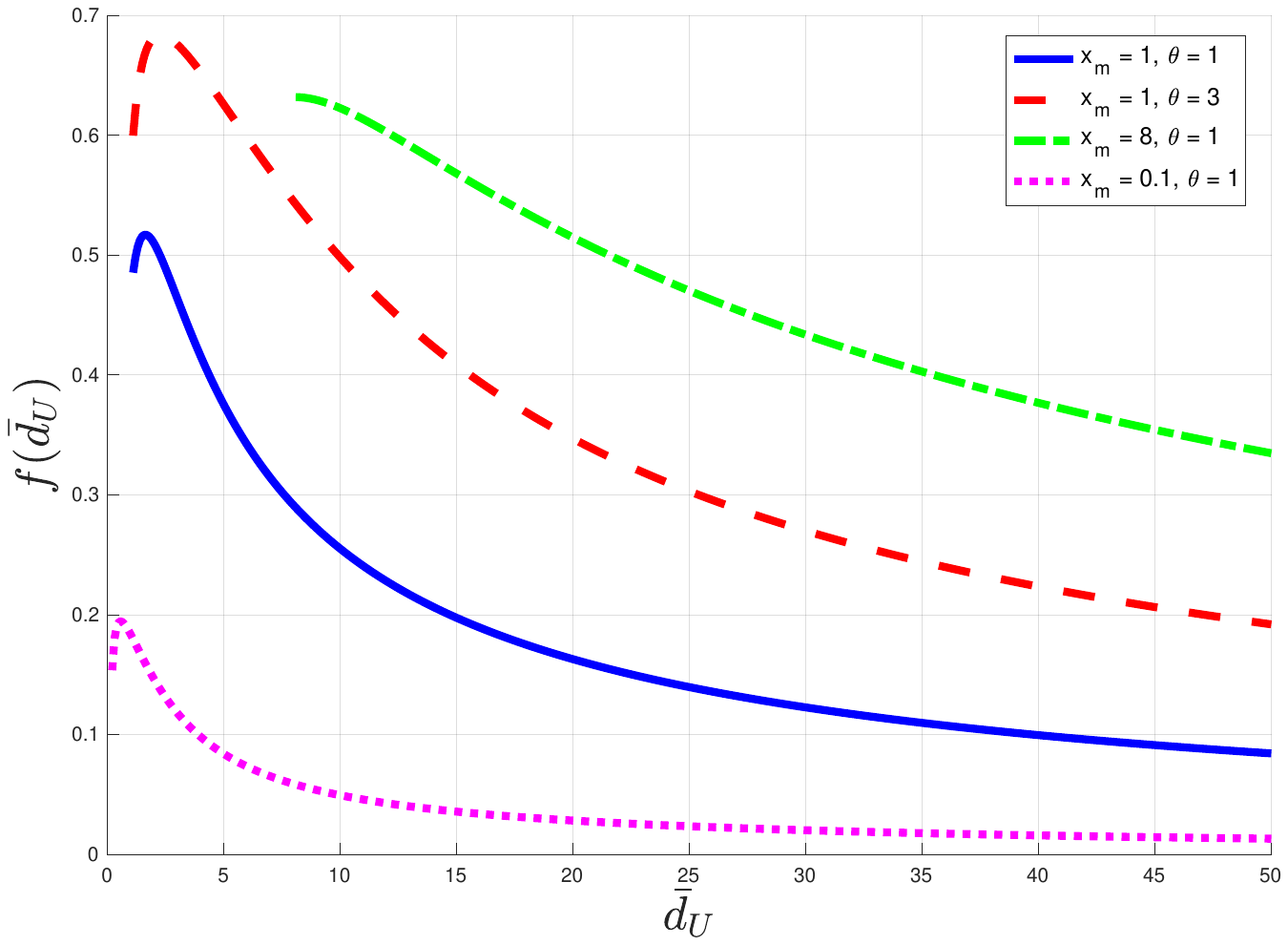}
\caption{Plot of job-finding probability when $G$ is Pareto for different pairs of $(x_m, \theta)$. $x_m$ is a scale parameter of the distribution, determining the left bound of its support; $\theta$ is the market tightness. Holding $x_m$ fixed, the shape parameter of the distribution $\alpha$ is varied from $\infty$ down to $1$, translating in $\bar{d}_U=x_m\left(1-\frac{1}{\alpha}\right)^{-1}$ varying from $x_m$ to $\infty$. As $\alpha \to \infty$, the Pareto becomes degenerate at $x_m$; so decreasing $\alpha$ (moving to the right of the figure) can be seen as a smooth departure from the benchmark Erdős-Rényi network.}
\end{figure} 
Such an inverted-U shape for the job-finding probability is featured in the setup of \citet{AGV2006}. It has appeared also in the social networks literature \citep{C-A2004, C-AZ2005}. Its existence beyond these fully symmetric cases and its connection to the Gini coefficient are novel.

\section{Aggregating vacancies to ``locations''}\label{sec:locations} 
All our analysis up to now has been at the applicant-vacancy level, without keeping track, for example, which vacancies belong to which firm, or perhaps to which recruiting agency. We are interested here in aggregating some of the vacancies to sets, or ``super-nodes.'' To conform with the prior literature we will be connecting to, such as \citet{Lagos2000}, \citet{Shimer2007} and \citet{Mortensen2009}, we will be referring to such sets of vacancies as ``locations.''

Denote the set of locations by $\mathcal{L}$, and assume there are $j=1..L$ of them. Each location has $0 \leq v_j < \infty$ vacancies, which we take to be exogenous; the total number of vacancies in the market is $V = \sum_{j=1}^L v_j$.

Applicants no longer apply to specific vacancies, but to locations. Analogously to before, we will assume that applicant-location linking probabilities are applicant-specific, i.e. $p_{ij}=p_i$. We will also assume that at each location $j$ all possible matches are formed, that is the probability for any $i$ connecting there to receive an offer is 
\[
\min\left\{1, \frac{v_j}{1+\tilde{d}_i}\right\}
\]
This assumption implies that within a location there are no coordination frictions, but across locations there are. In other words, the aggregation to locations, allows us to reduce the coordination frictions in the market. 

Note that for $v_j = 1, \forall j$ this setup features our previous setup as a special case. Before moving to the large market, let us see another special case of interest.

\textbf{The ``frictionless'' matching function}. When there are locations, the job-finding probability of any applicant is
\[
f_i = 1-\prod_{j=1}^L(1-p_i\phi_{j}(p_{-i}))
\]
where $\phi_{j}(p_{-i}) = \mathbb{E}_{\tilde{d}_i}\left[\min\left\{1, \frac{v_j}{1+\tilde{d}_i}\right\}\right]$, and $\tilde{d}_i\sim PB(\vec{p}_{-i})$.

When the whole market is a single location and the network is complete, i.e. $p_i = 1,\forall i$, the formula becomes
\[
f = \min\left\{1, \theta\right\}
\]
This job-finding probability corresponds to the ``frictionless'' matching function, in the sense that, by mere accounting, no more matches are possible than $U\cdot\min\left\{1, \theta\right\}$. \citet{Lagos2000} derives this matching function as an equilibrium object in an environment comparable to ours, when vacancies are optimally distributed across homogeneous locations.

\begin{definition}[Large market with locations] In the presence of locations, the large market is the limit of $U,L \to \infty$, while linking probabilities $p_i \to 0$. The limit value of market tightness $\frac{V}{U}$ is assumed to exist and be equal to $\theta$; the limit value of mean degree $p_i L$ is assumed to exist and be equal to $\bar{d}_i$, $\forall i$; the limit value of $\frac{\sum_i \bar{d}_i}{U}$ is assumed to exist and be equal to $\bar{d}_U$; the limit value of $\frac{V}{L}$ is also assumed to exist.
\end{definition}

The next result gives the analog of \Cref{thm-1}, in the setup with locations.

\begin{proposition}\label{thm-1_1} In a large market with locations, when applicant search intensities $\{\bar{d}_i\}_{i=1}^\infty$ are drawn i.i.d. from a distribution $G$, with a finite mean $\bar{d}_U$, support in a subset of $[0, \infty)$, and a moment-generating function (mgf) well-defined for non-positive values, and vacancies at locations $\{v_j\}_{j=1}^\infty$ are drawn i.i.d. from a distribution $H$ with support in $\mathbb{N}$ and mean $\bar{v}$, the mean job-finding probability is given by
\[
f = 1- M_{\bar{d}}(-\chi), 
\]
where $M_{\bar{d}}(t) \equiv \mathbb{E}[e^{\bar{d} t}]$ is the mgf of $\bar{d}_i, \forall i$; $\chi \equiv \mathbb{E}_{(v,\tilde{d})}\left[\min\left\{1, \frac{v}{1+\tilde{d}}\right\}\right]$, $\tilde{d} \sim Poisson(\bar{v}\cdot\bar{d}_U/\theta)$.
\end{proposition}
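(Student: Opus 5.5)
The argument follows the template of \Cref{thm-1}: first fix applicant $i$'s search intensity and the vacancy draws, take the large-market limit of $f_i$, and only then average over $G$.

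\textbf{Step 1: finite market.} Start from the finite-market formula with locations,
\[
f_i = 1-\prod_{j=1}^L\bigl(1-p_i\phi_j(\vec{p}_{-i})\bigr), \qquad \phi_j(\vec{p}_{-i}) = \mathbb{E}_{\tilde{d}_i}\left[\min\left\{1, \tfrac{v_j}{1+\tilde{d}_i}\right\}\right].
\]
Here $\tilde{d}_i\sim PB(\vec{p}_{-i})$. The distribution of $\tilde{d}_i$ does not depend on $j$, because the linking probabilities $p_{kj}=p_k$ are the same across locations. Hence, conditional on the $v_j$, the factors differ only through $v_j$.

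\textbf{Step 2: average over the vacancy draws.} Since the $v_j$ are i.i.d.\ draws from $H$, independent of links, taking expectations over $(v_j)$ gives
\[
\mathbb{E}[1-f_i]=\bigl(1-p_i\,\chi_L\bigr)^L, \qquad \chi_L\equiv \mathbb{E}_{(v,\tilde{d}_i)}\left[\min\left\{1,\tfrac{v}{1+\tilde{d}_i}\right\}\right],
\]
with $v\sim H$ independent of $\tilde{d}_i$.

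\textbf{Step 3: large-market limit.} Use $p_iL\to\bar{d}_i$ and $p_i\to 0$, so that $(1-p_i\chi_L)^L\to e^{-\bar{d}_i\chi}$, provided $\chi_L\to\chi$.

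This requires the limit of $PB(\vec{p}_{-i})$. The mean is $\sum_{k\neq i}p_k=\frac{1}{L}\sum_k \bar{d}_k \to \frac{U}{L}\bar{d}_U$, and $\frac{U}{L}=\frac{V/L}{V/U}\to \bar{v}/\theta$. The last step uses $V/L\to\bar{v}$, which follows from the law of large numbers for the i.i.d.\ $v_j$ and is consistent with the assumed existence of $\lim V/L$.

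Since $\max_k p_k\to 0$, the Poisson binomial converges to $Poisson(\bar{v}\bar{d}_U/\theta)$. This is Le Cam's inequality, which is the same argument as in the earlier lemma on vacancy degrees, where $\bar{d}_V=\bar{d}_U/\theta$.

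The integrand $\min\{1,v/(1+\tilde{d})\}$ is bounded in $[0,1]$. So convergence in distribution of $\tilde{d}_i$, together with independence from $v$, gives $\chi_L\to\chi$ by bounded convergence. The same bound justifies interchanging the limit with the product: write $(1-p_i\chi_L)^L=\exp\bigl(L\ln(1-p_i\chi_L)\bigr)$ and use $p_i\chi_L\le p_i\to 0$.

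This yields $f_i=1-e^{-\bar{d}_i\chi}$. Note that $\chi$ does not depend on $i$: removing a single applicant does not affect the limit.

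\textbf{Step 4: average over $G$.} Draw $\bar{d}_i$ i.i.d.\ from $G$. Then $f=\mathbb{E}_G[1-e^{-\bar{d}\chi}]=1-M_{\bar{d}}(-\chi)$. This is well defined because $\chi\ge 0$ and the mgf exists at non-positive arguments.

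\textbf{Main obstacle.} The delicate step is Step 3, establishing the mean $\bar{v}\bar{d}_U/\theta$ of the limiting Poisson. It requires justifying $\frac{1}{L}\sum_k\bar{d}_k\to \frac{U}{L}\bar{d}_U$, which needs the uniformity assumptions implicit in the large-market definition, and identifying $\lim V/L$ with $\bar{v}$. I would handle this exactly as in the proof of \Cref{thm-1}. A second point to check is the conditioning in Step 2: the $v_j$ are fixed for a given market and only averaged at the end, whereas Step 2 takes the expectation over $(v_j)$ inside a product. This is valid because the factors are independent across $j$ given the $v_j$, and the $v_j$ themselves are independent, so the expectation of the product equals the product of the expectations.
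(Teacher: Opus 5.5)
Your argument is correct, and its skeleton matches the paper's: the finite-market product over locations, the Poisson limit of $PB(\vec{p}_{-i})$ with mean $\bar{v}\bar{d}_U/\theta$ via $U/L=(V/L)/(V/U)$, the exponential limit $f_i=1-e^{-\bar{d}_i\chi}$, and a final average over $G$.

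You differ from the paper in when the location sizes are integrated out. The paper keeps the realized $v_j$ fixed. It takes logs and writes $\log(1-f_i)=-\bar{d}_i\sum_j\frac{\phi_j}{L}(1+h_j)$ with a Peano remainder, and defines $\chi$ as the limit of the empirical average $\frac{1}{L}\sum_j\phi_j$. Only then does it identify that limit with $\mathbb{E}_{(v,\tilde{d})}\bigl[\min\{1,v/(1+\tilde{d})\}\bigr]$, implicitly by a law of large numbers.

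You instead use independence of the $v_j$ (and of $v$ from the links) to factor the expectation of the product first. This gives the exact finite-market expression $(1-p_i\chi_L)^L$, which has the same form as in the homogeneous case, so the limit is immediate.

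Your route avoids a law of large numbers for the $L$-dependent array $\phi_j$; you need one only for $V/L\to\bar{v}$. It also delivers exactly the average over location draws that the stated mean requires. The paper's route, once made rigorous, gives a slightly stronger statement: the limit holds for a typical realization of the location sizes, not merely on average.
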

\begin{proof}
See \Cref{sec:app-E}.
\end{proof}
When $G$ is degenerate at $\bar{d}_U$, thus $\{\bar{d}_i\}_{i=1}^\infty$ are drawn from a Poisson with mean $\bar{d}_U$, and $H$ is a Poisson, the setup becomes that of \citet{Shimer2007} and \citet{Mortensen2009}. 

The next two results show that the effect of inequality in the sense of dispersion, as captured in \Cref{thm-2}, carries through also in the presence of locations. This is both for the distribution of degrees across applicants, and the distribution of vacancies across locations.

\begin{proposition} Suppose $G'$ is a mean-preserving spread of $G$, where $G, G'$ are applicant search intensity distributions satisfying the conditions of \Cref{thm-1_1}. Then, for the corresponding mean job-finding probabilities in the large market with locations, it holds that
\[
f' < f
\]
\end{proposition}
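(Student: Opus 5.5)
The plan is to reduce the claim to strict convexity, exactly as for \Cref{thm-2}. By \Cref{thm-1_1}, the mean job-finding probability is
\[
f = 1-\mathbb{E}_G\big[e^{-\chi \bar{d}}\big], \qquad f' = 1-\mathbb{E}_{G'}\big[e^{-\chi' \bar{d}}\big],
\]
with $\chi$ defined as in the statement and $\tilde{d} \sim Poisson(\bar{v}\cdot\bar{d}_U/\theta)$.

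The first step is to check that $\chi$ is the same under $G$ and $G'$. A mean-preserving spread leaves the mean $\bar{d}_U$ unchanged. The distribution $H$ of vacancies across locations is not affected, so $\bar{v}$ is unchanged, and $\theta$ is held fixed. Hence the Poisson parameter $\bar{v}\bar{d}_U/\theta$ of $\tilde{d}$ is unchanged, and therefore $\chi = \chi'$.

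This is where the locations structure matters: the competition term depends on $G$ only through its mean. That is inherited from \Cref{thm-1_1}, where the indirect degree at a location is Poisson with mean equal to that location's expected degree, and that degree is pinned down by $\bar{d}_U$ and $\bar{v}/\theta$.

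The second step is to compare the two expectations of the same function $\bar{d}\mapsto e^{-\chi\bar{d}}$. We need $\chi>0$, which holds whenever $H$ puts positive mass on $v\ge 1$ (needed for $\theta>0$). In that case the minimum term is positive with positive probability, and the map is strictly convex on $[0,\infty)$.

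By the Rothschild--Stiglitz characterization, $G'$ being a mean-preserving spread of $G$ gives $\mathbb{E}_{G'}[u(\bar{d})] \ge \mathbb{E}_G[u(\bar{d})]$ for every convex $u$. The inequality is strict for strictly convex $u$ when $G'\neq G$. Taking $u(\bar{d})=e^{-\chi\bar{d}}$ yields $M'_{\bar{d}}(-\chi) > M_{\bar{d}}(-\chi)$, and hence $f'<f$.

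The main obstacle is making the strictness rigorous without assuming densities. I would use the representation $\bar{d}' \stackrel{d}{=} \bar{d}+\varepsilon$ with $\mathbb{E}[\varepsilon\mid\bar{d}]=0$ and $\varepsilon$ not almost surely zero. Conditional strict Jensen then gives $\mathbb{E}[e^{-\chi(\bar{d}+\varepsilon)}\mid \bar{d}] > e^{-\chi\bar{d}}$ on the event where $\varepsilon$ is nondegenerate, and this event has positive probability. Integrability is not an issue, since $0<e^{-\chi\bar{d}}\le 1$ on the support. This is the same argument used in \Cref{thm-2}, with $\phi$ replaced by $\chi$.
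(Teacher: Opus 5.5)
Your proposal is correct and matches the paper's proof: the paper likewise observes that a mean-preserving spread leaves $\bar{d}_U$, and hence the distribution of $\tilde{d}$, unchanged, so with $H$ fixed $\chi$ is the same under $G$ and $G'$, and then concludes as in \Cref{thm-2} from the concavity of $1-e^{-\chi\bar{d}}$ in $\bar{d}$. Your explicit check that $\chi>0$ and your conditional strict-Jensen argument for the strict inequality fill in details the paper leaves implicit.
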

\begin{proof}
See \Cref{sec:app-E}.
\end{proof}
This result can compliment the findings on the compositional effects of the applicant pool as drivers of match efficacy \citep{BF2015, HS-W2018}.

\begin{proposition}\label{prop-MPS_H} Suppose $H'$ is a mean-preserving spread of $H$, where $H, H'$ are vacancy number distributions satisfying the conditions of \Cref{thm-1_1}. Then, for the corresponding mean job-finding probabilities in the large market with locations, it holds that
\[
f' \leq f
\]
\end{proposition}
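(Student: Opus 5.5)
Since $f = 1 - M_{\bar d}(-\chi)$ and $M_{\bar d}(t)=\mathbb{E}[e^{\bar d t}]$ is strictly decreasing in $-t$ (for $G$ not degenerate at $0$), $f$ is increasing in $\chi$. So the plan is to show that a mean-preserving spread of $H$ weakly lowers $\chi$, holding $G$ and therefore $\bar d_U$ fixed. The key observation is that the Poisson parameter of $\tilde d$, namely $\bar v\cdot \bar d_U/\theta$, depends on $H$ only through its mean $\bar v$. A mean-preserving spread leaves $\bar v$ unchanged, and by assumption leaves $\theta$ unchanged too. Hence the law of $\tilde d$ is unchanged under the spread.

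With that in hand, I will use the independence of $v$ and $\tilde d$ in the large market (which is implicit in the derivation of \Cref{thm-1_1}) to write
\[
\chi = \mathbb{E}_v\left[\psi(v)\right], \qquad \psi(v) \equiv \mathbb{E}_{\tilde d}\left[\min\left\{1, \frac{v}{1+\tilde d}\right\}\right].
\]
For each fixed $\tilde d$, the map $v \mapsto \min\{1, v/(1+\tilde d)\}$ is the minimum of a constant and a linear function. It is therefore concave on $[0,\infty)$, and hence concave on $\mathbb{N}$ in the discrete sense. Taking an expectation over $\tilde d$ preserves concavity, so $\psi$ is concave.

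By the standard characterization of mean-preserving spreads (Rothschild--Stiglitz; for distributions supported in $\mathbb{N}$, concave extension to piecewise-linear interpolation suffices), $\mathbb{E}_{H'}[\psi(v)] \le \mathbb{E}_{H}[\psi(v)]$. This gives $\chi' \le \chi$, and the monotonicity of $M_{\bar d}(-\cdot)$ yields $f' \le f$. The inequality is only weak because $\psi$ is piecewise linear. A spread that moves mass only within a region where every kink is avoided, for instance a spread of large $v$ values where $\min = 1$ for all relevant $\tilde d$ (with positive probability), need not strictly lower $\chi$. This explains why the statement has $\le$ rather than the strict inequality of \Cref{thm-2}.

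The main point to check carefully is that the distribution of $\tilde d$ really is invariant under the spread. Concretely, I must confirm from the construction in \Cref{thm-1_1} that the competitor count at a location depends on $H$ only through $\bar v$. This requires that, in the limit, the number of applicants linking to a location is Poisson with mean $\bar d_U L/U = \bar v \bar d_U/\theta$, independently of that location's own $v_j$. This holds because linking probabilities $p_i$ do not depend on $j$, so $v_j$ and the location's indirect degree are independent. Once this is in place, the rest is the routine concavity argument above.
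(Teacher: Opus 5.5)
Your proof is correct and is essentially the paper's. The paper conditions on $\tilde d$, applies the mean-preserving-spread inequality to the weakly concave map $v\mapsto\min\{1,v/(1+\tilde d)\}$ for each $\tilde d$, and then averages; you average over $\tilde d$ first to get a concave $\psi$, which is an immaterial reordering, though you usefully make explicit what the paper leaves tacit, namely that the law of $\tilde d$ depends on $H$ only through $\bar v$.

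One aside is wrong. Since $\tilde d$ is Poisson with positive mean, $\psi$ has a kink at every positive integer $n$ (its slope drops by $P(\tilde d=n-1)/n$ there), so there is no range of $v$ where $\min=1$ almost surely. Consequently, for integer-valued $H'\neq H$ with $\bar d_U>0$, the inequality is in fact strict; the weak inequality is conservative, not forced by the piecewise linearity.
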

\begin{proof}
See \Cref{sec:app-E}.
\end{proof}

Note that the random variable $\frac{v}{1+\tilde{d}}$ is the tightness of a (any) location, so any changes in the distribution of vacancies across locations maps into changes in the distribution of tightness. \Cref{prop-MPS_H} is thus a generalization of the dispersion effect in \citet{BF2015}. Variants of this result are found also in \citet{SSTV2014}  and \citet{HvR2020}; \citet{MR2018} conjecture that an analogous result underlies their environment in order to minimize geographic mismatch .

Empirically, it has indeed been documented that higher dispersion in the locations' tightness affects matching efficacy negatively \citep[e.g.,][]{SSTV2014, BF2015, MR2018, HvR2020}. The weak inequality in the proposition relates to the weak concavity coming from the $\min$ operator. This may be a contributing factor to why this type of dispersion has been found to be quantitatively limited in some cases \citep[e.g.,][]{SSTV2014, MR2018}

\section{Matching with heterogeneous job advertising}\label{sec:hetero-jobs} 
Starting with \citet{DFH2013}, and more recently \citet{GSV2018}, \citet{C-TGK2023}, \citet{MV2025}, the literature has highlighted the role of recruiting heterogeneity as a driver of match efficacy. In addition, in recent work, \citet{DavisSamaniego2} explicitly reject the distribution of applications per vacancy following a Poisson distribution in their dataset. 

With these facts in mind, in this section we switch the side of search heterogeneity, assuming link-formation probabilities are vacancy-specific, namely
\[
p_{ij} = p_j
\]

In this setup, $p_j$ is the probability of vacancy $j$ to attract an application by any given applicant. It can be interpreted as \textit{job-advertising intensity} and we will refer to it as such. To be able to compute the vacancy-filling probability, we also need to specify a protocol of how candidates choose among (potentially multiple) offers. Analogously to our foregoing analysis, we will assume candidates choose uniformly at random among multiple offers.\footnote{As in the Remark of \Cref{sec:setup}, we can assume each offer comes with a randomly drawn package of terms, and the applicant chooses the best offer.}

The mean vacancy degree in this case is given by $p_j U = \bar{d}_j$. The next result is the analog of \Cref{thm-1}, under job-advertising heterogeneity.

\begin{proposition}\label{thm-1_2} In a large market, when job-advertising intensities $\{\bar{d}_j\}_{j=1}^\infty$ are drawn i.i.d. from a distribution $\hat{G}$, with a finite mean $\bar{d}_V$, support in a subset of $[0, \infty)$, and a moment-generating function (mgf) well-defined for non-positive values, the mean vacancy-filling probability is given by
\[
q = \frac{1-e^{-\theta(1-M_{\bar{d}}(-1))}}{\theta}, \tag{2}
\]
where $M_{\bar{d}}(t) \equiv \mathbb{E}[e^{\bar{d} t}]$ is the mgf of $\bar{d}_j, \ \forall j$.
\end{proposition}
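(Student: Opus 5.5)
We prove \Cref{thm-1_2} by mirroring the route to \Cref{thm-1}, with the roles of the two sides swapped. First we work conditionally on the realized vector $\{p_j\}$ in the finite market. In the vacancy-specific specification $p_{ij}=p_j$, links of a given applicant $i$ are independent across vacancies, with probabilities $(p_1,\dots,p_V)$. Hence applicant degrees are i.i.d.\ Poisson binomial $PB(\vec p)$, with mean $\sum_j p_j$, and vacancy $j$'s degree is $Bin(U,p_j)$. This is the exact analog of \Cref{link-proba}.

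Second, we compute the probability that vacancy $j$ is filled. The vacancy is filled iff it has at least one applicant and its offer is accepted. Under the offer protocol, each applicant receives at most one offer per vacancy. Each vacancy makes exactly one offer if its degree is positive, to a uniformly chosen applicant among those linked to it. An applicant who receives offers picks one uniformly at random.

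\begin{itemize}
\item Condition on applicant $i$ being the recipient of $j$'s offer.
\item Then $j$ is filled with probability $\mathbb{E}[1/(1+O_i)]$, where $O_i$ is the number of other offers $i$ receives.
\item In the large market, $i$'s links to vacancies $k\neq j$ are asymptotically independent of the event that $i$ links to $j$ and wins $j$'s offer.
\item Vacancy $k$ sends $i$ an offer with probability $p_k\cdot\psi_k$, where $\psi_k=\mathbb{E}[1/(1+\tilde d)]$ with $\tilde d\sim Bin(U-1,p_k)$, which converges to $(1-e^{-\bar d_k})/\bar d_k$.
\item Hence $O_i$ is asymptotically a sum of independent rare Bernoullis with total mean $\sum_k p_k\,(1-e^{-\bar d_k})/\bar d_k=\frac1U\sum_k(1-e^{-\bar d_k})$.
\item By the law of large numbers over i.i.d.\ $\bar d_k\sim \hat G$, this converges to $\theta\,(1-M_{\bar d}(-1))$.
\item So $O_i\sim Poisson(\lambda)$ in the limit, with $\lambda\equiv\theta(1-M_{\bar d}(-1))$, by Poisson convergence as in the large-market lemma.
\end{itemize}

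The Poisson identity $\mathbb{E}[1/(1+O)]=(1-e^{-\lambda})/\lambda$ then gives
\[
q_j = \big(1-e^{-\bar d_j}\big)\,\frac{1-e^{-\lambda}}{\lambda}.
\]
Here $1-e^{-\bar d_j}$ is the probability that $j$ has at least one applicant; $\bar d_j$ enters only through it, since the identity of the recipient does not change the acceptance probability.

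Third, we average over $\bar d_j\sim\hat G$. We have $\mathbb{E}[1-e^{-\bar d_j}]=1-M_{\bar d}(-1)=\lambda/\theta$, so
\[
q=\frac{\lambda}{\theta}\cdot\frac{1-e^{-\lambda}}{\lambda}=\frac{1-e^{-\theta(1-M_{\bar d}(-1))}}{\theta},
\]
which is (2). As a consistency check, this equals $f/\theta$, where $f=1-e^{-\lambda}$ is the probability that an applicant receives at least one offer. This is the matching-count identity $Uf=Vq$.

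The main obstacle is justifying the asymptotic independence and Poisson limit in the second step. Conditioning on $i$ receiving $j$'s offer tilts $i$'s degree upward, and competitors at different vacancies overlap. I would handle this as in the proof of the large-market lemma: links are independent across pairs, each $p_k\to0$, and only the one link $ij$ is conditioned on. The conditional distribution of $O_i$ therefore differs from the unconditional one by terms vanishing as $p_k\to0$, and a Le Cam-type bound gives the Poisson limit. The mgf assumption on $\hat G$ guarantees $M_{\bar d}(-1)$ is finite and dominated convergence applies when averaging.
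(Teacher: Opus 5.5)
Your proposal is correct and follows essentially the same route as the paper. The paper also factors $q_j=\bigl(1-(1-p_j)^U\bigr)\,\mathbb{E}[1/(1+\tilde o)]$ with $\tilde o\sim PB(\vec z_{-j})$ and $z_k=p_k\,\mathbb{E}[1/(1+\tilde d_k)]$. It then passes to a Poisson limit with mean $\theta\,\mathbb{E}[1-e^{-\bar d}]$ via the law of large numbers and averages $1-e^{-\bar d_j}$ over $\hat G$.

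The independence you flag as the main obstacle is in fact exact, not asymptotic. The event that $i$ receives $j$'s offer depends only on column $(Y_{\cdot j})$ and $j$'s random choice of recipient. Offers to $i$ from vacancies $k\neq j$ depend only on the other columns and those vacancies' random choices. So $O_i\sim PB(\vec z_{-j})$ holds exactly in the finite market, and only the Poisson-binomial-to-Poisson step needs a limit.
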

\begin{proof}
See \Cref{sec:app-F}.
\end{proof}
Analogously to the setup of \Cref{sec:hetero-apps}, vacancy degrees in this case follow a mixed Poisson, $P(d = k) = \int_0^\infty \frac{\bar{d}^k}{k!}e^{-\bar{d}}d\hat{G}(\bar{d})$, and applicant degrees a standard Poisson with mean $\bar{d}_U$.

The only case when formulas (1) and (2) both apply is the Erdős-Rényi network. For this case we can confirm their consistency,\footnote{See for example \citet{Pissa2000} p. 7, or \citet{P-NW2017} p. 4.} namely
\begin{corollary} When $\hat{G}$ is degenerate at $\bar{d}_V$ and $G$ is degenerate at $\bar{d}_U=\theta\bar{d}_V$, the job-finding probability of (1) and the vacancy-filling probability of (2) satisfy
\[
f=\theta q
\]
\end{corollary}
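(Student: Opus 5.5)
The plan is to evaluate both closed-form expressions directly in the degenerate case and compare them. No new probabilistic argument is needed, since \Cref{thm-1} and \Cref{thm-1_2} are taken as given.

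\textbf{Job-finding side.} When $G$ is degenerate at $\bar{d}_U$, its mgf is $M_{\bar{d}}(t)=e^{\bar{d}_U t}$. Formula (1) then reads $f=1-e^{-\bar{d}_U\phi}$ with $\phi=\frac{1-e^{-\bar{d}_U/\theta}}{\bar{d}_U/\theta}$. Using $\bar{d}_U=\theta\bar{d}_V$ (the accounting identity of \Cref{lemma-acc-id}, imposed here by hypothesis), we get $\bar{d}_U/\theta=\bar{d}_V$. Hence
\[
\bar{d}_U\phi=\theta\bar{d}_V\cdot\frac{1-e^{-\bar{d}_V}}{\bar{d}_V}=\theta\left(1-e^{-\bar{d}_V}\right),
\]
and therefore $f=1-e^{-\theta(1-e^{-\bar{d}_V})}$.

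\textbf{Vacancy-filling side.} When $\hat{G}$ is degenerate at $\bar{d}_V$, its mgf evaluated at $-1$ is $M_{\bar{d}}(-1)=e^{-\bar{d}_V}$. Substituting into (2) gives
\[
q=\frac{1-e^{-\theta(1-e^{-\bar{d}_V})}}{\theta}.
\]
Multiplying by $\theta$ reproduces exactly the expression obtained for $f$, which is the claim $f=\theta q$.

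\textbf{Main obstacle.} The only step needing care is checking that the two degenerate specifications describe the same Erdős--Rényi network. In the first setup, $p_{ij}=p_i=\bar{d}_U/V$ for all $i$. In the second, $p_{ij}=p_j=\bar{d}_V/U$ for all $j$. These coincide exactly when $\bar{d}_U/V=\bar{d}_V/U$, i.e.\ $\bar{d}_U=\theta\bar{d}_V$, which is the stated hypothesis. So both formulas are computed on the same large-market limit. Once this is noted, the identity is a direct substitution, and it matches the standard accounting relation that matches per applicant times $U$ equal matches per vacancy times $V$.
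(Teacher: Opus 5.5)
Your proposal is correct and follows essentially the same route as the paper. Both substitute the degenerate mgfs into (1) and (2) and use $\bar{d}_U/\theta=\bar{d}_V$ to identify $\bar{d}_U\phi$ with $\theta(1-e^{-\bar{d}_V})$; the paper rewrites $q$ in terms of $\phi$, while you rewrite $f$ in terms of $\bar{d}_V$, which is the same computation run in the opposite direction.
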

\begin{proof}
See \Cref{sec:app-F}.
\end{proof}
Finally, we can show that the negative effect of inequality in the sense of dispersion, as captured in \Cref{thm-2}, holds on the side of vacancies as well.
\begin{proposition} Suppose $\hat{G}'$ is a mean-preserving spread of $\hat{G}$, where $\hat{G}$ and $\hat{G}'$ are job-advertising intensity distributions satisfying the conditions of \Cref{thm-1_2}. Then, for the corresponding mean vacancy-filling probabilities in the large market, it holds that
\[
q' < q
\]
\end{proposition}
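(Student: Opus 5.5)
The plan is to use formula (2) of \Cref{thm-1_2} and split the argument into two monotonicity steps. Write
\[
q = \Psi\big(1-M_{\bar{d}}(-1)\big), \qquad \Psi(x) \equiv \frac{1-e^{-\theta x}}{\theta},
\]
so the only dependence of $q$ on $\hat{G}$ is through the scalar $x \equiv 1-M_{\bar{d}}(-1) = 1-\mathbb{E}[e^{-\bar{d}}]$. This quantity is the probability that a vacancy receives at least one application, since vacancy degrees are mixed Poisson.

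\textbf{Step 1: $\Psi$ is strictly increasing.} We have $\Psi'(x) = e^{-\theta x} > 0$ for all $\theta > 0$. So it suffices to show that $x' < x$, that is, $\mathbb{E}_{\hat{G}'}[e^{-\bar{d}}] > \mathbb{E}_{\hat{G}}[e^{-\bar{d}}]$.

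\textbf{Step 2: the comparison of $x'$ and $x$.} The function $\bar{d} \mapsto e^{-\bar{d}}$ is strictly convex on $[0,\infty)$. Use the standard characterization of mean-preserving spreads as the convex order: $\hat{G}'$ is a mean-preserving spread of $\hat{G}$ iff $\mathbb{E}_{\hat{G}'}[u] \geq \mathbb{E}_{\hat{G}}[u]$ for every convex $u$ (Rothschild--Stiglitz). Equivalently, write $\bar{d}' \stackrel{d}{=} \bar{d} + \varepsilon$ with $\mathbb{E}[\varepsilon \mid \bar{d}] = 0$ and apply conditional Jensen. Both expectations are finite because the mgf is assumed well-defined at $t=-1$.

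The main obstacle is strictness. Presumably Theorem~\ref{thm-2} was handled the same way, and I would mirror that argument. Since $\hat{G}' \neq \hat{G}$, the noise $\varepsilon$ is not almost surely zero, so on a set of positive probability the conditional distribution of $\bar{d}'$ given $\bar{d}$ is nondegenerate. Strict convexity of $e^{-\bar{d}}$ then gives the strict conditional Jensen inequality on that set, and hence a strict inequality in expectation. Combining this with Step 1 gives $q' = \Psi(x') < \Psi(x) = q$.
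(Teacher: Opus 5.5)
Your proof is correct and follows essentially the paper's route: the paper's one-line proof likewise rests on $1-e^{-\bar d}$ being increasing and concave, so an MPS lowers $1-M_{\bar d}(-1)$ and therefore lowers $q$. Your Step~1 (that $q$ in (2) is strictly increasing in $1-M_{\bar d}(-1)$) and your strict-Jensen argument for strictness fill in what the paper's ``entirely analogous to \Cref{thm-2}'' leaves implicit. This matters because, unlike $\phi$ in \Cref{thm-2}, the acceptance term $\psi$ does change under the spread, so the argument has to go through the closed form, as yours does.
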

\begin{proof}
See \Cref{sec:app-F}.
\end{proof}

\section{Concluding remarks}\label{sec:conclusion}
This paper establishes the significance of the applicant-vacancy (random) network as an intermediate, granular object for opening the ``black box'' of the complex matching process. How these networks look in the data, how they are formed, how they evolve over the business cycle, how they respond to policy, remain to be studied systematically.


We have already mentioned the Pareto distribution as a candidate distribution from which search intensities can be drawn. Given the growing literature on the influence of social networks on labor-market outcomes \citep[see, for example, the review of][]{LB2016}, it is worth noting that such fat-tailed degree distributions can indeed be outcomes of ``preferential attachment'' network models \citep[e.g.,][ch. 5]{Jackson2010} relating to people's popularity.

One dimension of structure that we have not incorporated into our analysis is the notion of ``closeness'' between vacancies (or ``locations''), where ``distance'' may refer to any relevant dimension (e.g. sector, geography, shared social ties). For any dimension, one would expect correlation patterns to emerge among vacancies that are ``close'' to each other. Suppose applicant $i$ connects to two jobs---say $j$ and $j'$. Conditional on applicant $i'$ connecting to job $j$, they may also be more likely to connect to job $j'$.\footnote{Relatedly, in the case of ``locations,'' we abstract from the fact that the number of vacancies per location will plausibly correlate positively with job-advertising intensity at that location.} 

This relates to the fundamental notion of clustering (or transitivity) in the network literature \citep[e.g.,][]{Newman2003}, and it is absent from our random network structures, for which the probabilities of each link are independent. We think this is interesting to refine in future work; such a setup could exploit more fully the generality of the setup of $p_{ij}$ we laid out in the beginning. The theoretical challenge is non-negligible, as network models featuring clustering are less well studied in the literature. Apart from its realism, clustering is expected to matter quantitatively, as we would expect coordination failures be exacerbated in its presence: To put it simply, clustering implies that the same people compete for the same jobs. 

Networks can also guide refined welfare types of exercises,\footnote{Finding whether constrained efficiency is achieved is perhaps the most classic exercise in the search literature \citep[e.g.,][]{Moen1997, Shi2005, AGV2006, Kircher2009}.} where the planner takes into account the whole network of connections, thus finding the constrained optimum by eliminating only coordination frictions. This is the ``max-flow'' problem, a classic problem in the transportation networks literature. The max-flow problem, contrary to our interest in this paper, requires coordination in the assignment of matches. It is precisely the problem \citet{GH2017} show how to solve analytically.

Regarding how offers are made, we made the modeling choice that each firm makes an offer to a single---the best---candidate when qualities are drawn from the same distribution for all applicants. Such formalization precludes, firstly, the possibility of vertical heterogeneity across applicants, and secondly, the possibility that a firm does not make any offer, or that they extend multiple rounds of offers, when their preferred candidates reject their offer. 

Our environment shares both these aspects with much of the theoretical literature.\footnote{For a comprehensive (guided) tour to the directed search literature see \citet{Wetal2021}.} A notable exception regarding the second aspect is \citet{Kircher2009}, who makes the polar assumption of a vacancy being offered to (potentially) all applicants, if earlier offers are rejected.\footnote{Allowing for potentially all rounds of offers is equivalent to assuming the matching outcome is ``stable.''} Arguably, neither assumption is superior a priori: How far down their applicant list firms go, or even whether they extend any offer in the first place, is an inherently dynamic decision, balancing the quality of their received applications today and the qualities they expect to receive tomorrow. We have thus little to say on it in a static environment. An analogous argument applies to whether applicants accept any of the offers they receive.

Indeed such endogenous acceptance/rejection cutoffs have been shown to matter in the dynamic setups of \citet{Petr2014}, \citet{Wolt2018} and \citet{MM2020} to explain both cyclical and secular patterns in labor market data. \citet{BF2015} and \citet{HS-W2018},  discussed earlier, document the relevance of vertical heterogeneity when applicants differ in their degree of  hirability and locations (sub-markets) in their degree of selectivity.

Granted the empirical relevance of these aspects, our intention was not to create a setup that captures all dimensions of reality, rather to uncover the implications of search patterns---as these are embedded in the random network---for the  matching function. This paper offers a minimal, novel, and useful setup to do precisely that. In this sense, random networks offer a natural formalization of what \citet{Stigler1961, Stigler1962} termed ``search.''

Let us conclude by noting that even though our terminology and most of our references are for the labor market, our analysis is not context-specific, and could be applied to any market where search frictions have been argued to play a role.\footnote{Our earlier comment on the plausibility of the Pareto is perhaps more directly applicable in a consumption context, where ``applicants'' are buyers and their search intensity depends on their income.} It just seemed most natural to work in this context, which has almost become inseparable from the matching function. 


\begin{appendices}
\crefalias{section}{appsec}
\setcounter{theorem}{0}
\setcounter{lemma}{0}
\setcounter{proposition}{0}
\setcounter{corollary}{0}
\section{Introduction}\label{sec:app-A} 
\textbf{Elementary properties of the Poisson binomial distribution}. A sum of $n$ independent Bernoulli trials, with (potentially) different probabilities of ``success'' $p_i$ each, is a random variable $S_n$ said to follow the “Poisson binomial” distribution. The binomial is thus the special case of the Poisson binomial, when all trials have the same probability of success ($p_i=p,\ \forall i=1..n$). As with the binomial, $S_n$ counts the number of successes among the $n$ trials. 

It is parameterized by the success probabilities of the $n$ trials $\vec{p}=(p_1,p_2,..,p_n)$; we will denote it by $PB(\vec{p})$; its mean is $\sum_{i=1}^n p_i$; when $n\to \infty$ (appropriately), the Poisson binomial converges (in distribution) to a Poisson with the same mean \citep[e.g.,][p. 282]{Feller}. 

\section{The setup}\label{sec:app-B} 
\begin{remark*} Suppose a vacancy $j$ receives $d_j$ applications, and each is associated with a measure of quality, drawn independently from some underlying distribution. Then, the probability that any given applicant is the most qualified for the position is $\frac{1}{d_j}$.
\end{remark*}
\begin{proof} Denote the applicant qualities corresponding to the $d_j$ applicants by $A_1, A_2, ..., A_{d_j}$. The probability that applicant $1$ is offered the job, is the probability their quality is the highest, i.e
\[
Pr\{A_1 > A_2, A_3,..,A_{d_j}\}
\]
Since these are i.i.d, by symmetry it follows that
\begin{align*}
Pr\{A_1 > A_2, A_3,..,A_{d_j}\} &= Pr\{A_2 > A_1, A_3,..,A_{d_j}\}\\
								 &= Pr\{A_3 > A_1, A_2,..,A_{d_j}\}\\
								 &= ... \\
								 &= Pr\{A_{d_j} > A_1, A_2, A_3,..,A_{d_j-1}\}
\end{align*}
But these are mutually exclusive and exhaustive events, thus they sum to $1$. As these are $d_j$ events, it follows that 
\[
Pr\{A_i > A_1, A_2,.., A_{i-1}, A_{i+1}, A_{d_j}\} = \frac{1}{d_j}, \ \forall i=1..d_j
\]
\end{proof}
\section{Matching with heterogeneous search intensity}\label{sec:app-C} 
\begin{lemma} When linking probabilities are given by $p_{ij}=p_i$, applicant degrees $d_i \equiv \sum_j Y_{ij}$ are independent and each follows a binomial distribution, $d_i \sim Bin(V, p_i)$. Vacancy degrees $d_j \equiv \sum_i Y_{ij}$ are i.i.d. following a Poisson binomial distribution,\footnote{A random variable is said to follow the ``Poisson binomial'' distribution when it is a sum of independent Bernoulli trials with (potentially) different probabilities of success; it is a generalization of the binomial distribution. We give its relevant properties in \Cref{sec:app-A}.} $d_j \sim PB(\vec{p})$.
\end{lemma}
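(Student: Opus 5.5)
The plan is to read both claims directly off the independence of the link indicators $Y_{ij}$. We organize the $U\times V$ array $(Y_{ij})$ by rows (applicants) and by columns (vacancies), and use that, under $p_{ij}=p_i$, every entry in row $i$ has the same success probability $p_i$.

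For applicants, $d_i=\sum_{j=1}^V Y_{ij}$ is a sum of $V$ independent Bernoulli variables, each with success probability $p_i$. By definition this is $Bin(V,p_i)$. The degrees $d_1,\dots,d_U$ are functions of the disjoint sets of variables $\{Y_{1j}\}_j,\dots,\{Y_{Uj}\}_j$. Since all $Y_{ij}$ are mutually independent, these functions are independent by the standard grouping argument for functions of disjoint blocks of independent variables.

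For vacancies, $d_j=\sum_{i=1}^U Y_{ij}$ is a sum of $U$ independent Bernoulli variables with success probabilities $p_1,\dots,p_U$. This is exactly $PB(\vec p)$ as defined in \Cref{sec:app-A}. The same grouping argument, now applied to the column blocks $\{Y_{ij}\}_i$, gives mutual independence of $d_1,\dots,d_V$. Because every column has the same vector of success probabilities $\vec p$, the $d_j$ are identically distributed, hence i.i.d.

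The only point needing care is the independence claim, since the row degrees and the column degrees are not independent of each other. The lemma asserts independence only within each side, and the disjoint-block argument covers that case. I would state this explicitly and note that the applicant degrees are not identically distributed unless all $p_i$ coincide. That is why the lemma says "independent" for applicants but "i.i.d." for vacancies.
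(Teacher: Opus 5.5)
Your proposal is correct and follows essentially the same argument as the paper. Row sums of independent Bernoullis with common probability $p_i$ give $Bin(V,p_i)$, column sums give $PB(\vec{p})$ with the same distribution for every $j$, and independence on each side holds because the degrees depend on disjoint sets of the $Y_{ij}$; your disjoint-block grouping argument also delivers mutual independence, slightly sharper than the paper's pairwise phrasing.
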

\begin{proof} Any two $d_i, d_{i'}$ for $i\neq i'$ are independent as they don't share any of the independent Bernoulli random variables they sum over. The same applies to any two $d_j, d_{j'}$ for $j\neq j'$.

By assumption, fixing $i$, all $Y_{ij}, \forall j=1..V$ have the same probability of success, $p_i$. It follows that $d_i \sim Bin(V, p_i)$; this is the case for all $i=1..U$.

Also by assumption, fixing $j$, each $Y_{ij}, \forall i=1..U$ has a probability of success, $p_i$. It follows that $d_j \sim PB(\vec{p})$; this is the case for all $j=1..V$.
\end{proof}


\begin{corollary}When linking probabilities are given by $p_{ij}=p_i$, applicant $i$'s indirect degrees at any vacancy are i.i.d. following the Poisson binomial distribution $PB(\vec{p}_{-i})$, where $\vec{p}_{-i} = (p_1, p_2,...,p_{i-1},p_{i+1},...,p_U)$. We will denote $\tilde{d}_{i} \equiv \sum_{k \neq i} Y_{kj}$, $\forall j$.
\end{corollary}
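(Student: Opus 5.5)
We need to show that, conditional on $Y_{ij}=1$, the indirect degree $\tilde d_i=\sum_{k\neq i}Y_{kj}$ has distribution $PB(\vec p_{-i})$ at every vacancy $j$, and that these indirect degrees are independent across vacancies and identically distributed. The argument reuses the independence structure from the proof of \Cref{link-proba}.

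\textbf{Step 1: conditioning on $Y_{ij}=1$ has no effect.} Fix $j$. The variable $\sum_{k\neq i}Y_{kj}$ is a function of the Bernoullis $\{Y_{kj}\}_{k\neq i}$ only. By assumption these are independent of $Y_{ij}$. Hence, provided $p_i>0$ so that the conditioning event has positive probability,
\[
P\Big(\textstyle\sum_{k\neq i}Y_{kj}=m \,\Big|\, Y_{ij}=1\Big)=P\Big(\textstyle\sum_{k\neq i}Y_{kj}=m\Big).
\]

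\textbf{Step 2: identify the distribution.} The right-hand side is the law of a sum of $U-1$ independent Bernoulli variables. Since $p_{kj}=p_k$, their success probabilities are $p_1,\dots,p_{i-1},p_{i+1},\dots,p_U$. By the definition in \Cref{sec:app-A}, this sum is $PB(\vec p_{-i})$. Because this law does not depend on $j$, the indirect degrees at all vacancies are identically distributed. This justifies the notation $\tilde d_i$ without a $j$ index.

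\textbf{Step 3: independence across vacancies.} Take $j\neq j'$. The collections $\{Y_{kj}\}_k$ and $\{Y_{kj'}\}_k$ are disjoint sets of independent Bernoullis, as already used in \Cref{link-proba}. Therefore $\sum_{k\neq i}Y_{kj}$ and $\sum_{k\neq i}Y_{kj'}$ are independent. They remain independent after conditioning on $\{Y_{ij}=1,Y_{ij'}=1\}$, for two reasons. First, the conditioning variables are independent of the summed variables. Second, $Y_{ij}$ and $Y_{ij'}$ are themselves independent. So for any set $J$ of vacancies, the joint conditional law given $\{Y_{ij}=1\ \forall j\in J\}$ factorizes into the product of the marginals from Step 2.

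\textbf{Main obstacle.} The only real subtlety is making precise what ``i.i.d.\ conditional on linking'' means when the conditioning event varies with $j$. I would handle this with the factorization in Step 3, conditioning on the full set of vacancies to which $i$ links. The remaining steps are immediate.
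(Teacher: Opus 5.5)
Your proof is correct and follows the same route as the paper. The paper simply invokes the argument of Lemma 1 (sums over disjoint sets of independent Bernoullis with success probabilities $p_k$, $k\neq i$) to get both the $PB(\vec{p}_{-i})$ law and independence across vacancies; yours is more careful in that it explicitly justifies why conditioning on $Y_{ij}=1$, or on $i$'s whole neighborhood, leaves the joint law of $\{Y_{kj}\}_{k\neq i}$ unchanged, a step the paper leaves implicit.
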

\begin{proof}Conditional on linking to a vacancy $j$, the applicant's indirect degree at that vacancy is $\sum_{k \neq i} Y_{kj}$. By the same argument as in Lemma 1, this random variable follows a Poisson binomial $PB(\vec{p}_{-i})$. This is the case for all $j=1..V$, thus $\tilde{d}_{i} \sim PB(\vec{p}_{-i})$. Independence $\forall j$ also follows by the same argument as in Lemma 1.
\end{proof}

\begin{lemma}When linking probabilities are given by $p_{ij}=p_i$, applicant $i$'s job-finding probability is given by
\[
f_i = 1 - \big(1-p_i\phi\big(\vec{p}_{-i}\big)\big)^V
\]
where $\phi\big(\vec{p}_{-i}\big) \equiv \mathbb{E}\left[\frac{1}{1+\tilde{d}_i}\right]$, and $\tilde{d}_i \sim PB(\vec{p}_{-i})$.
\end{lemma}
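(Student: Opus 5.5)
The plan is to compute the complementary probability: applicant $i$ receives no offer from any of the $V$ vacancies. For a fixed vacancy $j$, let $O_{ij}$ be the event that $i$ receives an offer from $j$. By the offer protocol (the Remark), $j$ makes one offer, uniformly at random among its $d_j$ applicants. So $O_{ij}$ happens exactly when $Y_{ij}=1$ and $i$ is selected among the $1+\tilde{d}_i$ applicants to $j$, where $\tilde{d}_i=\sum_{k\neq i}Y_{kj}$.

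\textbf{Step 1: one vacancy.} I will condition on $Y_{ij}$ and on the competitors' links. Since $Y_{ij}$ is independent of $\{Y_{kj}\}_{k\neq i}$,
\[
\Pr(O_{ij}) = p_i\,\mathbb{E}\left[\frac{1}{1+\tilde{d}_i}\right] = p_i\,\phi(\vec{p}_{-i}).
\]
Here $\tilde{d}_i\sim PB(\vec{p}_{-i})$ by the Corollary to \Cref{link-proba}. Conditional on $Y_{ij}=1$, the law of $\tilde{d}_i$ is unchanged, again by independence.

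\textbf{Step 2: across vacancies.} Next I need the events $O_{ij}$, $j=1,\dots,V$, to be mutually independent. The event $O_{ij}$ is a function of the column $(Y_{kj})_{k}$ and of vacancy $j$'s selection randomisation (equivalently, its quality draws). Distinct columns share no Bernoulli variables. Quality draws are independent across vacancies, as stated in the Remark ("drawn independently at each vacancy"). Hence the $O_{ij}$ are independent across $j$, and by Step 1 each has probability $p_i\phi(\vec{p}_{-i})$, which does not depend on $j$.

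\textbf{Step 3: combine.} Applicant $i$ finds a job iff at least one $O_{ij}$ occurs. Independence gives
\[
f_i = 1-\prod_{j=1}^V\big(1-p_i\phi(\vec{p}_{-i})\big) = 1-\big(1-p_i\phi(\vec{p}_{-i})\big)^V .
\]

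The only delicate point is the interpretation of "job-finding". I take it to mean receiving at least one offer, consistent with the text's reading of the formula as "1 minus the probability of receiving no offers". Applicant-side acceptance among multiple offers then does not enter, since the applicant accepts one of them.

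The main thing to justify carefully is the independence in Step 2. Offers from different vacancies to $i$ are computed without any sequential rejection or re-offering; the paper explicitly excludes firms going down their list. Each vacancy's decision therefore depends only on its own column and its own draws, so independence holds as argued.
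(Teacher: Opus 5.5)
Your argument is correct, and it organises the computation differently from the appendix proof. The paper conditions on applicant $i$'s neighbourhood $N_i$. It writes the no-offer probability as $\mathbb{E}_{N_i}\bigl[\prod_{j\in N_i}\bigl(1-\mathbb{E}\,\tfrac{1}{1+\tilde d_{ij}}\bigr)\bigr]$, using that indirect degrees are i.i.d.\ across vacancies and independent of $i$'s own links. It then collapses this to $\sum_k P(d_i=k)(1-\phi)^k$ and evaluates the binomial probability generating function at $1-\phi$.

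You instead absorb the link indicator $Y_{ij}$ into the per-vacancy event. The indicators of $O_{ij}$ are then i.i.d.\ Bernoulli$(p_i\phi)$ across $j$, and the product is immediate. The paper's pgf step is just this product re-expanded by degree, and your route is exactly the heuristic the paper gives right after the lemma.

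Your version has three advantages. It is shorter. It states explicitly that the selection randomness must be independent across vacancies and of the network; the paper uses this tacitly when it writes the conditional no-offer probability as a product over $N_i$. And it extends verbatim to vacancy-dependent competition, as in the locations formula $f_i=1-\prod_j(1-p_i\phi_j)$.

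The paper's version instead yields the representation $f_i=1-\mathbb{E}\bigl[(1-\phi)^{d_i}\bigr]$, one minus the degree pgf at $1-\phi$. This shows that only the degree distribution and $\phi$ matter. For Poisson or mixed-Poisson degrees it gives the large-market formulas $1-e^{-\bar d_i\phi}$ and $1-M_{\bar d}(-\phi)$ directly.
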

\begin{proof} To give the proof as clean as possible, it is useful to define the notion of a node's neighborhood: Applicant $i$'s neighborhood, $N_i$, is the set of vacancies the applicant links to.  

We will also use the notation $\tilde{d}_{ij}$ to explicitly denote applicant $i$'s indirect degree at a particular vacancy $j$, that is $\tilde{d}_{ij} \equiv \sum_{k\neq i}Y_{kj}$.
\begin{align*}
f_i &= 1 - \mathbb{E}_{N_i}\left\{\mathbb{E}_{(\tilde{d}_{ij})_{j\in N_i}} \left\{\prod_{j\in N_i} \left(1 - \frac{1}{1+\tilde{d}_{ij}}\right)\right\}\right\}\\
	&= 1 - \sum_{N_i}P(N_i)\left\{\mathbb{E}_{(\tilde{d}_{ij})_{j\in N_i}} \left\{\prod_{j\in N_i} \left(1 - \frac{1}{1+\tilde{d}_{ij}}\right)\right\}\right\}\\
	&= 1 - \sum_{N_i}P(N_i) \left\{\prod_{j\in N_i} \left(1 - \mathbb{E}_{\tilde{d}_{ij}}\frac{1}{1+\tilde{d}_{ij}}\right) \right\}\\	
	&= 1 - \sum_{k=0}^V P\big(|N_i| = k\big) \left(1 - \mathbb{E}_{\tilde{d}_{i}}\left[\frac{1}{1+\tilde{d}_i}\right]\right)^k\\
	&= 1 - \sum_{k=0}^V P\big(d_i = k\big) \left(1 - \phi(\vec{p}_{-i})\right)^k \tag{**}
\end{align*}
The 1st line is notation for the definition of $f_i$; $P(N_i)$ in the 2nd line denotes the probability that $i$'s neighborhood is the set $N_i$; the 3rd line inter-changes the expectation and product operators, following from the independence of indirect degrees at all vacancies (Corollary 1); the fourth line follows from the fact that indirect degrees across vacancies are identically distributed (Corollary 1), hence we drop the explicit $\tilde{d}_{ij}$ notation and denote $\tilde{d}_{ij} = \tilde{d}_{i}$, $\forall j$; the last line follows from the observation that the size of an applicant's neighborhood is their degree, and denotes $\phi(\vec{p}_{-i}) \equiv \mathbb{E}\left[\frac{1}{1+\tilde{d}_i}\right]$.

From Corollary 1 we know $\tilde{d}_i \sim PB(\vec{p}_{-i})$ so we can evaluate $\phi(\vec{p}_{-i})$, and from Lemma 1 we know that $d_i \sim Bin(V, p_i)$ hence we know what $P\big(d_i = k\big)$ are. Finally, we recognize that (**) is $1$ minus $d_i$'s probability generating function (pgf) evaluated at $1-\phi(\vec{p}_{-i})$. The binomial has a known pgf, which, evaluated at $1-\phi(\vec{p}_{-i})$ yields
\begin{align*}
f_i &= 1-\big(1-p_i + p_i\big(1-\phi(\vec{p}_{-i})\big)\big)^V\\
    &= 1-\big(1-p_i\phi(\vec{p}_{-i})\big)^V
\end{align*}
\end{proof}

\begin{corollary}In the special case when linking probabilities are given by $p_i = 1, \forall i$, the job-finding probability is given by
\[
f = 1 - \left(1-\frac{1}{U}\right)^V
\]
\end{corollary}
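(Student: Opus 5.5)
The plan is to specialize \Cref{f_i-small} (restated just above in \Cref{sec:app-C}) to $p_i = 1$ for all $i$, so the only work is evaluating $\phi(\vec{p}_{-i})$ in this degenerate case. With $p_k = 1$ for every $k \neq i$, each Bernoulli variable $Y_{kj}$ equals $1$ almost surely. Hence the Poisson binomial $PB(\vec{p}_{-i})$ of Corollary 1 collapses to a point mass, and $\tilde{d}_i = U-1$ with probability one. Directly,
\[
\phi(\vec{p}_{-i}) = \mathbb{E}\left[\frac{1}{1+\tilde{d}_i}\right] = \frac{1}{1+(U-1)} = \frac{1}{U}.
\]

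Next, substitute $p_i = 1$ and $\phi = 1/U$ into $f_i = 1 - \big(1-p_i\phi(\vec{p}_{-i})\big)^V$, which gives $f_i = 1-(1-1/U)^V$. This does not depend on $i$, so every applicant has the same job-finding probability $f$, as stated.

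As a sanity check, I would also trace the derivation (**): $d_i \sim Bin(V,1)$ is degenerate at $V$, so the pgf evaluated at $1-\phi$ is simply $(1-1/U)^V$, which agrees with the formula above.

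There is no real obstacle here. The only point needing care is that the Poisson binomial remains well-defined when all success probabilities equal $1$ (it is then degenerate). Applying \Cref{f_i-small} at this boundary is legitimate because that lemma's proof uses only independence of the $Y_{ij}$ and the pgf of the binomial, and both remain valid when $p=1$.
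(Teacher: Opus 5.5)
Your proposal is correct and follows the paper's own proof: specialize the finite-market job-finding formula $f_i = 1-(1-p_i\phi(\vec{p}_{-i}))^V$, observe that $\tilde{d}_i = U-1$ almost surely so $\phi = 1/U$, and substitute $p_i=1$. Your pgf sanity check and the remark that the formula remains valid at the boundary $p_i=1$ are sound additions, though the paper does not spell them out.
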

\begin{proof}When $p_i=1$, each applicant links to all vacancies. Then $\tilde{d}_i = U-1$, and $\phi=\frac{1}{U}$, $\forall i$. It follows that the formula of Lemma 2 becomes $f = 1 - \left(1-\frac{1}{U}\right)^V$.
\end{proof}

\begin{lemma}The market-level average of mean applicant degrees $\bar{d}_U \equiv \frac{\sum_i \bar{d}_i}{U}$, the vacancies' mean degree $\bar{d}_V$, and the market tightness $\theta = \frac{V}{U}$, satisfy $\bar{d}_U = \theta \bar{d}_V$.
\end{lemma}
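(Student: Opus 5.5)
The identity $\bar{d}_U=\theta\bar{d}_V$ is a double-counting of the expected number of links $\mathbb{E}\big[\sum_{i}\sum_{j} Y_{ij}\big]$ in the network. We compute this quantity twice, once summing rows first and once summing columns first, and then divide by $U$.

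First, sum over applicants. By linearity of expectation and Lemma~1,
\[
\mathbb{E}\Big[\sum_i\sum_j Y_{ij}\Big]=\sum_i \mathbb{E}[d_i]=\sum_i p_iV=\sum_i \bar{d}_i = U\bar{d}_U,
\]
where the last equality is the definition $\bar{d}_U\equiv\frac{\sum_i\bar{d}_i}{U}$.

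Second, sum over vacancies. Each $d_j\sim PB(\vec{p})$ has the common mean $\bar{d}_V=\sum_i p_i$, so
\[
\mathbb{E}\Big[\sum_j\sum_i Y_{ij}\Big]=\sum_j \mathbb{E}[d_j]=V\bar{d}_V .
\]

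Equating the two expressions gives $U\bar{d}_U=V\bar{d}_V$. Dividing by $U$ yields $\bar{d}_U=\frac{V}{U}\bar{d}_V=\theta\bar{d}_V$.

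Two remarks complete the argument. In the large market, the same identity holds in the limit, since both sides are products of quantities whose limits are assumed to exist. For general $p_{ij}$, the argument goes through verbatim with $\sum_{i,j}p_{ij}$ in place of $V\sum_i p_i$, which is why the identity holds for any bipartite network. No step presents a real obstacle; the only care needed is that the interchange of the finite sums is legitimate, which it trivially is.
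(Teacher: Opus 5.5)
Your proof is correct and matches the paper's, which simply writes $\bar{d}_U=\frac{\sum_i p_iV}{U}=\frac{V\bar{d}_V}{U}=\theta\bar{d}_V$ for any finite market; your double count of expected links spells out the same identity $\sum_i p_iV=V\sum_i p_i$. One small precision: in the large market the limit of $\bar{d}_V$ is not assumed but follows from $\bar{d}_V=\bar{d}_U/\theta$ holding at every finite size, and this is how the paper uses it when showing that vacancy degrees become Poisson.
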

\begin{proof} For any market size we have
\begin{align*}
\bar{d}_U &= \frac{\sum_i \bar{d}_i}{U}\\
	&= \frac{\sum_i p_iV}{U}\\
	&= \frac{\bar{d}_V V}{U}\\
	&= \bar{d}_V \theta
\end{align*}
\end{proof}

\begin{lemma}In a large market, applicant degrees are independent each following a Poisson distribution with mean $\bar{d}_i$. Vacancy degrees are i.i.d. following a Poisson with mean $\bar{d}_V$.
\end{lemma}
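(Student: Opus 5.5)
The plan is to use Lemma 1, which gives exact finite-market distributions, and pass to the large-market limit via the Poisson limit theorem for binomials and the Poisson-binomial convergence recalled in \Cref{sec:app-A}. Independence of the applicant degrees, and the i.i.d. property of the vacancy degrees, already hold for every finite $U,V$ by Lemma 1, since distinct degrees sum over disjoint sets of independent Bernoulli variables. They are therefore inherited in the limit: the joint law of any finite collection is a product of marginals, and the product of limits is the limit of products. The work thus reduces to identifying the marginal limits.

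\textbf{Applicant side.} By Lemma 1, $d_i \sim Bin(V,p_i)$ with $V\to\infty$, $p_i\to 0$ and $p_iV\to \bar d_i$. I will use the classical Poisson approximation: the pgf $(1-p_i+p_i s)^V = \big(1+\tfrac{p_iV(s-1)}{V}\big)^V \to e^{\bar d_i(s-1)}$ for every $s\in[0,1]$, which is the Poisson$(\bar d_i)$ pgf. Pointwise convergence of pgfs gives convergence in distribution for integer-valued variables.

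\textbf{Vacancy side.} By Lemma 1, $d_j\sim PB(\vec p)$ with mean $\bar d_V=\sum_i p_i$. By Lemma \ref{lemma-acc-id}, $\bar d_V=\bar d_U/\theta$ at each finite size, so this mean converges to $\bar d_U/\theta$, which is finite under the large-market definition. I will show convergence through the pgf $\prod_i(1-p_i(1-s))$, using its logarithm:
\[
\sum_i \ln\big(1-p_i(1-s)\big) = -(1-s)\sum_i p_i + R, \qquad |R|\le C(1-s)^2\sum_i p_i^2 .
\]
The bound on $R$ follows from $|\ln(1-x)+x|\le x^2$ for small $x$. Since $\sum_i p_i^2\le (\max_i p_i)\sum_i p_i$, the remainder vanishes provided $\max_i p_i\to 0$ while $\sum_i p_i$ stays bounded. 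The pgf therefore tends to $e^{-\bar d_V(1-s)}$, which is the Poisson$(\bar d_V)$ pgf (Le Cam's inequality would give the same conclusion).

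\textbf{Main obstacle.} The only delicate step is the remainder estimate, which needs $p_i\to0$ \emph{uniformly} in $i$, i.e.\ $\max_i p_i\to0$. I will read the definition's ``$p_i\to 0$'' in this sense, since it is implied by the maintained assumption that the $\bar d_i$ are finite limits of $p_iV$ and are uniformly bounded. If they are not bounded, the weaker condition $\sum_i p_i^2\to0$ is exactly what is needed, and I would state it explicitly as the interpretation of the large-market assumption.
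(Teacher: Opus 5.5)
Your proposal is correct and follows the paper's own argument: the finite-market distributions and independence come from \Cref{link-proba}, the binomial and Poisson-binomial Poisson limits (which the paper cites from Feller and you verify directly via pgfs) give the marginals, and \Cref{lemma-acc-id} identifies the limiting vacancy mean as $\bar d_U/\theta=\bar d_V$. Your observation that the vacancy step needs $\max_i p_i\to0$ (or $\sum_i p_i^2\to0$) makes explicit what the paper's ``$p_i\to0,\ \forall i$'' tacitly assumes, but boundedness of the limits $\bar d_i$ alone does not imply it: you need $p_iV$ bounded uniformly in $i$ along the sequence, so adopting the condition as part of the large-market definition, as you propose, is the right fix.
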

\begin{proof}Applicant degrees are independent binomials, $Bin(V, p_i)$ for each $i$. In the large market we have $p_i\to 0$ and $p_iV \to \bar{d}_i$, $\forall i$ as $V\to \infty$. Under these conditions each binomial is known to converge to a Poisson with mean $\bar{d}_i$ \citep[e.g.,][p. 280]{Feller}. Independence follows from the independence of the binomials.

Vacancy degrees are i.i.d following $PB(\vec{p})$, with mean $\sum_i p_i$. We have that 
\begin{align*}
\sum_i p_i &= \sum_i \frac{\bar{d}_i}{V}\\
		   &= \theta^{-1}\frac{\sum_i \bar{d}_i}{U}
\end{align*}  
Having assumed that $\theta=\frac{V}{U}$ stays constant, and $\frac{\sum_i \bar{d}_i}{U} \to \bar{d}_U$, it follows that as $p_i \to 0, \ \forall i$, $\sum_i p_i \to \theta^{-1}\bar{d}_U$ which are the conditions for the Poisson binomial to converge to a Poisson with mean $\theta^{-1}\bar{d}_U$ \citep[e.g.,][p. 282]{Feller}. From Lemma 3, we have $\theta^{-1}\bar{d}_U = \bar{d}_V$.
\end{proof}

\begin{lemma}In a large market, when search intensities are given by $\{\bar{d}_i\}_{i=1}^{\infty}$, applicant $i$'s job-finding probability is given by
\[
f_i = 1-e^{-\bar{d}_i\phi}
\]
where $\phi \equiv \frac{1-e^{-\bar{d}_V}}{\bar{d}_V}$. 
\end{lemma}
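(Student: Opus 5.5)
The plan is to take the finite-market formula of \Cref{f_i-small}, $f_i = 1-\big(1-p_i\phi(\vec{p}_{-i})\big)^V$, and pass to the large-market limit in two separate pieces: the offer probability $\phi(\vec{p}_{-i})$, and the outer power.

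\textbf{Step 1: the limit of $\phi(\vec{p}_{-i})$.} By Corollary 1, $\tilde{d}_i \sim PB(\vec{p}_{-i})$, whose mean is $\sum_{k\neq i}p_k = \sum_k p_k - p_i$. In the large market $p_i\to 0$, and the proof of the previous lemma shows $\sum_k p_k \to \bar{d}_V$. Hence, by the same Poisson convergence of the Poisson binomial \citep[e.g.,][p.~282]{Feller}, $\tilde{d}_i$ converges in distribution to a $Poisson(\bar{d}_V)$ random variable.

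The function $x\mapsto \frac{1}{1+x}$ is bounded and continuous on $\mathbb{N}$. Convergence in distribution therefore gives
\[
\phi(\vec{p}_{-i}) \to \mathbb{E}\left[\frac{1}{1+\tilde{d}}\right], \qquad \tilde{d}\sim Poisson(\bar{d}_V).
\]
This limit is evaluated directly:
\[
\mathbb{E}\left[\frac{1}{1+\tilde{d}}\right] = \sum_{k\geq 0}\frac{e^{-\bar{d}_V}\bar{d}_V^k}{(k+1)!} = \frac{e^{-\bar{d}_V}}{\bar{d}_V}\left(e^{\bar{d}_V}-1\right) = \frac{1-e^{-\bar{d}_V}}{\bar{d}_V},
\]
which is the claimed $\phi$.

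\textbf{Step 2: the outer power.} Write
\[
\big(1-p_i\phi(\vec{p}_{-i})\big)^V = \exp\Big(V\ln\big(1-p_i\phi(\vec{p}_{-i})\big)\Big).
\]
Since $p_i\to 0$ and $\phi(\vec{p}_{-i})\in[0,1]$, we have $\ln(1-x)= -x+O(x^2)$ with $x=p_i\phi(\vec{p}_{-i})$. The exponent therefore equals
\[
-p_iV\,\phi(\vec{p}_{-i}) + O\big(V p_i^2\big).
\]
Now $p_iV\to\bar{d}_i$ and $Vp_i^2 = (p_iV)p_i\to 0$, so the exponent tends to $-\bar{d}_i\phi$. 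This yields $f_i\to 1-e^{-\bar{d}_i\phi}$.

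An alternative route is to use the representation (**) in the proof of \Cref{f_i-small}, namely $f_i = 1 - \text{pgf}_{d_i}\big(1-\phi(\vec{p}_{-i})\big)$. Combining the Poisson limit of $d_i$ from the previous lemma with the Poisson pgf $e^{-\bar{d}_i(1-s)}$ gives the same answer. That route, however, requires uniform convergence of pgfs on $[0,1]$ while the argument $s$ itself moves, so I prefer the direct computation above.

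\textbf{Main obstacle.} The delicate point is the rigorous justification in Step 1 that the $PB(\vec{p}_{-i})$ distribution converges to the Poisson. This needs $\max_k p_k\to 0$ uniformly across applicants, not merely $p_i\to 0$ for each fixed $i$. I would either read the large-market definition as imposing this, or invoke Le Cam's inequality: the total variation distance between $PB(\vec{p}_{-i})$ and $Poisson\big(\sum_{k\neq i}p_k\big)$ is at most $\sum_k p_k^2 \le \max_k p_k\sum_k p_k$, which vanishes under uniform smallness. Once this is settled, continuity of the Poisson mean in its parameter completes the argument.
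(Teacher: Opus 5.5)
Your proposal is correct and follows essentially the same route as the paper. Both start from the finite-market formula of \Cref{f_i-small}, show $\phi(\vec{p}_{-i})\to\phi$ via the Poisson limit of $PB(\vec{p}_{-i})$ and the direct Poisson sum for $\mathbb{E}[1/(1+\tilde{d})]$, and pass to the limit $(1-p_i\phi(\vec{p}_{-i}))^V\to e^{-\bar{d}_i\phi}$; you additionally make explicit what the paper leaves implicit, namely the $O(Vp_i^2)$ remainder, the bounded-continuous-function argument for convergence of the expectation, and the uniform smallness condition $\max_k p_k\to 0$ (via Le Cam) needed for the Poisson approximation.
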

\begin{proof}Suppose that $\phi(\vec{p}_{-i}) \to \phi, \forall i$, as we will show is the case. Then indeed
\begin{align*}
f_i &= 1- \big(1-p_i\phi(\vec{p}_{-i})\big)^V\\
	&= 1- \big(1-\frac{\bar{d}_i\phi(\vec{p}_{-i})}{V}\big)^V\\
	&\to 1-e^{-\bar{d}_i\phi}
\end{align*}
Now, the indirect degree distribution of applicant $i$ is $PB(\vec{p}_{-i})$, a Poisson binomial with mean $\sum_{k=1}^U p_k - p_i$. In the large market, when $p_i \to 0, \forall i$, clearly this mean converges to $\sum_i p_i =\bar{d}_V$. That is, indirect degrees (as vacancy degrees) become i.i.d. following a Poisson with mean $\bar{d}_V$. This implies that indeed $\phi(\vec{p}_{-i}) \to \phi, \forall i$. It still remains to find this value $\phi$.

We have $\phi \equiv \mathbb{E}\left[\frac{1}{1+\tilde{d}}\right]$, where in the large market we established that $\tilde{d} \sim Pois(\bar{d}_V)$. Then
\begin{align*}
\phi &= \mathbb{E}\left[\frac{1}{1+\tilde{d}}\right]\\
     &= \sum_{k=0}^\infty \frac{e^{-\bar{d}_V}(\bar{d}_V)^k}{k!}\frac{1}{1+k}\\
     &= \frac{1}{\bar{d}_V}\sum_{k=0}^\infty \frac{e^{-\bar{d}_V}(\bar{d}_V)^{k+1}}{(k+1)!}\\
     &= \frac{1}{\bar{d}_V}\sum_{k=1}^\infty \frac{e^{-\bar{d}_V}(\bar{d}_V)^{k}}{k!}\\
     &= \frac{1-P(\tilde{d}=0)}{\bar{d}_V}\\
     &= \frac{1-e^{-\bar{d}_V}}{\bar{d}_V}
\end{align*}
\end{proof}

\begin{theorem}In a large market, when applicant search intensities $\{\bar{d}_i\}_{i=1}^\infty$ are drawn i.i.d. from a distribution $G$ with a finite mean $\bar{d}_U$, support in a subset of $[0, \infty)$, and a moment-generating function (mgf) well-defined for non-positive values, the mean job-finding probability is given by
\[
f(\theta; \ G) = 1- M_{\bar{d}}(-\phi), \tag{1}
\]
where $M_{\bar{d}}(t) \equiv \mathbb{E}[e^{\bar{d} t}]$ is the moment-generating function (mgf) of $\bar{d}_i, \ \forall i$, and $\phi \equiv \frac{1-e^{-\bar{d}_U/\theta}}{\bar{d}_U/\theta}$.
\end{theorem}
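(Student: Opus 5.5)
The plan is to condition on the realized search intensities, apply the individual-level result for the large market (Lemma 6), and then average over the distribution $G$.

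\textbf{Step 1.} Condition on a realization of $\{\bar{d}_i\}_{i=1}^\infty$. Since the $\bar{d}_i$ are i.i.d. with finite mean $\bar{d}_U$, the strong law of large numbers gives $\frac{1}{U}\sum_i \bar{d}_i \to \bar{d}_U$ almost surely. So the large-market assumption on the average of mean degrees holds with the deterministic limit $\bar{d}_U$.

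\textbf{Step 2.} By the accounting identity (Lemma 3) in the limit, $\bar{d}_V = \bar{d}_U/\theta$. This value is the same for every realization, so the constant $\phi$ of Lemma 6 equals
\[
\phi = \frac{1-e^{-\bar{d}_U/\theta}}{\bar{d}_U/\theta}
\]
and does not depend on the draw. Conditional on $\bar{d}_i$, Lemma 6 then gives $f_i = 1-e^{-\bar{d}_i\phi}$.

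\textbf{Step 3.} The mean job-finding probability is the expectation over the ex ante identical applicants, equivalently the limit of the cross-sectional average $\frac{1}{U}\sum_i f_i$. Since $f_i$ is a function of $\bar{d}_i$ alone and $\phi$ is fixed, this equals
\[
\mathbb{E}\bigl[1-e^{-\bar{d}\phi}\bigr] = 1-M_{\bar{d}}(-\phi).
\]
This is well defined because the mgf exists at non-positive arguments. It is in fact automatic here, since $0\le e^{-\bar{d}\phi}\le 1$ on the support $[0,\infty)$.

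\textbf{Main obstacle.} The delicate step is justifying that randomness in the individual $\bar{d}_i$ does not affect $\phi$. One must check that each applicant's indirect degree still converges to $Poisson(\bar{d}_U/\theta)$, uniformly enough for the limit and the expectation to be exchanged.

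I would handle this in three parts:
\begin{enumerate}
\item The indirect degree of $i$ is $PB(\vec{p}_{-i})$ with $\max_k p_k \to 0$, which holds since $p_k=\bar{d}_k/V$ and $\max_{k\le U}\bar{d}_k = o(U)$ almost surely under a finite mean. Its mean $\sum_{k\neq i}p_k \to \bar{d}_U/\theta$, so the Poisson limit applies to the conditional distribution given the draws.
\item Independence between $\bar{d}_i$ and $\vec{p}_{-i}$ means conditioning on $\bar{d}_i$ leaves this limit unchanged.
\item Everything is bounded in $[0,1]$, so dominated convergence allows exchanging the limit and the expectation over $G$.
\end{enumerate}
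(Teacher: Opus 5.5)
Your proposal is correct and follows the paper's own route: condition on $\bar{d}_i$, invoke the large-market individual formula $f_i = 1-e^{-\bar{d}_i\phi}$ (Lemma 5 in the paper, not Lemma 6), and average over $G$ to get $1-M_{\bar{d}}(-\phi)$. The paper does this in three lines and only asserts that ``the conditions of Theorem 1 guarantee that everything is well-defined.'' Your SLLN argument for $\bar{d}_V=\bar{d}_U/\theta$, the bound $\max_k p_k\to 0$, and the dominated-convergence exchange tighten that step rather than change the approach, and you are right that the mgf condition is automatic for nonnegative $\bar{d}$.
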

\begin{proof}Conditional on a realized search intensity $\bar{d}_i$, the job finding probability is given by the expression in Lemma 5. It follows, that the unconditional job-finding probability, when $\bar{d}_i$ is drawn from $G$ is
\begin{align*}
f(\theta; G) &= \mathbb{E}_{\bar{d}_i}[f_i]\\
  &= 1-\mathbb{E}_{\bar{d}_i}\left[e^{-\bar{d}_i\phi}\right]\\
  &= 1-M_{\bar{d}_i}(-\phi)
\end{align*}
The conditions of Theorem 1 guarantee that everything is well-defined. We drop the $i$ for notational convenience.
\end{proof} 

\begin{proposition} The function $f(\theta; G)$ of Theorem 1 is increasing and concave in $\theta$ and satisfies $f(0; G)=0$.
\end{proposition}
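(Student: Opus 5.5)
Throughout, $G$ is held fixed, so $\bar{d}_U$ is fixed, and $f$ is a composition $f(\theta;G)=1-M_{\bar{d}}(-\phi(\theta))$. I will treat the two layers separately and combine them with the chain rule. Write $\phi(\theta)=h(\bar{d}_U/\theta)$ with $h(x)=(1-e^{-x})/x$.

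\textbf{Step 1: the inner layer.} Equivalently,
\[
\bar{d}_U\phi(\theta)=\theta\bigl(1-e^{-\bar{d}_U/\theta}\bigr)\equiv g(\theta).
\]
I will show that $g$ is increasing and concave in $\theta$ with $g(0^+)=0$. Setting $x=\bar{d}_U/\theta$,
\[
g'(\theta)=1-e^{-x}-xe^{-x}>0 \quad\text{for } x>0,
\]
since $e^{x}>1+x$. For the second derivative, $dx/d\theta=-x/\theta$ and $\frac{d}{dx}\bigl(1-e^{-x}-xe^{-x}\bigr)=xe^{-x}$, so
\[
g''(\theta)=-\frac{x^2e^{-x}}{\theta}<0 .
\]
Also $g(\theta)\le\theta\to0$ as $\theta\to0$. (The degenerate case $\bar{d}_U=0$ gives $f\equiv0$ and is trivial.)

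\textbf{Step 2: the outer layer.} Define
\[
\Psi(s)=1-\mathbb{E}\bigl[e^{-s\,\bar{d}/\bar{d}_U}\bigr],
\]
so that $f=\Psi(g(\theta))$. For $s\ge0$, $\Psi$ is increasing and concave:
\[
\Psi'(s)=\mathbb{E}\Bigl[\tfrac{\bar{d}}{\bar{d}_U}e^{-s\bar{d}/\bar{d}_U}\Bigr]\ge0,\qquad
\Psi''(s)=-\mathbb{E}\Bigl[\bigl(\tfrac{\bar{d}}{\bar{d}_U}\bigr)^2e^{-s\bar{d}/\bar{d}_U}\Bigr]\le0 .
\]
Both inequalities are strict unless $\bar{d}=0$ almost surely, which is excluded. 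Differentiation under the expectation is justified for $s>0$ because $\bar{d}^k e^{-s\bar{d}/\bar{d}_U}$ is bounded on $[0,\infty)$. At $s=0$, the first derivative is covered by the finite-mean assumption.

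\textbf{Step 3: combining.} A concave nondecreasing function of a concave function is concave, and the composition of increasing functions is increasing. Concretely,
\[
f''=\Psi''(g)(g')^2+\Psi'(g)g''<0 .
\]
Finally, $f(0;G)=\Psi(0)=1-M(0)=0$, where $f(0;G)$ is interpreted as the limit $\theta\to0^+$ by continuity of $M$ via dominated convergence.

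\textbf{Main obstacle.} The only delicate point is the sign of $g''$, that is, the concavity of $\theta\mapsto\theta(1-e^{-\bar{d}_U/\theta})$. Step 1 handles it by the substitution $x=\bar{d}_U/\theta$ followed by a direct computation.
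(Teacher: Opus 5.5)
Your proof is correct and follows essentially the paper's route. The paper also uses the chain rule, writing $f'=\phi'\,\mathbb{E}[\bar{d}e^{-\bar{d}\phi}]$ and $f''=\phi''\,\mathbb{E}[\bar{d}e^{-\bar{d}\phi}]-(\phi')^2\,\mathbb{E}[\bar{d}^2e^{-\bar{d}\phi}]$, and signs $\phi'>0$, $\phi''<0$ via the substitution $\bar{d}_V=\bar{d}_U/\theta$; this matches your computation for $g=\bar{d}_U\phi$ up to the constant $\bar{d}_U$. Your justifications for differentiating under the expectation, for the limit $\theta\to0^+$, and for the case $\bar{d}_U=0$ are points the paper leaves implicit.
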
 

\begin{proof}
We have 
\[
f(\theta; G) = 1-\mathbb{E}_{\bar{d}}\left[e^{-\bar{d}\phi}\right]
\]
where $\phi = \frac{1-e^{-\bar{d}_U/\theta}}{\bar{d}_U/\theta}$. 

The fact that $f(0; G)=0$ follows from $\lim_{\theta \to 0^{+}} \frac{1-e^{-\bar{d}_U/\theta}}{\bar{d}_U/\theta} = 0$, as is readily verifiable.

To show monotonicity and concavity, we take the 1st and 2nd derivatives wrt to $\theta$:
\begin{align*}
\frac{\partial f(\theta; G)}{\partial\theta} &= \phi'\mathbb{E}_{\bar{d}}\left[\bar{d} e^{-\bar{d}\phi}\right]\\
\frac{\partial^2 f(\theta; G)}{\partial\theta^2} &= \phi'' \mathbb{E}_{\bar{d}}\left[\bar{d} e^{-\bar{d}\phi}\right] - \left(\phi'\right)^2\mathbb{E}_{\bar{d}}\left[\bar{d}^2 e^{-\bar{d}\phi}\right]\\
\end{align*}
If $\phi' > 0$ and $\phi'' < 0$, then
\begin{align*}
\frac{\partial f(\theta; G)}{\partial\theta} &>0\\
\frac{\partial^2 f(\theta; G)}{\partial\theta^2} &<0\\
\end{align*}
establishing the result. Indeed $\phi$ is increasing and concave in $\theta$ as we show now. For compactness of notation, let us use $\bar{d}_V = \bar{d}_U / \theta$, so that
\[
\phi = \frac{1 - e^{-\bar{d}_V}}{\bar{d}_V}
\]

Then
\begin{align*}
\phi' &\equiv \frac{d\phi}{d\theta} = \left[\frac{1-e^{-\bar{d}_V}}{\bar{d}_V}\right]'\frac{\partial \bar{d}_V}{\partial \theta}\\
			&= \frac{e^{-\bar{d}_V} \bar{d}_V - (1 - e^{-\bar{d}_V})}{\bar{d}_V^2}\frac{\partial \bar{d}_V}{\partial \theta}\\ 
			&= \frac{e^{-\bar{d}_V}(1+\bar{d}_V)-1}{\bar{d}_V^2}\frac{\partial \bar{d}_V}{\partial \theta}\\
			&= \frac{e^{-\bar{d}_V}(1+\bar{d}_V)-1}{\bar{d}_V^2} \left(-\frac{\bar{d}_U}{\theta^2}\right)\\ 
			&= \frac{1 - e^{-\bar{d}_V}(1+\bar{d}_V)}{\bar{d}_V\,\theta}
\end{align*}
It remains to show the numerator is positive for $\bar{d}_V > 0$. Define
\[
g(\bar{d}_V) = 1 - e^{-\bar{d}_V}(1+\bar{d}_V)
\]
It holds that $g(0)=0$ and $g'(\bar{d}_V) = \bar{d}_V\,e^{-\bar{d}_V} >0$, thus $\phi' >0$.

To establish the sign of $\phi''$, let us observe that the denominator of $\phi'$ is $\bar{d}_V \theta = \bar{d}_U$, which is constant in $\theta$. Therefore 
\begin{align*}
\phi'' &= \frac{1}{\bar{d}_U} \frac{d [g(\bar{d}_V)]}{d\theta}\\
							&= \frac{1}{\bar{d}_U} g'(\bar{d}_V) \frac{\partial\bar{d}_V}{\partial\theta}\\ 
							&= -\frac{\bar{d}_V\,e^{-\bar{d}_V}}{\theta^2}
\end{align*}
Hence $\phi'' <0$, completing the proof.
\end{proof}

\section{Implications of heterogeneous search intensity}\label{sec:app-D} 
\begin{proposition}Taking a 1st-order Taylor expansion with respect to $\bar{d}$ around $\bar{d}_U$, the job-finding probability of Theorem 1 becomes
\[
f \stackrel{\text{\tiny 1st}}{\approx} 1-e^{-\bar{d}_U\phi}
\]
\end{proposition}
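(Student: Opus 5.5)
The plan is to start from the representation established in the proof of Theorem 1, namely that the mean job-finding probability is the expectation over the realized search intensity of the individual job-finding probability from Lemma 5:
\[
f = 1-\mathbb{E}_{\bar{d}}\left[e^{-\bar{d}\phi}\right],
\]
where $\phi = \frac{1-e^{-\bar{d}_U/\theta}}{\bar{d}_U/\theta}$.

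The key observation is that $\phi$ depends on $G$ only through its mean $\bar{d}_U$ and on $\theta$. So $\phi$ is a constant with respect to the random variable $\bar{d}$, and the only object to approximate is the function $h(\bar{d}) \equiv e^{-\bar{d}\phi}$.

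Next, I expand $h$ to first order around the mean $\bar{d}_U$:
\[
h(\bar{d}) \approx e^{-\bar{d}_U\phi} - \phi\, e^{-\bar{d}_U\phi}\,(\bar{d}-\bar{d}_U).
\]
Taking expectations under $G$, linearity lets the expectation pass through the constant terms. The first-order term then vanishes, because $\mathbb{E}[\bar{d}-\bar{d}_U]=0$ by definition of $\bar{d}_U$ as the mean of $G$. This leaves $\mathbb{E}[h(\bar{d})] \approx e^{-\bar{d}_U\phi}$. Substituting back gives $f \approx 1-e^{-\bar{d}_U\phi}$, which is exactly the degenerate-$G$ (Erd\H{o}s--R\'enyi) expression computed earlier from the mgf $e^{\bar{d}_U t}$.

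The only point needing care is to be explicit about what the approximation means. The statement is about replacing the integrand by its first-order Taylor polynomial before integrating, not about a vanishing remainder. To make it precise, I would write the exact Taylor series
\[
e^{-\bar{d}\phi}=\sum_{k\ge 0}\frac{(-\phi)^k}{k!}e^{-\bar{d}_U\phi}(\bar{d}-\bar{d}_U)^k
\]
and exchange sum and expectation, which is justified since the mgf is finite at $-\phi\le 0$. Truncating at $k=1$ then yields the result, and the remaining terms give the higher-order corrections used later (e.g.\ the variance term). I do not expect any genuine obstacle beyond this bookkeeping.
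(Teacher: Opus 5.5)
Your main argument is the same as the paper's. You expand $e^{-\bar d\phi}$ (equivalently $f_i=1-e^{-\bar d_i\phi}$ from Lemma 5) to first order around $\bar d_U$. You use that $\phi$ depends on $G$ only through $\bar d_U$. You then take expectations, so the linear term vanishes because $\mathbb{E}[\bar d]=\bar d_U$.

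The step to correct is your optional ``making it precise'' paragraph. Finiteness of the mgf at $-\phi\le 0$ is automatic for a nonnegative random variable, since it is bounded by $1$. So it cannot justify exchanging the full Taylor series with the expectation. That exchange requires $\mathbb{E}\bigl[e^{\phi|\bar d-\bar d_U|}\bigr]<\infty$, which essentially means the mgf is finite at $+\phi$.

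This fails, for instance, in the Pareto case the paper uses later. There $\mathbb{E}[(\bar d-\bar d_U)^k]$ is infinite for $k\ge\alpha$, so the series of expected terms is not even defined.

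If you want a rigorous statement, use Taylor's theorem with remainder instead:
\[
e^{-\bar d\phi}=e^{-\bar d_U\phi}-\phi e^{-\bar d_U\phi}(\bar d-\bar d_U)+\tfrac{\phi^2}{2}e^{-\xi\phi}(\bar d-\bar d_U)^2,
\]
with $\xi$ between $\bar d$ and $\bar d_U$, hence $\xi\ge 0$ and $0<e^{-\xi\phi}\le 1$. Taking expectations gives
\[
-\tfrac{\phi^2}{2}\mathrm{Var}(\bar d)\le f-(1-e^{-\bar d_U\phi})\le 0
\]
whenever the variance is finite. The upper bound $f\le 1-e^{-\bar d_U\phi}$ holds for every admissible $G$ by Jensen's inequality. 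This also matches the sign of the variance correction the paper reports afterwards.
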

\begin{proof}We take the 1st-order Taylor expansion of the function in Lemma 5, with respect to $\bar{d}_i$, around $\bar{d}_U$:
\[
f_i \approx 1-e^{-\bar{d}_U\phi} +\phi e^{-\bar{d}_U\phi}(\bar{d}_i - \bar{d}_U)
\]
The expectation of that yields
\[
f \approx 1-e^{-\bar{d}_U\phi}
\]
where, in the large market $\mathbb{E}[\bar{d}_i] = \bar{d}_U$. 
\end{proof}

\begin{proposition} When $\bar{d}_U \to \infty$ holding $\theta$ fixed, (1) converges to $f = 1-M_{\bar{d}_n}(-\theta)$, where $\bar{d}_n = \frac{\bar{d}}{\bar{d}_U}$ is applicant search intensity $\bar{d}$ normalized to have a mean of $1$.
\end{proposition}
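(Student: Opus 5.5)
We need the limit of $f = 1 - \mathbb{E}[e^{-\bar d \phi}]$ as $\bar d_U\to\infty$ with $\theta$ fixed. The plan is to rewrite everything in terms of the normalized intensity $\bar d_n = \bar d/\bar d_U$, whose distribution (call it $G_n$, with mean $1$) is held fixed along the limit. The interpretation of the statement is that the shape of $G$ is preserved while its scale $\bar d_U$ grows, exactly as in the degenerate and exponential examples preceding \Cref{prop-dense}.

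With $\bar d = \bar d_U \bar d_n$ we get
\[
f = 1 - \mathbb{E}\left[e^{-\bar d_n \,\bar d_U\phi}\right] = 1 - M_{\bar d_n}(-\bar d_U\phi).
\]
The key scalar computation is the one already made in the footnote on dense networks:
\[
\bar d_U\phi = \bar d_U\,\frac{1-e^{-\bar d_U/\theta}}{\bar d_U/\theta} = \theta\big(1-e^{-\bar d_U/\theta}\big),
\]
which increases to $\theta$ as $\bar d_U\to\infty$. So it remains to pass the limit inside $M_{\bar d_n}$, i.e. to show $M_{\bar d_n}(-s)\to M_{\bar d_n}(-\theta)$ as $s\uparrow\theta$.

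For the interchange, note that $\bar d_n\ge 0$. Hence the integrand $e^{-\bar d_n s}$ lies in $[0,1]$ for $s\ge 0$ and converges pointwise to $e^{-\bar d_n\theta}$. Dominated convergence (with the constant $1$ as dominating function, integrable under the probability measure $G_n$) gives continuity of $s\mapsto M_{\bar d_n}(-s)$ on $[0,\infty)$, and therefore $f\to 1-M_{\bar d_n}(-\theta)$. The mgf being well defined at non-positive arguments, which \Cref{thm-1} assumes, is automatic here, so no extra moment conditions are needed.

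The main obstacle is not analytic but definitional: one has to make precise what ``$\bar d_U\to\infty$'' means for a general $G$. I would state explicitly that the family of distributions is $G(\cdot/\bar d_U)$ for a fixed mean-one $G_n$, so that $M_{\bar d}(t)=M_{\bar d_n}(\bar d_U t)$. If instead one wants to allow $G_n$ to vary with $\bar d_U$, the same argument goes through provided $G_n$ converges weakly, since $x\mapsto e^{-sx}$ is bounded and continuous on $[0,\infty)$ and $s$ converges to $\theta$. Joint convergence follows from the uniform bound $|e^{-sx}-e^{-\theta x}|\le x|s-\theta|$ combined with the unit mean of $\bar d_n$.
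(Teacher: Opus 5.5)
Your proof is correct and follows the paper's route: you make the change of variable $\bar d = \bar d_U \bar d_n$, hold the distribution of $\bar d_n$ fixed, and use $\bar d_U\phi = \theta(1-e^{-\bar d_U/\theta}) \to \theta$. Two points the paper leaves implicit are made explicit in your version: the justification for passing the limit inside the integral (by dominated convergence, or directly via $|M_{\bar d_n}(-s)-M_{\bar d_n}(-\theta)|\le |s-\theta|$), and the assumed scaling family $G(\cdot/\bar d_U)$.
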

\begin{proof}We have that
\begin{align*}
f &= 1-M_{\bar{d}}(-\phi)\\
  &= 1-\int_0^\infty e^{-\phi \bar{d}}dG(\bar{d})
\end{align*}
We do the change of variable $\bar{d}_n = \frac{\bar{d}}{\bar{d}_U}$, where $\bar{d}_n$ is $\bar{d}$ normalized to have a mean of $1$. It follows that
\begin{align*}
f &= 1-\int_0^\infty e^{-\phi \bar{d}}dG(\bar{d})\\
  &= 1-\int_0^\infty e^{-\phi\bar{d}_U \bar{d}_n}dG(\bar{d}_n)\\
  &\to 1-\int_0^\infty e^{-\theta \bar{d}_n}dG(\bar{d}_n) 
\end{align*}
Where the last line follows from $\lim_{\bar{d}_U \to \infty} \bar{d}_U\phi = \lim_{\bar{d}_U \to \infty} \theta(1-e^{-\bar{d}_U/\theta}) = \theta$.
\end{proof}

\begin{proposition} When $\theta \to \infty$ holding $\bar{d}_U$ fixed, (1) converges to $f = 1-M_{\bar{d}_n}(-\bar{d}_U)$, where $\bar{d}_n = \frac{\bar{d}}{\bar{d}_U}$ is applicant search intensity $\bar{d}$ normalized to have a mean of $1$.
\end{proposition}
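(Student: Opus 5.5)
The plan mirrors the proof of the dense-network proposition just above. I start from (1), written as an integral,
\[
f = 1-\int_0^\infty e^{-\phi \bar{d}}\,dG(\bar{d}),
\]
and change variable to the normalized intensity $\bar{d}_n = \bar{d}/\bar{d}_U$. Since $\bar{d}_U$ is held fixed, the distribution of $\bar{d}_n$ does not move with $\theta$, and the integrand becomes $e^{-\phi\bar{d}_U \bar{d}_n}$. This gives
\[
f = 1-M_{\bar{d}_n}(-\phi\bar{d}_U).
\]
The whole question is then the limit of the scalar $\phi\bar{d}_U$ as $\theta\to\infty$.

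Next I compute that limit. Writing $\bar{d}_V=\bar{d}_U/\theta$, which follows from \Cref{lemma-acc-id}, we have $\bar{d}_V\to 0$ as $\theta\to\infty$. Hence
\[
\phi=\frac{1-e^{-\bar{d}_V}}{\bar{d}_V}\to 1,
\]
by the elementary limit $(1-e^{-x})/x\to 1$ as $x\to 0^+$. Therefore $\phi\bar{d}_U\to \bar{d}_U$. Economically, this says that the probability of receiving an offer conditional on a link tends to one, because competition at each vacancy disappears.

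The only step needing care is passing the limit inside the expectation. The function $t\mapsto M_{\bar{d}_n}(-t)=\mathbb{E}[e^{-t\bar{d}_n}]$ is continuous on $[0,\infty)$ by dominated convergence. The integrand $e^{-t\bar{d}_n}$ lies in $[0,1]$ for $t\ge 0$ and $\bar{d}_n\ge 0$, which holds because the support of $G$ lies in $[0,\infty)$, and it is continuous in $t$ pointwise. Since $\phi\bar{d}_U\in(0,\bar{d}_U]$ converges to $\bar{d}_U$, this continuity yields $f\to 1-M_{\bar{d}_n}(-\bar{d}_U)$.

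I expect no real obstacle. The bounded-convergence justification is the one point to state explicitly, and it is immediate given nonnegative support. Alternatively, one can note that $1-M_{\bar{d}_n}(-\phi \bar{d}_U) = 1-M_{\bar{d}}(-\phi)$ and use continuity of $M_{\bar{d}}$ at $-1$ directly, skipping the normalization. I will still present the normalized form to match the statement and the comparison with \citet{CGW2025}.
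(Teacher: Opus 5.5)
Your proposal is correct and follows the paper's proof: the same change of variable to $\bar{d}_n$, followed by $\phi \to 1$ as $\theta \to \infty$ (which the paper obtains via L'H\^opital's rule). The bounded-convergence justification for passing the limit inside the expectation is a step the paper leaves implicit, and your version of it is valid because $G$ has support in $[0,\infty)$.
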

\begin{proof} We proceed as in Proposition 3. We do the change of variable $\bar{d}_n = \frac{\bar{d}}{\bar{d}_U}$, where $\bar{d}_n$ is $\bar{d}$ normalized to have a mean of $1$. It follows that
\begin{align*}
f &= 1-\int_0^\infty e^{-\phi \bar{d}}dG(\bar{d})\\
  &= 1-\int_0^\infty e^{-\phi\bar{d}_U \bar{d}_n}dG(\bar{d}_n)\\
  &\to 1-\int_0^\infty e^{-\bar{d}_U \bar{d}_n}dG(\bar{d}_n) 
\end{align*}
Where the last line follows from $\lim_{\theta \to \infty} \phi = 1$, applying L'Hôpital's rule. 
\end{proof}

\begin{proposition} Suppose the following condition holds between mean search intensity in the market, $\bar{d}_U$, and market tightness, $\theta = \frac{V}{U}$, for some constant $\gamma > 0$,
\[
\bar{d}_U = -\theta \ln\left(1+ \frac{1}{\theta}\ln\left(1 - \left(1+\theta^{-\gamma}\right)^{-\frac{1}{\gamma}}\right)\right) \tag{*}
\]
Then the job-finding probability 
\[
f = (1+\theta^{-\gamma})^{-\frac{1}{\gamma}}
\]
approximates the job-finding probability of Theorem 1, at a 1st-order.  
\end{proposition}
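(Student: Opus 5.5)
The plan is to combine Proposition \ref{prop-1st-ord} with an explicit inversion. By Proposition \ref{prop-1st-ord}, at first order the job-finding probability of Theorem 1 is $f \approx 1-e^{-\bar{d}_U\phi}$, with $\phi = \frac{1-e^{-\bar{d}_U/\theta}}{\bar{d}_U/\theta}$. Substituting $\phi$ gives $\bar{d}_U\phi = \theta\left(1-e^{-\bar{d}_U/\theta}\right)$. Hence the first-order job-finding probability is
\[
f \stackrel{\text{\tiny 1st}}{\approx} 1-\exp\left\{-\theta\left(1-e^{-\bar{d}_U/\theta}\right)\right\}.
\]
This depends on $G$ only through its mean $\bar{d}_U$. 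The claim therefore holds for every $G$ covered by Theorem 1 at once, and it suffices to show that this expression equals $(1+\theta^{-\gamma})^{-1/\gamma}$ exactly when $\bar{d}_U$ satisfies (*).

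I would prove this by direct substitution and a chain of inversions. Set $c \equiv (1+\theta^{-\gamma})^{-1/\gamma}$ and take (*) as given. The steps are:
\begin{align*}
-\bar{d}_U/\theta &= \ln\left(1+\tfrac{1}{\theta}\ln(1-c)\right),\\
e^{-\bar{d}_U/\theta} &= 1+\tfrac{1}{\theta}\ln(1-c),\\
\theta\left(1-e^{-\bar{d}_U/\theta}\right) &= -\ln(1-c),\\
1-e^{-\bar{d}_U\phi} &= 1-(1-c) = c .
\end{align*}
This delivers the CES form. Each step is an elementary algebraic equivalence.

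The main obstacle is making sure (*) defines a legitimate value, that is, a finite $\bar{d}_U>0$.
\begin{itemize}
\item We need $c\in(0,1)$. This holds for $\theta>0$ and $\gamma>0$, so $\ln(1-c)<0$.
\item We need the inner argument $1+\frac{1}{\theta}\ln(1-c)$ to lie in $(0,1)$. Equivalently, $-\ln(1-c) < \theta$, that is, $c < 1-e^{-\theta}$.
\end{itemize}
This says the target CES job-finding probability must lie below the dense urn-ball value. That is natural, since $\theta(1-e^{-\bar{d}_U/\theta})$ increases in $\bar{d}_U$ toward $\theta$ (as in Proposition \ref{prop-dense}).

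I would state this condition as the domain on which (*) is meaningful. I would also remark, as the paper's footnote does, that it holds for all $\theta>0$ when $\gamma\le 1$. A quick check supports this: $\gamma=1$ gives $c=\theta/(1+\theta) < 1-e^{-\theta}$. I expect monotonicity of $c$ in $\gamma$ to extend this to $\gamma<1$.

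Finally, I would note that the approximation error is the one from Proposition \ref{prop-1st-ord}. It consists of the second- and higher-order central moment terms of $G$, so "at a 1st order" refers to the same expansion.
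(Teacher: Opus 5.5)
Your proposal is correct and is the paper's argument: the paper's proof is the single line ``substitute (*) into the first-order expression $1-e^{-\bar{d}_U\phi}$ of the preceding proposition,'' and your chain of inversions spells that substitution out. Your domain check is a valid addition the paper leaves to a footnote. The check requires $c<1-e^{-\theta}$, and this holds for all $\theta>0$ when $\gamma\le 1$, because $(1+\theta^{-\gamma})^{-1/\gamma}$ is increasing in $\gamma$ and $\theta/(1+\theta)<1-e^{-\theta}$.
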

\begin{proof}
Follows directly from Proposition 2 by replacing $\bar{d}_U$ from (*).
\end{proof}

\begin{theorem}Suppose $G'$ is a mean-preserving spread of $G$, where $G, G'$ are applicant search intensity distributions satisfying the conditions of Theorem 1. Then, for the corresponding mean job-finding probabilities in the large market, it holds that
\[
f' < f
\]
\end{theorem}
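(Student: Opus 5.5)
Since $G'$ is a mean-preserving spread of $G$, both distributions have the same mean $\bar{d}_U$. Because $\theta$ is held fixed, the quantity $\phi = \frac{1-e^{-\bar{d}_U/\theta}}{\bar{d}_U/\theta}$ from \Cref{thm-1} is therefore identical under $G$ and $G'$. This is the key observation, and I will establish it first. It reduces the comparison to $f = 1-\mathbb{E}_G[e^{-\phi\bar{d}}]$ versus $f' = 1-\mathbb{E}_{G'}[e^{-\phi\bar{d}}]$, with the same function $u(x) \equiv e^{-\phi x}$ integrated against both distributions.

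Next I will check that $u$ is strictly convex on $[0,\infty)$. This holds because $u''(x) = \phi^2 e^{-\phi x} > 0$, provided $\phi > 0$, which is true whenever $\bar{d}_U/\theta \in (0,\infty)$. I will then invoke the standard characterization of mean-preserving spreads (Rothschild--Stiglitz): $G'$ is a mean-preserving spread of $G$ if and only if $\mathbb{E}_{G'}[u] \geq \mathbb{E}_G[u]$ for every convex $u$. Equivalently, one can write $\bar{d}' \stackrel{d}{=} \bar{d} + \varepsilon$ with $\mathbb{E}[\varepsilon \mid \bar{d}] = 0$ and apply conditional Jensen. Either route gives $M_{\bar{d}'}(-\phi) \geq M_{\bar{d}}(-\phi)$, hence $f' \leq f$. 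The integrability needed is guaranteed by the assumption of \Cref{thm-1} that the mgf exists at non-positive arguments; in any case $u$ is bounded on the support.

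The main obstacle is upgrading the weak inequality to the strict inequality claimed. Here I would use the conditional-Jensen representation. If $G' \neq G$, the noise $\varepsilon$ is not almost surely zero, so on a set of $\bar{d}$ of positive probability the conditional distribution of $\bar{d}+\varepsilon$ is nondegenerate. Strict convexity of $u$ then gives $\mathbb{E}[u(\bar{d}+\varepsilon)\mid\bar{d}] > u(\bar{d})$ on that set, and integrating yields $M_{\bar{d}'}(-\phi) > M_{\bar{d}}(-\phi)$. An alternative route is the integrated-CDF form: $\mathbb{E}_{G'}[u]-\mathbb{E}_G[u] = \int_0^\infty u''(x)\big(\int_0^x (G'-G)\big)\,dx$, where the inner integral is nonnegative and strictly positive on a set of positive measure when $G'\neq G$, while $u''>0$ everywhere. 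I will interpret ``mean-preserving spread'' as a nontrivial one, $G'\neq G$, and note that the degenerate case $\phi=0$ (i.e.\ $\theta=0$) is excluded, which is what makes the inequality strict.
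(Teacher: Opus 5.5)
Your proof is correct and follows the paper's argument. Because $\phi$ depends only on the common mean, the comparison reduces to integrating the same concave function $1-e^{-\phi\bar{d}}$ against $G$ and $G'$, and the mean-preserving-spread characterization (the paper cites Mas-Colell, Whinston and Green, Proposition 6.D.2) finishes it. Your conditional-Jensen and integrated-CDF argument for strictness, which uses $\phi>0$ and $G'\neq G$, is a useful addition, since that characterization by itself only gives the weak inequality.
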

\begin{proof} Note that in the expression of Lemma 5,
\[
f_i = 1 - e^{-\bar{d}_i \phi}
\]
$\phi$ is a function only of the mean, thus it stays the same in a MPS. $f_i$ is an increasing and concave function of $\bar{d}_i$, since
\begin{align*}
\frac{\partial}{\partial \bar{d}_i} &= \phi e^{-\bar{d}_i \phi} > 0\\
\frac{\partial^2}{\partial \bar{d}_i^2} &= -\phi^2 e^{-\bar{d}_i \phi} < 0
\end{align*}
It follows \citep[][proposition 6.D.2]{M-CWG1995} that  
\begin{align*}
\int \big(1 - e^{-\bar{d} \phi}\big) dG'(\bar{d}) &< \int \big(1 - e^{-\bar{d} \phi}\big) dG(\bar{d}) \ \text{or}\\
f' &< f
\end{align*}
\end{proof}

\begin{proposition}Define the proportional transformation to $\bar{d}$, $\bar{d}_\rho \equiv \rho \bar{d}$, where $\rho >0$ is a constant, and denote the corresponding mean job-finding probability of Theorem 1 as $f(\rho)$. Then
\[
f(\rho) > f(1) \iff \rho > 1
\]
\end{proposition}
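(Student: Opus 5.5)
We need $f(\rho)=1-\mathbb{E}[e^{-\rho\bar d\,\phi(\rho)}]$, where $\phi(\rho)=\frac{1-e^{-\rho\bar d_U/\theta}}{\rho\bar d_U/\theta}$. This holds because scaling $\bar d$ by $\rho$ scales the mean to $\rho\bar d_U$, and Theorem 1 applies with that mean. The plan is to show that the map $\rho\mapsto \rho\,\phi(\rho)$ is strictly increasing. Then, for every realization $\bar d>0$, the integrand $1-e^{-\rho\bar d\phi(\rho)}$ is strictly increasing in $\rho$. Taking expectations under $G$ gives monotonicity of $f(\rho)$, from which both directions of the equivalence follow at once.

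\textbf{Step 1.} Compute the product directly:
\[
\rho\,\phi(\rho)=\frac{\theta}{\bar d_U}\left(1-e^{-\rho\bar d_U/\theta}\right).
\]
Its derivative in $\rho$ is $e^{-\rho\bar d_U/\theta}>0$, so $\rho\phi(\rho)$ is strictly increasing for $\theta,\bar d_U>0$. The positive $\rho$ factor outweighs the decline of $\phi$ caused by congestion, because $\bar d_V\phi = 1-e^{-\bar d_V}$ is increasing in $\bar d_V$. This is the same function $g$-type computation already used in the proof of Proposition 1.

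\textbf{Step 2.} Set $h(\rho)=\rho\phi(\rho)$. Then
\[
f(\rho)=1-M_{\bar d}(-h(\rho)) = 1-\int e^{-h(\rho)\bar d}\,dG(\bar d).
\]
The map $t\mapsto M_{\bar d}(-t)$ is strictly decreasing on $t\ge0$ provided $G$ does not put all its mass at $0$, since $e^{-t\bar d}$ is strictly decreasing in $t$ whenever $\bar d>0$. Consequently $f(\rho)$ is strictly increasing in $\rho$, and therefore $f(\rho)>f(1)$ if and only if $\rho>1$. The mgf is well-defined at these nonpositive arguments by the hypotheses of Theorem 1, and finiteness of the mean is preserved under scaling.

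\textbf{Main obstacle.} The only delicate point is the degenerate case. If $\bar d_U=0$, meaning $G$ is concentrated at $0$, then $f\equiv0$ and the strict equivalence fails. I will therefore state that we assume $\bar d_U>0$, which is implicit in the definition of $\phi$, and use that assumption to obtain strictness. Everything else is routine.
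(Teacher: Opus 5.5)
Your proof is correct and follows the same route as the paper. The paper likewise writes $\rho\bar d_i\phi(\rho)=\frac{\bar d_i}{\bar d_U}\theta\bigl(1-e^{-\rho\bar d_U/\theta}\bigr)$, observes that each $f_i(\rho)$ is increasing in $\rho$, and integrates over $G$; your remark that the integrated inequality is strict only when $G$ puts positive mass on $\bar d>0$ (i.e.\ $\bar d_U>0$) makes explicit a condition the paper leaves implicit.
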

\begin{proof}
From the expression of Lemma 5, replacing $\bar{d}_i$ with $\rho\bar{d}_i$, we get
\[
f_i(\rho) =1 - e^{-\frac{\bar{d}_i}{\bar{d}_U}\theta\left(1-e^{-\rho \bar{d}_U/\theta}\right)}
\]
which is monotonically increasing in $\rho$. It follows that $f_i(\rho) > f_i(1)$ iff $\rho >1$, hence also after integrating
\[
f(\rho) > f(1) \iff \rho > 1
\]
\end{proof}

\textbf{The effects of higher overall search intensity}. Let us look at each of the cases in Table 1 of the main text, one at a time.

\underline{Case 1}: $G$ is the degenerate distribution. Formula (1) becomes $f = 1-e^{-\bar{d}_U\phi}$. $\bar{d}_U\phi$ is readily seen to be increasing in $\bar{d}_U$, hence $f$ is increasing in $\bar{d}_U$. An increase in $\bar{d}_U$ corresponds to a FOSD shift of the degenerate distribution. The Gini of the degenerate is equal to $0$, and hence constant as $\bar{d}_U$ changes.

\underline{Case 2}: $G$ is the gamma distribution. Formula (1) becomes $f = 1-(1+\phi\bar{d}_Uk^{-1})^{-k}$, where $k >0$ is the shape parameter, and $\bar{d}_Uk^{-1}$ the scale parameter. Given that $\bar{d}_U\phi$ is increasing in $\bar{d}_U$, $f$ is readily seen to be increasing in $\bar{d}_U$, when $k$ is held fixed. An increase in $\bar{d}_U$ corresponds to a FOSD shift of the gamma distribution; this follows from the lower incomplete gamma function appearing in the CDF of the gamma distribution being known to be decreasing in its second argument, implying the CDF moves to the right as $\bar{d}_U$ goes up. The Gini of the gamma is equal to only be a function of the scale parameter $k$, and hence constant as $\bar{d}_U$ changes.

\underline{Case 3}: $G$ is the Pareto distribution. Formula (1) becomes $f = 1-\alpha(x_m\phi)^\alpha \Gamma(-\alpha, x_m\phi)$, where $x_m>0$ is the scale parameter corresponding to the minimum value and $\alpha>0$ the shape parameter; for $\alpha>1$ the mean is finite and given by $\bar{d}_U = x_m\left(1-\frac{1}{\alpha}\right)^{-1}$, so we restrict attention to this range; $\Gamma(\cdot,\cdot)$ is the incomplete gamma function. As we decrease $\alpha$, holding $x_m$ fixed, we increase $\bar{d}_U$; it can be seen graphically (Figure 2 of the main text), that for any given $\theta$, $f$ exhibits an inverted-U shape in this case. It can also be seen graphically that a decrease in $\alpha$ corresponds to a shift to the right of the Pareto CDF, and hence a FOSD shift of it. The Gini of the Pareto is equal to $\frac{1}{2\alpha-1}$, and is hence decreasing in $\alpha$ and increasing in $\bar{d}_U$.

\underline{Case 4}: $G$ is the uniform distribution. Formula (1) becomes $f = 1+\frac{e^{-b\phi}-e^{-a\phi}}{\phi(b-a)}$, where $(a,b)$ is the support of $G$ and $\bar{d}_U = \frac{a+b}{2}$. The Gini of the uniform is equal to $\frac{(b-a)}{3(b+a)}$. We distinguish three sub-cases of how the distribution ``scales'' with $\bar{d}_U$; in all cases let us hold $\theta=1$.\\
(i) Suppose $b = 2\bar{d}_U-a$, and $a$ scales with $\bar{d}_U$, e.g. $a=\frac{\bar{d}_U}{2}$. It can be verified graphically that $f$ is increasing in $\bar{d}_U$. As we increase $\bar{d}_U$ with this parameterization, the CDF moves to the right and rotates downwards, hence it corresponds to a FOSD shift. The Gini coefficient is constant, equal to $1/6$, and hence unaffected by changes in $\bar{d}_U$.\\
(ii) Suppose that $b=\bar{d}_U+l/2$, $a=\bar{d}_U-l/2$, for some fixed $l \in (0, 2\bar{d}_U)$. It can be verified graphically that $f$ is increasing in $\bar{d}_U$. As we increase $\bar{d}_U$ with this parameterization, the CDF shifts (perfectly undistorted) to the right, hence this also corresponds to a FOSD shift. The Gini coefficient is equal to $\frac{l}{6\bar{d}_U}$, hence it is decreasing in $\bar{d}_U$.\\
(iii) Suppose finally that $b = 2\bar{d}_U-a$, as in case (i), but now $a$ stays constant. For $a=0$, we can show analytically that $f$ is increasing in $\bar{d}_U$; for `intermediate values of $a \in (1,10)$ it can be verified graphically that $f$ exhibits an inverted-U shape in $\bar{d}_U$; for `large' values of $a \geq 12$, it can be verified graphically that $f$ is decreasing in $\bar{d}_U$. As we increase $\bar{d}_U$ with this parameterization, the CDF rotates downwards, hence it corresponds to a FOSD shift. The Gini coefficient is equal to $\frac{1}{3}\left(1-\frac{a}{\bar{d}_U}\right)$, hence it is increasing in $\bar{d}_U$.

\section{Aggregating vacancies to ``locations''}\label{sec:app-E}
\begin{proposition}In a large market with locations, when applicant search intensities $\{\bar{d}_i\}_{i=1}^\infty$ are drawn i.i.d. from a distribution $G$, with a finite mean $\bar{d}_U$, support in a subset of $[0, \infty)$, and a moment-generating function (mgf) well-defined for non-positive values, and vacancies at locations $\{v_j\}_{j=1}^\infty$ are drawn i.i.d. from a distribution $H$ with support in $\mathbb{N}$ and mean $\bar{v}$, the mean job-finding probability is given by
\[
f = 1- M_{\bar{d}}(-\chi), 
\]
where $M_{\bar{d}}(t) \equiv \mathbb{E}[e^{\bar{d} t}]$ is the mgf of $\bar{d}_i, \forall i$; $\chi \equiv \mathbb{E}_{(v,\tilde{d})}\left[\min\left\{1, \frac{v}{1+\tilde{d}}\right\}\right]$, $\tilde{d} \sim Poisson(\bar{v}\cdot\bar{d}_U/\theta)$.
\end{proposition}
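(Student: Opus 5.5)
The plan follows the chain Lemma~2 $\to$ Lemma~5 $\to$ Theorem~1, with vacancies replaced by locations. The first step is the finite-market formula stated in the text,
\[
f_i = 1-\prod_{j=1}^L\bigl(1-p_i\phi_j(\vec{p}_{-i})\bigr),
\]
where $\phi_j(\vec{p}_{-i}) = \mathbb{E}_{\tilde{d}_i}\bigl[\min\{1, v_j/(1+\tilde{d}_i)\}\bigr]$. I will derive it by repeating the proof of Lemma~2. I condition on the neighborhood $N_i$ (now a set of locations) and on the realized $\{v_j\}$. Indirect degrees at distinct locations are sums over disjoint sets of independent Bernoullis, so they are independent across $j$, each distributed $PB(\vec{p}_{-i})$ as in Corollary~1. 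The expectation therefore factors across locations. Summing over whether $i$ links to $j$, which happens with probability $p_i$, gives the product. Since the $v_j$ are now drawn i.i.d.\ from $H$ independently of the network, I then also average over $v_j$. The factors become identical,
\[
1-p_i\bar{\phi}(\vec{p}_{-i}), \qquad \bar{\phi}(\vec{p}_{-i}) \equiv \mathbb{E}_{(v,\tilde{d}_i)}\bigl[\min\{1,v/(1+\tilde{d}_i)\}\bigr],
\]
so conditionally on $\bar{d}_i$ we get $f_i = 1-(1-p_i\bar\phi)^L$. This is the analogue of the binomial pgf step in Lemma~2.

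Next comes the large-market limit. Writing $p_i = \bar{d}_i/L$, we have $(1-\bar{d}_i\bar\phi/L)^L \to e^{-\bar{d}_i\chi}$, provided $\bar\phi(\vec{p}_{-i}) \to \chi$. For that I argue as in Lemmas~4 and~5. The mean of $PB(\vec{p}_{-i})$ is $\sum_{k\neq i}p_k = \frac{1}{L}\sum_k \bar{d}_k - p_i$. Writing $\frac{1}{L}\sum_k \bar{d}_k = \frac{U}{V}\cdot\frac{V}{L}\cdot\frac{\sum_k \bar{d}_k}{U}$, this tends to $\theta^{-1}\bar{v}\,\bar{d}_U$. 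This is the location analogue of the accounting identity in Lemma~3, with $V/L \to \bar{v}$ (law of large numbers for the i.i.d.\ $v_j$, consistent with the assumed limit of $V/L$). Because $p_k \to 0$, the Poisson binomial converges in distribution to $\mathrm{Poisson}(\bar{v}\bar{d}_U/\theta)$ (Appendix~A). Independence of $v$ and $\tilde{d}$ is preserved in the limit, since $v_j$ is independent of the links.

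The main obstacle is justifying the convergence of the expectation $\bar\phi \to \chi$, not just convergence in distribution. Here I would use that $\min\{1, v/(1+\tilde d)\}$ is bounded in $[0,1]$ and continuous on the discrete support. Weak convergence of the joint law of $(v,\tilde d)$, a product of a fixed $H$ with a converging Poisson binomial, then gives convergence of the expectation by bounded convergence. Unlike Lemma~5, no closed form for $\chi$ is available, which is why the statement leaves $\chi$ as an expectation.

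Finally, as in the proof of Theorem~1, I take the expectation over $\bar{d}_i \sim G$:
\[
f = \mathbb{E}\bigl[1-e^{-\bar{d}\chi}\bigr] = 1-M_{\bar{d}}(-\chi).
\]
This is well defined because $\chi \ge 0$ and the mgf exists on non-positive arguments.
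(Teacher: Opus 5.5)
Your proof is correct. It shares the paper's skeleton (the finite-market product formula, the Poisson limit of indirect degrees with mean $\bar v\,\bar d_U/\theta$, then averaging over $G$), but you handle heterogeneity in location sizes differently.

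\textbf{The paper's route.} It keeps the realized $\{v_j\}$ fixed and takes logs of $1-f_i=\prod_j(1-p_i\phi_j)$. Using the Peano form $\log(1-x)=-x-h(x)x$, it writes $\log(1-f_i)=-\bar d_i\frac{1}{L}\sum_j\phi_j(1+h_j)$. The remainders vanish uniformly as $p_i\to 0$. The constant $\chi$ then emerges as the limit of the empirical average $\frac{1}{L}\sum_j\phi_j$, which a law of large numbers across locations identifies with $\mathbb{E}_{(v,\tilde d)}[\min\{1,v/(1+\tilde d)\}]$.

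\textbf{Your route.} You use the independence of the $v_j$ (from each other and from the links) to average each factor over $H$ before taking limits. This makes all factors identical and collapses the problem to the Lemma~2 form $1-(1-p_i\bar\phi)^L$. You then need only the elementary limit $(1-x/L)^L\to e^{-x}$ and a bounded-convergence step for $\bar\phi\to\chi$. That last step is one the paper passes over quickly.

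\textbf{Trade-off.} Your argument gives the job-finding probability averaged over the draw of location sizes, and it relies on the i.i.d.\ structure of $\{v_j\}$. The paper's log-linearization gives the same limit conditional on a realized sequence of location sizes, almost surely. It would also go through for a deterministic sequence whose empirical distribution converges to $H$. Since the proposition concerns the mean job-finding probability with $v_j$ drawn i.i.d.\ from $H$, both routes establish it.
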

\begin{proof}As with Theorem 1, we start from the small market and find the limit of the large market. With locations, the job-finding probability of applicant $i$ in the small market is
\[
f_i = 1-\prod_{j=1}^L(1-p_i\phi_{j}(p_{-i}))
\]
where $\phi_{j}(p_{-i}) = \mathbb{E}_{\tilde{d}_i}\left[\min\left\{1, \frac{v_j}{1+\tilde{d}_i}\right\}\right]$, and $\tilde{d}_i\sim PB(\vec{p}_{-i})$.

Rearrange and take logs:
\[
\log(1-f_i) = \sum_{j=1}^L \log(1-p_i\phi_j(p_{-i}))
\]
Now, Taylor's theorem says
\[
\log(1-x) = -x -h(x)x
\]
where $h(x)x$ is (the Peano form of) the remainder, for which $\lim_{x\to 0} h(x) = 0$.

We can thus re-write the above as
\begin{align*}
\log(1-f_i) &= -p_i\sum_{j=1}^L \phi_j(p_{-i})\big(1+h_{j}\big)\\
			&= -\bar{d}_i\sum_{j=1}^L \frac{\phi_j(p_{-i})}{L}\big(1+h_{j}\big)
\end{align*}
where $h_{j} \equiv h\big(p_i\phi_j(p_{-i})\big)$.

Notice that since we keep the remainder in the expression, the above is an equality, not an approximation. Now, to go to the large market (Definition 4), we let $p_i \to 0$ and $L \to \infty$; thus the remainders $h_{j}$ go to $0$, so they drop from the sum. Also, we know that the dependence of $\phi_j$ on $p_{-i}$ drops, since $\tilde{d}_i \sim Poisson(\bar{d}_L), \forall i$, where $\bar{d}_L$ is the mean indirect degree at each location, the analog of $\bar{d}_V$ in the analysis at the vacancy level. We thus get that in the limit, 
\[
f_i = 1-e^{-\bar{d}_i\chi}
\]
where $\chi \equiv \lim_{L\to \infty} \sum_{j=1}^L \frac{\phi_j}{L}$, assuming this limit exists. $\phi_j$ are given by
\[
\phi_j \equiv \mathbb{E}_{\tilde{d}}\left[\min\left\{1, \frac{v_j}{1+\tilde{d}}\right\}\right]
\]
where $\tilde{d} \sim Poisson(\bar{d}_L)$; analogously to \Cref{lemma-acc-id}, consistency in degrees in the bipartite network implies that $\bar{d}_L=\bar{d}_U\cdot U/L$.

Now, assuming $v_j$ are drawn from a distribution $H$ with mean $\bar{v}$, we have  
\[
\chi = \mathbb{E}_{(v, \tilde{d})}\left[\min\left\{1, \frac{v}{1+\tilde{d}}\right\}\right]
\]
And the normalized number of unemployed is given by
\begin{align*}
U/L &= U/V\cdot V/L\\
	&= \theta^{-1}\cdot \bar{v}
\end{align*}
When search intensities $\bar{d}_i$ are drawn from a distribution $G$ with mean $\bar{d}_U$, the mean job-finding probability $f = \mathbb{E}[f_i]$ becomes 
\[
f = 1-M_{\bar{d}}(-\chi)
\]
where $\chi = \mathbb{E}_{(v, \tilde{d})}\left[\min\left\{1, \frac{v}{1+\tilde{d}}\right\}\right]$.
\end{proof}

\begin{proposition}Suppose $G'$ is a mean-preserving spread of $G$, where $G, G'$ are applicant search intensity distributions satisfying the conditions of Proposition 7. Then, for the corresponding mean job-finding probabilities in the large market with locations, it holds that
\[
f' < f
\]
\end{proposition}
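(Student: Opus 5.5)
The plan is to mirror the proof of Theorem 2 (the applicant-side dispersion result without locations), using the representation obtained in the proof of Proposition 7. That proof shows that, conditional on a realized search intensity $\bar{d}_i$, applicant $i$'s job-finding probability in the large market with locations is
\[
f_i = 1-e^{-\bar{d}_i\chi}, \qquad \chi = \mathbb{E}_{(v,\tilde{d})}\left[\min\left\{1, \frac{v}{1+\tilde{d}}\right\}\right],
\]
with $\tilde{d} \sim Poisson(\bar{v}\cdot\bar{d}_U/\theta)$. The mean job-finding probability is then $f=\int \big(1-e^{-\bar{d}\chi}\big)\,dG(\bar{d})$.

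\textbf{Step 1: $\chi$ is unchanged by the spread.} A mean-preserving spread keeps $\bar{d}_U$ fixed. The objects $\theta$, $\bar{v}$ and $H$ are not affected by the change from $G$ to $G'$. Hence the Poisson parameter $\bar{v}\bar{d}_U/\theta$ and the constant $\chi$ are identical under $G$ and $G'$. This is the point that lets the argument reduce to a comparison of expectations of one fixed function.

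\textbf{Step 2: strict concavity of the integrand.} For fixed $\chi$, the map $\bar{d}\mapsto 1-e^{-\bar{d}\chi}$ has second derivative $-\chi^2e^{-\bar{d}\chi}$. This is strictly negative provided $\chi>0$. To see that $\chi>0$, note that $v\geq 1$ with positive probability (otherwise there are no vacancies and $\theta=0$), and $\tilde{d}=0$ with positive Poisson probability. On that event the minimum equals $1$, so $\chi>0$. I would state this as a short remark, assuming $\theta>0$ so that $\bar{v}>0$.

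\textbf{Step 3: invoke SOSD.} By Step 2 the integrand is strictly concave in $\bar{d}$. Applying the standard characterization of mean-preserving spreads (Mas-Colell et al., Proposition 6.D.2, as in the proof of Theorem 2) then gives
\[
\int \big(1-e^{-\bar{d}\chi}\big)\,dG'(\bar{d}) < \int \big(1-e^{-\bar{d}\chi}\big)\,dG(\bar{d}),
\]
that is, $f'<f$. Strictness requires that $G'$ be a nontrivial spread, $G'\neq G$, as implicitly assumed in Theorem 2.

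The only genuine obstacle is Step 1, which also requires checking that $\chi$ truly depends on $G$ only through its mean. This holds because the large-market limit replaces the indirect degree with a Poisson whose mean is the aggregate $\bar{d}_L=\bar{d}_U U/L$; this was already established in the proof of Proposition 7. The rest is routine.
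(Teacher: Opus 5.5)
Your proposal is correct and matches the paper's proof: the paper also argues that a mean-preserving spread leaves $\bar{d}_U$, hence the Poisson indirect-degree distribution and (with $H$ fixed) $\chi$, unchanged, and then concludes as in Theorem 2 from the concavity of $1-e^{-\bar{d}\chi}$ in $\bar{d}$. The paper leaves implicit two details that the strict inequality needs and that you check explicitly: $\chi>0$, and $G'\neq G$.
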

\begin{proof}A MPS in $\bar{d}$ does not affect the mean $\bar{d}_U$, and hence it does not affect the distribution of indirect degrees $\tilde{d}$. Given that by assumption $H$ is not changed, it follows that $\chi$ stays the same. The result follows entirely analogously to the proof of Theorem 2, coming from the function $1-e^{-\bar{d}\chi}$ being increasing and concave in $\bar{d}$.
\end{proof}

\begin{proposition}Suppose $H'$ is a mean-preserving spread of $H$, where $H, H'$ are vacancy number distributions satisfying the conditions of Proposition 7. Then, for the corresponding mean job-finding probabilities in the large market with locations, it holds that
\[
f' \leq f
\]
\end{proposition}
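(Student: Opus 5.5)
The plan is to reduce the statement to a concavity argument on $\chi$, mirroring the proof of Theorem 2 and of the preceding proposition. By Proposition \ref{thm-1_1}, $f = 1 - M_{\bar{d}}(-\chi) = 1 - \mathbb{E}_{\bar{d}}[e^{-\bar{d}\chi}]$. Since $\bar{d}\ge 0$ a.s., this expression is nondecreasing in $\chi$ for fixed $G$. So it suffices to show that replacing $H$ by a mean-preserving spread $H'$ weakly lowers $\chi$, i.e.\ $\chi' \le \chi$.

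First, I will check that the distribution of $\tilde{d}$ is unchanged. By Proposition \ref{thm-1_1}, $\tilde{d}\sim Poisson(\bar{v}\,\bar{d}_U/\theta)$. A mean-preserving spread leaves $\bar{v}$ unchanged, and $G$, hence $\bar{d}_U$, is unchanged, with $\theta$ held fixed. The parameter of the Poisson is therefore the same under $H$ and $H'$. Moreover, $v$ and $\tilde{d}$ are independent, because $v_j$ is drawn i.i.d.\ from $H$ independently of the applicants' linking. This lets me write
\[
\chi = \mathbb{E}_{\tilde{d}}\left[\,\mathbb{E}_{v\sim H}\left[\min\left\{1,\tfrac{v}{1+\tilde{d}}\right\}\right]\right].
\]

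Second, I will fix any realization $k$ of $\tilde{d}$ and consider $\psi_k(v) = \min\{1, v/(1+k)\}$. This is the minimum of two affine functions of $v$, so it is concave in $v$ on $[0,\infty)$, and nondecreasing. By the characterization of mean-preserving spreads (the same result, Mas-Colell et al.\ Proposition 6.D.2, used in Theorem 2, in its weak form for weakly concave functions), $\int \psi_k\,dH' \le \int \psi_k\,dH$ for every $k$. Averaging over $\tilde{d}$, whose law is the same under both, gives $\chi'\le\chi$. Monotonicity of $f$ in $\chi$ then yields $f'\le f$.

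The only delicate point is the inequality being weak rather than strict. This is because $\psi_k$ is only piecewise linear, so a spread confined to a region where $\psi_k$ is linear, for all $k$ with positive probability, leaves $\chi$ unchanged. Hence the statement claims only $\le$. I will also need to note that $H$ is supported on $\mathbb{N}$, so the integer-valued spread still falls under the concave-function characterization. That characterization holds for any distributions on the real line, so this causes no difficulty.
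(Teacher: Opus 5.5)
Your proof is correct and is the paper's argument: condition on $\tilde{d}$, use that $\min\{1, v/(1+\tilde{d})\}$ is weakly concave in $v$, so the spread weakly lowers each conditional expectation and hence $\chi$, then conclude because $f = 1 - M_{\bar{d}}(-\chi)$ is increasing in $\chi$. Your explicit check that the law of $\tilde{d}$ is unchanged (the spread preserves $\bar{v}$) and independent of $v$ fills in a step the paper leaves implicit; your closing explanation of why the inequality is only weak is not needed and is in fact vacuous here, since for $\bar{d}_U>0$ the Poisson puts positive weight on every $k$, so $\mathbb{E}_{\tilde{d}}[\psi_{\tilde{d}}(v)]$ has a kink at every integer $v\ge 1$ and any nontrivial integer-valued spread strictly lowers $\chi$.
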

\begin{proof} We have
\begin{align*}
\chi &=	\mathbb{E}_{\tilde{d}}\left[\mathbb{E}_{v}\left[\min\left\{1, \frac{v}{1+\tilde{d}}\right\} \bigg| \tilde{d}\ \right]\right]
\end{align*}
Conditional on $\tilde{d}$, the function $\min\left\{1, \frac{v}{1+\tilde{d}}\right\}$ is a weakly concave function of $v$: it is a ray from $(0,0)$ to $(1+\tilde{d},1)$ and then stays constant at $1$ for any $v \geq 1+\tilde{d}$. Denoting $v' \sim H'$ and $v \sim H$, it follows that
\begin{align*}
\mathbb{E}_{v'}\left[\min\left\{1, \frac{v'}{1+\tilde{d}}\right\} \bigg| \tilde{d}\ \right] &\leq \mathbb{E}_{v}\left[\min\left\{1, \frac{v}{1+\tilde{d}}\right\} \bigg| \tilde{d}\ \right] \Rightarrow\\
\mathbb{E}_{\tilde{d}}\left[\mathbb{E}_{v'}\left[\min\left\{1, \frac{v'}{1+\tilde{d}}\right\} \bigg| \tilde{d}\ \right]\right] &\leq \mathbb{E}_{\tilde{d}}\left[\mathbb{E}_{v}\left[\min\left\{1, \frac{v}{1+\tilde{d}}\right\} \bigg| \tilde{d}\ \right]\right]
\end{align*}
Denoting the LHS by $\chi'$ and the RHS by $\chi$, we have $\chi' \leq \chi$, and hence
\[
f' \leq f
\]
since $f = 1-M_{\bar{d}}(-\chi)$ is an increasing function in $\chi$.
\end{proof}

\section{Matching with heterogeneous job advertising}\label{sec:app-F}
\begin{proposition}In a large market, when job-advertising intensities $\{\bar{d}_j\}_{j=1}^\infty$ are drawn i.i.d. from a distribution $\hat{G}$, with a finite mean $\bar{d}_V$, support in a subset of $[0, \infty)$, and a moment-generating function (mgf) well-defined for non-positive values, the mean vacancy-filling probability is given by
\[
q = \frac{1-e^{-\theta(1-M_{\bar{d}}(-1))}}{\theta}, \tag{2}
\]
where $M_{\bar{d}}(t) \equiv \mathbb{E}[e^{\bar{d} t}]$ is the mgf of $\bar{d}_j, \ \forall j$.
\end{proposition}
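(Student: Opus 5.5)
We work on the vacancy side and mirror the construction behind Theorem 1, with the roles of the two sides reversed.

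\textbf{Step 1: degrees and indirect degrees.} Fix a vacancy $j$ with advertising intensity $p_j$, so that $\bar{d}_j = p_j U$.
\begin{itemize}
\item Vacancy degrees are independent, with $d_j \sim Bin(U,p_j)$; in the large market they become Poisson with mean $\bar{d}_j$.
\item Each applicant's degree is $PB((p_1,\dots,p_V))$. Its mean is $\sum_j p_j = \theta\,\frac{\sum_j \bar{d}_j}{V} \to \theta \bar{d}_V = \bar{d}_U$, so applicant degrees become Poisson with mean $\bar{d}_U$.
\item Conditional on applicant $i$ linking to $j$, the number of \emph{other} vacancies $i$ links to is $PB(\vec{p}_{-j})$. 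It is independent of the links at $j$, and in the limit it is Poisson with mean $\bar{d}_U$.
\end{itemize}

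\textbf{Step 2: the vacancy-filling probability conditional on $\bar{d}_j$.} Vacancy $j$ is filled if it receives at least one application, offers the job to its best applicant, and that applicant accepts. By the Remark, with offers chosen uniformly at random, the applicant accepts with probability $\frac{1}{1+o}$, where $o$ is the number of \emph{other} offers they hold. This step is the main obstacle: we must identify the distribution of $o$. Each other vacancy $k$ the applicant applied to makes them an offer with probability $\pi \equiv \mathbb{E}\big[\frac{1}{1+\tilde{d}_k}\big]$, where $\tilde{d}_k$ is $k$'s indirect degree at the applicant. Two facts, both holding in the large market, give what we need:
\begin{itemize}
\item Indirect degrees at different vacancies are asymptotically independent; this is the same locally tree-like argument used implicitly in Lemma 5.
\item Under heterogeneity, the indirect degree at a vacancy reached through a link is \emph{size-biased}. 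Conditional on $\bar{d}_k$ it is Poisson$(\bar{d}_k)$, and the vacancy itself is sampled with probability proportional to $\bar{d}_k$.
\end{itemize}
Hence
\[
\pi = \mathbb{E}_{\hat G}\Big[\frac{\bar{d}}{\bar{d}_V}\cdot\frac{1-e^{-\bar{d}}}{\bar{d}}\Big] = \frac{1-M_{\bar{d}}(-1)}{\bar{d}_V}.
\]
Thinning a Poisson$(\bar{d}_U)$ number of other applications by $\pi$ gives $o \sim$ Poisson$(\lambda)$ with $\lambda = \bar{d}_U\pi = \theta(1-M_{\bar{d}}(-1))$. Repeating the computation of Lemma 5 then yields $\mathbb{E}\big[\frac{1}{1+o}\big] = \frac{1-e^{-\lambda}}{\lambda}$. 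Conditional on $\bar{d}_j$,
\[
q_j = \big(1-e^{-\bar{d}_j}\big)\,\frac{1-e^{-\lambda}}{\lambda}.
\]

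\textbf{Step 3: integrate over $\hat G$.} Averaging $q_j$ over $\hat G$ gives
\[
q = \big(1-M_{\bar{d}}(-1)\big)\,\frac{1-e^{-\lambda}}{\lambda}.
\]
Substituting $\lambda = \theta(1-M_{\bar{d}}(-1))$ cancels the first factor, leaving $q = \frac{1-e^{-\theta(1-M_{\bar{d}}(-1))}}{\theta}$, which is (2).

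As a cross-check that avoids the size-biasing subtlety, count matches from the applicant side. An applicant receives Poisson$(\lambda)$ offers and therefore matches with probability $1-e^{-\lambda}$. Matches per vacancy are then $(1-e^{-\lambda})/\theta$, by the same accounting as Lemma 3. In the write-up we would present this counting argument as the main derivation, because it needs only the offer-count distribution. The delicate step remains justifying $\pi$ and the asymptotic independence of offers.
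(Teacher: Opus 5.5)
Your argument is correct. Steps 1--3 are in substance the paper's proof: the paper also writes $q_j=(1-e^{-\bar d_j})\,\psi$ with $\psi=\mathbb{E}[1/(1+\tilde o)]$ and $\tilde o$ asymptotically Poisson, then averages over $\hat G$.

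The one difference is how you obtain the Poisson mean. You thin a Poisson$(\bar d_U)$ link count by a size-biased offer probability $\pi$. The paper instead writes the offer count as $PB(\vec z)$ with $z_k=p_k\,\mathbb{E}[1/(1+\tilde d_k)]=(1-(1-p_k)^U)/U$ and sums, giving $\sum_k z_k\to\theta\,\mathbb{E}[1-e^{-\bar d}]$. The two agree because $\bar d_U\pi=\sum_k p_k\phi_k$, so in the paper's form the size-biasing is automatic.

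The step you call delicate is in fact exact. The offer from $k$ to a given applicant depends only on column $k$ of the link matrix and on $k$'s own lottery. So offers to that applicant are exactly independent Bernoulli$(z_k)$ across $k$, and independent of everything at $j$. No locally tree-like argument is needed, and Lemma 5 does not use one either, since the independence in Corollary 1 is exact.

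Your counting cross-check is a genuinely different and cleaner route. Each vacancy with an applicant makes exactly one offer, and each applicant with an offer accepts exactly one. Hence filled vacancies coincide with applicants holding at least one offer, and $\sum_j q_j=U\bigl(1-\prod_k(1-z_k)\bigr)$ holds exactly in every finite market. The average filling probability is therefore $\theta^{-1}\bigl(1-\prod_k(1-z_k)\bigr)$, and (2) follows from $z_k\le 1/U$ and the law of large numbers.

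This route bypasses $\psi$ entirely. It also shows that $f=\theta q$ holds well beyond the Erd\H{o}s--R\'enyi case of the paper's consistency corollary. What it gives up is the conditional expression $q_j=(1-e^{-\bar d_j})\psi$. That expression shows how a vacancy's own advertising intensity affects its chance of filling, and the paper's route delivers it along the way.
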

\begin{proof}As with Theorem 1 and Proposition 7, we start with the small market and take the limit. In the case of search heterogeneity on the vacancy side, all applicants are symmetric. An applicant's degree distribution is $PB(\vec{p})$ with mean $\bar{d}_U \equiv \sum_j p_j$. Now, as long as the vacancy has at least one link, it makes an offer for sure; in fact it will make just this one offer. Hence the probability that a vacancy gets filled is given by
\[
q_j = \big(1-(1-p_j)^U\big)\psi(p_{-j})
\]
where $\psi(p_{-j})$ is the probability the applicant who receives the offer accepts it. Since applicants are all symmetric, $\psi(p_{-j})$ does not depend on which applicant vacancy $j$ is offered to, though it depends on the search intensities of the other vacancies. It holds that
\[
\psi(p_{-j}) = \mathbb{E}\left[\frac{1}{1+\tilde{o}}\right]
\]
where $\tilde{o}$ is the (random) number of \textit{extra} offers received by the candidate to whom vacancy $j$ is offered.

The number of offers $o$ an applicant receives follows $PB(\vec{z})$, and the number of \textit{extra} offers $\tilde{o}$ received by someone vacancy $j$ connects to, follows a $PB(\vec{z}_{-j})$ distribution, where
\begin{align*}
z_k &\equiv Pr\{\text{receive offer from vacancy k}\} = p_k\mathbb{E}\left[\frac{1}{1+\tilde{d}_k}\right]
\end{align*}
and $\tilde{d}_k \sim Bin(U-1, p_k), \ k=1..V$. We know that $\mathbb{E}\left[\frac{1}{1+\tilde{d}_k}\right] = \frac{1-(1-p_k)^U}{p_kU}$. 

This concludes the small market treatment when there is heterogeneity on firm recruitment intensity. 

Notice that $\mathbb{E}\left[\frac{1}{1+\tilde{d}_j}\right]$ is the $Pr\{\text{receive an offer from j} \ | \ \text{linking to j}\}$. This probability is the one we had been denoting by $\phi$ and was constant across firms in our previous analysis (though it varied by applicant), now varies by firm. In the large market, defined entirely analogously to Definition 3, we know that $\psi(p_{-j}) \to \psi$, where
\[
\psi = \mathbb{E}\left[\frac{1}{1+\tilde{o}}\right]
\]
and $\tilde{o} \sim Poisson(\sum_j z_j)$. We still need to compute $\sum_j z_j$. It holds that
\begin{align*}
\sum_j z_j &= \sum_j \frac{1-\left(1-\frac{\bar{d}_j}{U}\right)^U}{U}\\
			&= \frac{V}{U}\sum_j \frac{1-\left(1-\frac{\bar{d}_j}{U}\right)^U}{V}\\
			&\to \theta \sum_j \frac{1-e^{-\bar{d}_j}}{V} 
\end{align*}
where the last line is the limit when $U,V\to \infty$. Analogously to the derivation for $\phi$ in the proof of Lemma 5, it follows that
\[
\psi = \frac{1-e^{-\sum_j z_j}}{\sum_j z_j}
\]

We are done. The vacancy-filling probability in the large market, conditional on a level $\bar{d}_j$ of vacancy-search/advertising intensity is
\[
q_j = (1-e^{-\bar{d}_j})\psi
\]
Assuming $\bar{d}_j$'s are drawn from a distribution $\hat{G}$, we get from the Law of Large numbers that $\sum_j \frac{1-e^{-\bar{d}_j}}{V} \to \mathbb{E}[1-e^{-\bar{d}_j}]$, yielding
\begin{align*}
q &= \left(1 - M_{\bar{d}_j}(-1)\right)\frac{1-e^{-\theta(1-M_{\bar{d}_j}(-1))}}{\theta (1-M_{\bar{d}_j}(-1))}\\
  &= \frac{1-e^{-\theta(1-M_{\bar{d}_j}(-1))}}{\theta}
\end{align*}
We drop the $j$ for notational convenience.
\end{proof}

\begin{corollary}When $\hat{G}$ is degenerate at $\bar{d}_V$ and $G$ is degenerate at $\bar{d}_U=\theta\bar{d}_V$, the job-finding probability of (1) and the vacancy-filling probability of (2) satisfy
\[
f=\theta q
\]
\end{corollary}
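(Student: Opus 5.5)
Both sides are in closed form, so the plan is to compute each directly and compare. For (1), we use \Cref{thm-1} with $G$ degenerate at $\bar{d}_U=\theta\bar{d}_V$. The mgf is then $M_{\bar{d}}(t)=e^{\bar{d}_U t}$. Since $\bar{d}_U/\theta=\bar{d}_V$, we have $\phi=(1-e^{-\bar{d}_V})/\bar{d}_V$. This gives
\[
f=1-e^{-\bar{d}_U\phi}=1-\exp\!\left(-\theta\bar{d}_V\cdot\frac{1-e^{-\bar{d}_V}}{\bar{d}_V}\right)=1-e^{-\theta(1-e^{-\bar{d}_V})}.
\]

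For (2), we use \Cref{thm-1_2} with $\hat{G}$ degenerate at $\bar{d}_V$. Then $M_{\bar{d}}(-1)=e^{-\bar{d}_V}$, so
\[
q=\frac{1-e^{-\theta(1-e^{-\bar{d}_V})}}{\theta}.
\]
Multiplying by $\theta$ gives exactly the expression for $f$ above, which establishes $f=\theta q$.

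The only point needing care is consistency of the two environments. We must check that both specifications describe the same Erdős–Rényi network, with a common link probability $p$ satisfying $pV\to\bar{d}_U$ and $pU\to\bar{d}_V$. This holds precisely because $\bar{d}_U=\theta\bar{d}_V$, which is the accounting identity of \Cref{lemma-acc-id}. With this verified, the two formulas are evaluated on the same limit object, and the identity $f=\theta q$ is just the statement that total matches $Uf=Vq$. Beyond this check, I expect no real obstacle; the rest is direct substitution.
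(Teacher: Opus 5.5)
Your proposal is correct and follows essentially the same route as the paper: plug the degenerate mgfs into (1) and (2), use $\bar{d}_V=\bar{d}_U/\theta$ so that $\bar{d}_U\phi=\theta(1-e^{-\bar{d}_V})$, and observe that $\theta q$ coincides with $f$. The paper only runs the substitution in the other direction, rewriting $q$ as $(1-e^{-\bar{d}_U\phi})/\theta$. Your extra consistency check on the common Erd\H{o}s--R\'enyi link probability is a sensible remark, but the algebra does not need it.
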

\begin{proof} When $\hat{G}$ is degenerate, formula (2) of Proposition 10 becomes $q = \frac{1-e^{-\theta(1-e^{-\bar{d}_V})}}{\theta}$; using the fact that $\bar{d}_V=\bar{d}_U/\theta$, and that $\phi=\frac{1-e^{-\bar{d}_V}}{\bar{d}_V}$ this simplifies to 
\begin{align*}
q &= \frac{1-e^{-\theta(1-e^{-\bar{d}_V})}}{\theta}\\
  &= \frac{1-e^{-\bar{d}_U\frac{1-e^{-\bar{d}_V}}{\bar{d}_V}}}{\theta}\\
  &= \frac{1-e^{-\bar{d}_U\phi}}{\theta}
\end{align*}
The numerator is our expression for $f$, when $\bar{d}_i$ are drawn from the degenerate distribution (see the first special case in \Cref{subsec:spec-cases}).
\end{proof}

\begin{proposition}Suppose $\hat{G}'$ is a mean-preserving spread of $\hat{G}$, where $\hat{G}$ and $\hat{G}'$ are job-advertising intensity distributions satisfying the conditions of Proposition 10. Then, for the corresponding mean vacancy-filling probabilities in the large market, it holds that
\[
q' < q
\]
\end{proposition}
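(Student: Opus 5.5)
The plan is to reduce the statement to the concavity argument of Theorem 2, using formula (2) of Proposition 10. Write $\mu(\hat{G}) \equiv 1 - M_{\bar{d}}(-1) = \int \big(1-e^{-\bar{d}}\big)\, d\hat{G}(\bar{d})$. Then
\[
q = \frac{1-e^{-\theta\mu}}{\theta},
\]
so $q$ depends on $\hat{G}$ only through the scalar $\mu$. Market tightness $\theta$ is a primitive here and does not depend on $\hat{G}$. The mean $\bar{d}_V$, and hence $\bar{d}_U=\theta\bar{d}_V$ by \Cref{lemma-acc-id}, is preserved under a mean-preserving spread, so nothing else in the expression moves.

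First, I will check that $q$ is strictly increasing in $\mu$. Its derivative is $\partial q/\partial \mu = e^{-\theta\mu} > 0$ for any finite $\theta>0$. It therefore suffices to show $\mu(\hat{G}') < \mu(\hat{G})$.

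Second, I will observe that the integrand $g(\bar{d}) = 1-e^{-\bar{d}}$ is strictly increasing and strictly concave on $[0,\infty)$, since $g'(\bar{d}) = e^{-\bar{d}}>0$ and $g''(\bar{d}) = -e^{-\bar{d}}<0$. By the same result invoked in the proof of Theorem 2 \citep[][proposition 6.D.2]{M-CWG1995}, a mean-preserving spread lowers the expectation of a concave function, giving
\[
\int g\, d\hat{G}' \leq \int g\, d\hat{G}.
\]
Interpretively, $1-e^{-\bar{d}_j}$ is the probability that vacancy $j$ receives at least one application, and this per-vacancy contact probability is concave in advertising intensity. Combining the two steps yields $q'<q$.

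The only delicate point is the strictness of the inequality. Strictness holds because $g$ is strictly concave and the spread is non-degenerate, i.e.\ $\hat{G}'\neq\hat{G}$. Writing $\bar{d}' \overset{d}{=} \bar{d}+\varepsilon$ with $\mathbb{E}[\varepsilon\mid\bar{d}]=0$ and $\varepsilon\neq 0$ with positive probability, conditional Jensen is strict on that event. I will state this explicitly, mirroring the strict inequality already asserted in Theorem 2. No other obstacle arises, because unlike the location case there is no $\min$ kink, so concavity is strict everywhere.
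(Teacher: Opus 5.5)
Your proposal is correct and follows the paper's argument: $1-e^{-\bar{d}}$ is increasing and concave, so a mean-preserving spread lowers $1-M_{\bar{d}}(-1)$ and hence $q$. Your explicit check that $\partial q/\partial\mu = e^{-\theta\mu}>0$ is a useful step the paper leaves implicit, since the acceptance probability $\psi$ moves in the opposite direction to $\mu$. Your argument that the inequality is strict for a non-degenerate spread is likewise left implicit in the paper.
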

\begin{proof}The proof is entirely analogous to that of Theorem 2: the function $1-e^{-\bar{d}}$ is increasing and concave in $\bar{d}$, thus a MPS translates in a lower $q$.
\end{proof}
\end{appendices}

\bibliographystyle{aea}
\bibliography{NetSMBib}

\end{document}